\theoremstyle{thmstyleone}%
\newtheorem{theorem}{Theorem}%  meant for continuous numbers
\newtheorem{lemma}{Lemma}% 
\newtheorem{corollary}{Corollary}
\newtheorem{proposition}{Proposition}% to get separate numbers for theorem and proposition etc.
\theoremstyle{thmstyletwo}%
\newtheorem{remark}{Remark}%
\theoremstyle{thmstylethree}%
\newtheorem{definition}{Definition}%
\DeclareMathOperator{\NN}{\mathbb{N}}
\DeclareMathOperator{\CC}{\mathbb{C}}
\DeclareMathOperator{\RR}{\mathbb{R}}
\DeclareMathOperator{\QQ}{\mathbb{Q}}
\DeclareMathOperator{\ZZ}{\mathbb{Z}}
\DeclareMathOperator{\GC}{\mathcal{G}}
\DeclareMathOperator{\RC}{\mathcal{R}}
\DeclareMathOperator{\BC}{\mathcal{B}}
\DeclareMathOperator{\VC}{\mathcal{V}}
\DeclareMathOperator{\WC}{\mathcal{W}}
\DeclareMathOperator{\TC}{\mathcal{T}}
\DeclareMathOperator{\EC}{\mathcal{E}}
\DeclareMathOperator{\PC}{\mathcal{P}}
\DeclareMathOperator{\DC}{\mathcal{D}}
\DeclareMathOperator{\SC}{\mathcal{S}}
\DeclareMathOperator{\LC}{\mathcal{L}}
\DeclareMathOperator{\HC}{\mathcal{H}}
\DeclareMathOperator{\MC}{\mathcal{M}}
\DeclareMathOperator{\AC}{\mathcal{A}}
\DeclareMathOperator{\FC}{\mathcal{F}}
\DeclareMathOperator{\JC}{\mathcal{J}}
\DeclareMathOperator{\IC}{\mathcal{I}}
\DeclareMathOperator{\CCal}{\mathcal{C}}
\DeclareMathOperator{\QC}{\mathcal{Q}}
\DeclareMathOperator{\BS}{\mathscr{B}}
\DeclareMathOperator{\DS}{\mathscr{D}}
\DeclareMathOperator{\HS}{\mathscr{H}}
\DeclareMathOperator{\RS}{\mathscr{R}}
\DeclareMathOperator{\PS}{\mathscr{P}}
\DeclareMathOperator{\QS}{\mathscr{Q}}
\DeclareMathOperator{\TS}{\mathscr{T}}
\DeclareMathOperator{\fG}{\mathfrak{f}}
\DeclareMathOperator{\gG}{\mathfrak{g}}
\DeclareMathOperator{\hG}{\mathfrak{h}}
\DeclareMathOperator{\rank}{rank}
\DeclareMathOperator{\cone}{cone}
\DeclareMathOperator{\affh}{aff.\!hull}
\DeclareMathOperator{\linh}{span}
\DeclareMathOperator{\inth}{\Lambda}
\DeclareMathOperator{\vertex}{vert}
\DeclareMathOperator{\pvertex}{pvert}
\DeclareMathOperator{\base}{Bases}
\DeclareMathOperator{\lcm}{lcm}
\DeclareMathOperator{\poly}{poly}
\DeclareMathOperator{\relint}{rel.\!int}
\DeclareMathOperator{\inter}{int}
\DeclareMathOperator{\diag}{diag}
\DeclareMathOperator{\tcone}{tcone}
\DeclareMathOperator{\fcone}{fcone}
\DeclareMathOperator{\lmod}{\text{\it\quad modulo polyhedra with lines}}
\DeclareMathOperator{\ldcmod}{\text{\it\quad modulo lower-dimensional rational cones}}
\DeclareMathOperator{\toddp}{td}
\DeclareMathOperator{\denom}{den}
\DeclareMathOperator{\chdenom}{ch.\!den}
\DeclareMathOperator{\BUnit}{\mathbf 1}
\DeclareMathOperator{\BZero}{\mathbf 0}
\DeclareMathOperator{\xB}{\mathbf{x}}
\DeclareMathOperator{\zB}{\mathbf{z}}
\DeclareMathOperator{\jB}{\mathbf{j}}
\DeclareMathOperator{\SNF}{\text{\rm SNF}}
\newcommand*{\intint}[2][1]{\{#1, \dots, #2\}}
\newcommand\restr[2]{{% we make the whole thing an ordinary symbol
  \left.\kern-\nulldelimiterspace % automatically resize the bar with \right
  #1 % the function
  \vphantom{\big|} % pretend it's a little taller at normal size
  \right|_{#2} % this is the delimiter
  }}
\DeclareMathOperator{\QEnum}{\EC_{\PC}}
\DeclareMathOperator{\ZEnum}{\overline{\QEnum}}
\begin{document}

\title[Faster Representation for Counting Integer Points \ldots]{A New and Faster Representation for Counting Integer Points in Parametric Polyhedra
}
%
%\titlerunning{Abbreviated paper title}
% If the paper title is too long for the running head, you can set
% an abbreviated paper title here
%
\author*[1,2]{\fnm{Dmitry V.} \sur{Gribanov}}\email{dimitry.gribanov@gmail.com}

\author[2]{\fnm{Dmitry S.} \sur{Malyshev}}\email{dsmalyshev@rambler.ru}

\author[3]{\fnm{Panos M.} \sur{Pardalos}}\email{panos.pardalos@gmail.com}

\author[4]{\fnm{Nikolai Yu.} \sur{Zolotykh}}\email{nikolai.zolotykh@itmm.unn.ru}

\affil*[1]{\orgdiv{Laboratory of Discrete and Combinatorial Optimization}, \orgname{Moscow Institute of Physics and Technology}, \orgaddress{\street{Institutsky lane 9}, \city{Dolgoprudny, Moscow region}, \postcode{141700}, \country{Russian Federation}}}

\affil[2]{\orgdiv{Laboratory of Algorithms and Technologies for Network Analysis}, \orgname{HSE University}, \orgaddress{\street{136 Rodionova Ulitsa}, \city{Nizhny Novgorod}, \postcode{603093}, \country{Russian Federation}}}

\affil[3]{\orgdiv{Department of Industrial and Systems Engineering}, \orgname{University of Florida}, \orgaddress{\street{401 Weil Hall}, \city{Gainesville}, \postcode{116595}, \state{Florida}, \country{USA}}}

\affil[4]{\orgname{Lobachevsky State University of Nizhny Novgorod}, \orgaddress{\street{23 Gagarina Avenue}, \city{Nizhny Novgorod}, \postcode{603950}, \country{Russian Federation}}}

\abstract{
In this paper, we consider the counting function $\QEnum(y) = |\PC_{y} \cap \ZZ^{n_x}|$ for a parametric polyhedron $\PC_{y} = \{ x \in \RR^{n_x} \colon A x \leq b + B y\}$, where $y \in \RR^{n_y}$. We give a new representation of $\QEnum(y)$, called a \emph{piece-wise step-polynomial with periodic coefficients}, which is a generalization of piece-wise step-polynomials and integer/rational Ehrhart's quasi-polynomials. It gives the fastest way to calculate $\QEnum(y)$ in certain scenarios.

\medskip

\textbf{The most important cases are the following:}

1) We show that, for the parametric polyhedron $\PC_y$ defined by a standard-form system $A x = y,\, x \geq 0$ with a fixed number of equalities, the function $\QEnum(y)$ can be represented by a polynomial-time computable function. In turn, such a representation of $\QEnum(y)$ can be constructed by an $\poly\bigl(n, \|A\|_{\infty}\bigr)$-time algorithm;
\medskip

2) Assuming again that the number of equalities is fixed, we show that integer/rational Ehrhart's quasi-polynomials of a polytope can be computed by FPT-algorithms, parameterized by sub-determinants of $A$ or its elements; 
\medskip

3) Our representation of $\QEnum$ is more efficient than other known approaches, if $A$ has bounded elements, especially if it is sparse in addition;
\medskip

Additionally, we provide a discussion about possible applications in the area of compiler optimization. In some “natural” assumptions on a program code, our approach has the fastest complexity bounds.
}%abstract

\maketitle              % typeset the header of the contribution

\keywords{Integer Linear Programming, Parametric Integer Programming, Short Rational Generating Function, Bounded Sub-Determinants, Multidimensional Knapsack Problem, Subset-Sum Problem, Counting Problem}

% \tableofcontents

\section{Introduction}\label{intro_sec}

In our paper, we discuss different existing and new approaches for the problem to calculate the number of points with integer coordinates in a polyhedron, which is defined by a system of linear inequalities that additionally depends on a vector of parametric variables. Formal definitions of considered problems will be presented in the next Subsection. As far as the authors know, from the application point of view, this framework is used in the areas of creating intelligent systems of analysis, profiling, control and optimization of program code. In their pioneer works, Loechner \& Wilde \cite{VerticesOfParametricPolyhedra} and Clauss \& Loechner \cite{Clauss_EhrhartApp,Parametric_Clauss} present applications to automatic parallelization, estimation of a nested loop execution time, estimation of maximum parallelism, etc. We would also like to mention recent progress on the cache miss calculation, see Bao et al. \cite{AnalyticalСache2017}, Gysi et al. \cite{FastAssociativeCache2019}, Shah et al. \cite{BullsEye2022}. Our work is theoretical. We introduce a new representation of the counting function, called \emph{piece-wise periodic step-polynomials}, which is more efficient than other known approaches in certain scenarious. Additionally, we discuss its connections with existing representations, such as \emph{piece-wise Ehrhart's quasi-polynomials} and \emph{piece-wise step-polynomials}. In fact, the new representation generalizes both of them.

\subsection{Main problem statement}

Let $n_x$ and $n_y$ be the dimensions of $x$ and parametric $y$ variables, and let a polytope $\PC$ be defined by one of the following ways: 
\begin{enumerate}
\item[(i)] {\bf System in the canonical form:} 
\begin{equation}\tag{Canon-Form}\label{canonical_form}
    \PC = \left\{\binom{x}{y} \in \RR^{n_x+n_y} \colon A x \leq b + B y \right\},
\end{equation}
where $A \in \ZZ^{m \times n_x}$, $B \in \QQ^{m \times n_y}$, and $b \in \QQ^{m}$;
\item[(ii)] {\bf System in the standard form:} 
\begin{equation}\tag{Standard-Form}\label{standard_form}
    \PC = \left\{\binom{x}{y} \in \RR^{n_x + n_y} \colon A x = b + B y,\, x \geq 0\right\},
\end{equation}
where $A \in \ZZ^{k \times n_x}$, $B \in \QQ^{k \times n_y}$, and $b \in \QQ^{k}$.
\end{enumerate} 
We put
\begin{equation*}
    \PC_{y} = \left\{x \in \RR^{n_x} \colon \binom{x}{y} \in \PC\right\},
\end{equation*} and consider the counting function
\begin{gather*}
    \QEnum \colon \RR^{n_y} \to \ZZ_{\geq 0} \cup \{+\infty\},\qquad\text{given by}\notag\\
    \QEnum(y) = \abs{\PC_{y} \cap \ZZ^{n_x}},
\end{gather*}
and its restriction on $\ZZ^{n_y}$, denoted by:
\begin{equation*}
   \ZEnum = \restr{\QEnum}{\ZZ^{n_y}}.
\end{equation*}
{\bf Our paper is motivated by the following computational problem: } For the input $(A,B,b)$, construct an efficient representation of $\QEnum$, which will allow to calculate quickly the value of $\QEnum(y)$, for any $y \in \QQ^{n_y}$. By the word “efficient”, we mean that the function $\QEnum$ is encoded, using some non-trivial data structure that can faster perform queries to $\QEnum$ in comparison with approaches that have no prior information on $(A,B,b)$. 

We study the computational complexity of this problem with respect to several parameters of $A$. The first of them correspond to a sub-determinant structure of $A$.
\begin{definition}
For a matrix $A \in \ZZ^{m \times n}$, by $$
\Delta_k(A) = \max\left\{\abs{\det (A_{\IC \JC})} \colon \IC \subseteq \intint m,\; \JC \subseteq \intint n,\; \abs{\IC} = \abs{\JC} = k\right\},
$$ we denote the maximum absolute value of determinants of all the $k \times k$ sub-matrices of $A$. Here, the symbol $A_{\IC \JC}$ denotes the sub-matrix of $A$, which is generated by all the rows with indices in $\IC$ and all the columns with indices in $\JC$. Note that $\Delta_1(A) = \|A\|_{\max}$.  

By $\Delta_{\gcd}(A,k)$ and $\Delta_{\lcm}(A,k)$, we denote the greatest common divisor and least common multiplier of non-zero determinants of all the $k \times k$ sub-matrices of $A$. Additionally, let $\Delta(A) = \Delta_{\rank(A)}(A)$, $\Delta_{\gcd}(A) = \Delta_{\gcd}(A,\rank(A))$, and $\Delta_{\lcm}(A) = \Delta_{\lcm}(A,\rank(A))$. The matrix $A$ with $\Delta(A) \leq \Delta$, for some $\Delta > 0$, is called \emph{$\Delta$-modular}.
\end{definition}

\begin{definition}\label{codim_def}
For a system in \ref{standard_form}, we call $\rank(A)$ as the \emph{co-dimension} of $\PC$. In turn, for a system in \ref{canonical_form}, we define the \emph{co-dimension} to be equal $m - \rank(A)$.
\end{definition}
\begin{remark}\label{form_trans_rm}
    Note that this definition is very natural for systems in \ref{standard_form}. For systems in \ref{canonical_form}, the definition can be justified in the following way. Assume that $\rank(A) = n$, for \ref{canonical_form}, and $\rank(A) = k$, for \ref{standard_form}. Due to \cite[Lemma~4 and Lemma~5]{OnCanonicalProblems_Grib}, any system in \ref{standard_form} can be polynomially transformed to a system in \ref{canonical_form} with $m = n + k$, preserving $\dim(\PC)$ and $\Delta(A)$. Vise versa, any system in \ref{canonical_form} can be polynomially transformed to a system in \ref{standard_form} with $k = m - n$ and an additional group-like constraint. This transformation also preserves $\dim(\PC)$ and $\Delta(A)$. By this reason, for polyhedra, defined by \ref{canonical_form}, the value $m-n$ is also called the \emph{co-dimension} of $\PC$.    
\end{remark}

The next two sufficiently general matrix parameters, denoted by $\nu(A)$ and $\mu(A)$, that will be considered in the whole work are related to a structure of the fan, induced by $A^\top$.
\begin{definition}
For arbitrary $A \in \ZZ^{m \times n}$ and $b \in \QQ^m$, put
\begin{gather*}
    \MC(A,b)=\bigl\{x \in \RR^{n} \colon A x \leq b\bigr\},\\
    \quad\nu(A) = \max\limits_{b \in \QQ^m} \abs{\vertex\bigl(\MC(A,b)\bigr)}.
\end{gather*}
\end{definition}

\begin{definition}
For an arbitrary matrix $A \in \RR^{n \times m}$, the symbols $\cone(A)$ and $\inth(A)$ denote the \emph{cone} and \emph{lattice}, induced by columns of $A$, i.e.
\begin{gather*}
    \cone(A) = \bigl\{At \colon t \in \RR^m_{\geq 0}\bigr\},\\
    \inth(A) = \bigl\{At \colon t \in \ZZ^m \bigr\}.
\end{gather*}
    
\end{definition}

\begin{definition}\label{triang_def}
For an arbitrary matrix $A \in \RR^{n \times m}$ of rank $n$, we define the parameter $\mu(A)$ as the maximum size of a triangulation of $\cone(A)$ with \emph{simple} cones, where the cone $\CCal$ is called \emph{simple}, if it is induced by columns of some $n \times n$ non-singular sub-matrix of $A$. More formally, a set $\TS$ is a \emph{triangulation} of the cone $\CCal = \cone(A)$, if the following requirements hold:
\begin{enumerate}
    \item For any cone $\TC \in \TS$, $\TC = \cone(A_{\BC})$, where $\BC$ is some $n\times n$ base of $A$;
    \item The equality $\CCal = \bigcup\limits_{\TC \in \TS} \TC$ is true;
    \item For different $\TC_1, \TC_2 \in \TS$, the set $\TC_1 \cap \TC_2$ forms a face of both $\TC_1$ and $\TC_2$.
\end{enumerate} 

We denote $\mu(A)$ as $\max\bigl\{\abs{\TS} \colon \text{$\TS$ is a triangulation of $\cone(A)$}\bigr\}$. Note that $\nu(A) \leq \mu(A^\top)$.
\end{definition}

Throughout the paper, we will use the following short notations with respect to the definitions \ref{canonical_form} and \ref{standard_form}: $\Delta := \Delta(A)$, $\Delta_1 := \Delta_1(A)$, $\nu := \nu(A)$, and $\mu := \mu(A^\top)$. Additionally, for $k \in \intint[0]{n_x + n_y}$, we denote the number of $k$-faces of $\PC$ by the symbol $f_{k}$. In other words, the values $f_k$ form components of the \emph{$f$-vector} of $\PC$. 
When estimating the computational complexity of algorithms, we will often use the notion of an \emph{FPT-algorithm}.
\begin{definition}
    An algorithm, parameterized by a parameter $k$, is called \emph{fixed-parameter tractable} (or, simply, an FPT-\emph{algorithm}) if its computational complexity can be estimated by a function from the class $f(k)\cdot n^{O(1)}$, where $n$ is the input size and $f(k)$ is a computable function that depends on $k$ only.
    A computational problem, parameterized by a parameter $k$, is called \emph{fixed-parameter tractable} (or, simply, an FPT-\emph{problem}) if it can be solved by an FPT-algorithm. For more information about the parameterized complexity theory, see \cite{FPT_Downey,FPT_Fomin}.
\end{definition}

Our main contributions with respect to the parametric counting problem are emphasized in Section \ref{summary_sec}. But, it uses some notations, which will be introduced later.
\begin{remark}\label{complexity_rm}
To make the text easier to read, we hide multiplicative terms of the type $\poly(\phi)$, using the $O^*(\cdot)$-notation, when we estimate the computational complexity. Here, $\phi$ denotes the input size. For example, the equality $n_x^{O(1)} = O^*(1)$ holds. Similarly, we use the $\tilde O(\cdot)$-notation to hide logarithmic terms. 

The outputs and all intermediate variables, occurring in the proposed algorithms, have polynomial-bounded bit-encoding size. Hence, any algorithm that is polynomial-time in terms of the arithmetic complexity analysis is a polynomial-time algorithm in terms of the bit-complexity analysis.
\end{remark}

\subsection{Structure of this work}
In Subsection \nameref{nonparam_survey_subs}, we give a survey for the non-parametric ($n_y = 0$) counting problem. In Subsection \nameref{param_survey_subs}, we present an introduction to the general parametric counting problem and give a survey of the known results. In Section \nameref{new_repr_sec}, we present our main theoretical contribution: a new representation of the parametric counting function $\QEnum$, named the \emph{periodic piece-wise step-polynomial}. In turn, in Subsection \nameref{quasipoly_subs}, we describe connections of the new representation with rational/integer Ehrhart's quasi-polynomials of $\PC$ and show how these quasi-polynomials can be computed, using our new representation. In Section \nameref{summary_sec}, we describe the main implications of our work from the theoretical and computational perspectives. 
% In Subsection \nameref{about_subs}, we say a few words about our method, developed in the series of papers \cite{SparseILP_Gribanov,Counting_FPT_Delta_corrected,CountingFixedM,Counting_FPT_Delta}, and describe differences with the previous works in the series. 
In Section \nameref{special_sec}, we prove the main computational implications for the two most important cases: the polyhedra, defined by systems of a bounded co-dimension $k$ (see Definition \ref{codim_def}), the general type polyhedra with bounded dimension $n_y$ of the parametric space. They are considered in Subsections \nameref{codim_subs} and \nameref{nyfixed_subs},  respectively. In Section \nameref{prelim_sec}, we describe auxiliary definitions and facts from the polyhedral analysis, which are necessary to prove the main theorems. More precisely, Subsection \nameref{valuations_subs} presents an introduction to the theory of valuations, indicator and generating functions on polyhedra; Subsection \nameref{vdtt_subs} presents important facts on vertices, edges, tangent cones, and triangulations; Subsection \nameref{unbouded_subs} proves auxiliary lemmas that help to handle the cases, when $\PC_y$ is unbounded, for some $y \in \RR^{n_y}$. Finally, Sections \nameref{chamber_decomp_th} and \nameref{main_param_th_proof} give proofs of the main Theorems \ref{main_param_th} and \ref{chamber_decomp_th} of our work, respectively.

\subsection{Survey on the non-parametric case}\label{nonparam_survey_subs}
Let us first survey known results about the non-parametric case, i.e. $n_y = 0$, and denote $n := n_x$ in the remaining part of this Subsection. The asymptotically fastest algorithm for the counting problem in a fixed dimension can be obtained, using the approach of A.~Barvinok \cite{Barv_Original} with modifications, due to Dyer \& Kannan \cite{OnBarvinoksAlg_Dyer} and Barvinok \& Pommersheim \cite{BarvPom}. A complete exposition of the Barvinok's approach can be found in \cite{BarvBook,BarvPom,BarvWoods,continuous_discretely,AlgebracILP}, additional discussion with respect to the dual-type counting algorithms can be found in the book \cite{counting_Lasserre_book}, due to J.~Lasserre. An important notion of the \emph{half-open sign decomposition} and other variant of Barvinok's algorithm that is more efficient in practice is given by K\"oppe \& Verdoolaege in \cite{HalfOpen}. Barvinok \& Woods \cite{BarvWoods} give important generalizations of the original techniques and adapts them to a wider range of problems to handle projections of polytopes.
Using the fastest deterministic Shortest Lattice Vector Problem (SVP) solver by Micciancio \& Voulgaris \cite{SVP_exp}, Barvinok's algorithm computational complexity can be estimated by
\begin{equation}\label{BarvComplexity}
\nu \cdot 2^{O(d)} \cdot \bigl( \log_2(\Delta) \bigr)^{d \ln(d)},    
\end{equation}
where $d := \dim(\PC)$. Since any polytope can be transformed to an integer-equivalent simple polytope, using a slight perturbation of the r.h.s. vector $b$, the parameterization by $\nu$ is correct (see, for example, \cite[Theorem~3]{Counting_FPT_Delta}). Since, for a fixed $d$, the value of $\nu$ and the value of $O\bigl( \log_2(\Delta) \bigr)^{d \log d}$ are bounded by a polynomial on the input length, the Barvinok's work shows that the counting problem is polynomial-time solvable in a fixed dimension. 

Lasserre \& Zeron \cite{simple_formula_counting} give formulae, based on the R.~Gomory's group-theoretic approach, whose complexity could be roughly bounded by
\begin{equation}\label{lasserre_complexity2}
\nu \cdot d^{O(1)} \cdot \Delta^{d}.    
\end{equation}
For $\Delta = O(d)$, the last complexity bound is better than \eqref{BarvComplexity}.
The papers \cite{SparseILP_Gribanov,CountingFixedM,Counting_FPT_Delta} are aimed to develop a counting algorithm with the $\poly(\nu,d,\Delta)$ complexity bound. Due to \cite{SparseILP_Gribanov}, the state of the art bound is
\begin{equation}\label{Delta_nu_complexity_result}
    O(\nu^2 \cdot d^4 \cdot \Delta^3). 
\end{equation}
Using the last complexity bound \eqref{Delta_nu_complexity_result} and different ways to estimate the parameter $\nu$, the papers \cite{SparseILP_Gribanov} and \cite{Counting_FPT_Delta} give new interesting complexity bounds for the $\Delta$-modular ILP feasibility problem, multi-dimensional knapsack problem, sparse ILP problems, and combinatorial multi-cover/multi-packing problems on hypergraphs. Table \ref{tab:comparison_primal} gives a comparison of the considered algorithms.

\begin{table}[h!]
    \centering
    \begin{tabular}{||c|c||}
    \hline
        $\nu \cdot 2^{O(d)} \cdot \bigl(\log_2(\Delta)\bigr)^{d \ln(d)}$ & see \cite[Chapter~16]{BarvBook} plus \cite{SVP_exp} \\
        \hline
         $\nu \cdot d^{O(1)} \cdot \Delta^{d}$ & see \cite{simple_formula_counting} \\
         \hline
         $O\bigl( \nu \cdot d^2 \cdot d^{\log_2(\Delta)} \bigr)$ & see \cite{CountingFixedM} \\
         \hline
         $O\bigl(\nu^2 \cdot d^4 \cdot \Delta^4 \cdot \log(\Delta) \bigr)$ & see \cite{Counting_FPT_Delta_corrected} \\
         \hline
         $O\bigl(\nu^2 \cdot d^4 \cdot \Delta^3 \bigr)$ & see \cite{SparseILP_Gribanov} \\
         \hline
    \end{tabular}
    \caption{Comparison of different primal-type algorithms}
    \label{tab:comparison_primal}
\end{table}

{\bf The case of a bounded co-dimension.} Consider now the polytopes defined by systems in the \ref{standard_form} with a bounded co-dimension $k$ (the number $k$ of linear independent rows in $A$ is bounded, see Definition \ref{codim_def}). The next natural question is the following: is it possible to compute $|\PC \cap \ZZ^n|$ by an FPT-algorithm with respect to the parameters $k$ \& $\Delta$ or $k$ \& $\Delta_1$, where $k$ is the co-dimension of $\PC$? The paper \cite{Counting_FPT_Delta} with a modification from \cite{SparseILP_Gribanov} gives a partially positive answer on this question. More precisely, for any fixed $k$, the problem to find $|\PC \cap \ZZ^n|$ can be solved by an FPT-algorithm, parameterized by $\Delta$ with the arithmetic complexity bound $O(n/k)^{2k+4} \cdot \Delta^3$. A similar parameterized algorithm with respect to $\Delta_1$ can be achieved just by using the Hadamard's bound. For $k=1$, it gives an $O(n^6 \cdot \Delta_1^3 )$ FPT-algorithm to count the solutions of the unbounded subset-sum problem. For $k=1$, the result of \cite{Counting_FPT_Delta} is not new, the earlier paper \cite{knapsack_lasserre}, due to Lasserre \& Zeron, also gives a dual-type FPT-algorithm for this problem, but the concrete complexity bound was not given. Finally, Lasserre \& Zeron \cite{fixedM_counting_lasserre} give a dual-type algorithm that is designed for systems with small values of $k$ and $\Delta_{total}(A)$ parameters. Unfortunately, the computational complexity analysis is not completely finished.

The ideas of dual-type algorithms and its residue techniques have been improved in the papers \cite{knapsack_path_int,IP_complex_int_SP,CountingViaContourIntegration}. Due to Hirai, Oshiro \& Tanaka \cite{CountingViaContourIntegration}, dual-type algorithms can be significantly more memory-saving than primal type-algorithms. For example, Hirai et al. prove the existence of an $O\bigl(\|y\|^k_{\infty}\bigr)$-time and $\poly(n,k,\|y\|_{\infty})$-space counting algorithm. Here, it is additionally assumed that $A$ is non-negative and $y \in \ZZ^k_{\geq 0}$.

\subsection{Introduction to the general parametric case}\label{param_survey_subs}
Let us return to the parametric case. Consider first the polytope $\PC = \{x \in \RR^{n_x} \colon A x \leq y \cdot b\}$, for $y \in \ZZ_{>0}$. In other words, $\PC_y = y \cdot \PC_1$. It was shown by Ehrhart \cite{Ehrhart1,Ehrhart2} that $\ZEnum(y)$ can be represented by a univariate polynomial with periodic coefficients, which is known as the  \emph{Ehrhart's quasi-polynomial}.
The least common multiple of all the coefficient periods is bounded by $t$, where $t \in \ZZ_{>0}$ is the minimum value, such that $\PC_t = t \cdot \PC_1$ becomes a polyhedron with $\vertex(\PC) \subseteq \ZZ^{n_x}$.  

It is difficult to directly store the Ehrhart's quasi-polynomial representation of $\QEnum(y)$, because it needs $O(n_x \cdot t)$ space with $t = 2^{O(\phi)}$, where $\phi$ is the bit-encoding length of $(A,b)$. However, it was shown by Barvinok \cite{Simplex_Barv} that the values of the first $j$ leading coefficients of the Ehrhart's quasi-polynomial of a rational simplex in a given point $y$ can be computed by a polynomial-time algorithm, assuming that $j$ is fixed. Moreover, this result holds even for more general polytopes, which have a fixed co-dimension, see Definition \ref{codim_def}. As it was noted in \cite{Simplex_Barv}, if the dimension is fixed in advance, the value of any periodic coefficient in a given point $y$ can be computed by a polynomial-time algorithm, using the interpolation technique.

The multivariate generalization of the above result, due to Ehrhart, was presented by Clauss, Loechner \& Wilde in \cite{Parametric_Clauss,VerticesOfParametricPolyhedra}. In order to give a formal exposition, we need to make a few definitions, which will be used significantly in the further text. 
\begin{definition}\label{perfun_def}
For $\DC \subseteq \RR^n$, we call a function $f \colon \DC \to \RR$ \emph{periodic}, if there exists a matrix $P$ with linear independent columns, such that $f(x) = f(y)$, whenever $x-y \in \inth(P)$. The matrix $P$ is called a \emph{period-matrix} of $f$. A vector $p$ is called a \emph{period-vector} or a \emph{multi-period} of $f$, if $\diag(p)$ is a \emph{period-matrix} of $f$. That is, $f(x) = f(y)$, whenever, for each $i \in \intint{n}$, $x_i = y_i + p_i \cdot k$, and $k \in \ZZ$.

We call a periodic function $f$ \emph{rational}, if its restriction $\restr{f}{\QQ^n}$ is periodic (the restriction has a rational period-matrix). Similarly, if $\restr{f}{\ZZ^n}$ is periodic (the restriction has an integer period-matrix), then $f$ is called \emph{integer}.
\end{definition}

\begin{definition}\label{quasipoly_def}
A \emph{quasi-polynomial} of degree $d$ in $n$ variables $x$ is a polynomial expression of degree $d$ in $x$ with coefficients, represented by periodic functions. That is,
$$
f(x) = \sum\limits_{\substack{\jB \in \ZZ_{\geq 0}^{n}\\
\|\jB\|_1 \leq d}} a_{\jB}(x) \cdot x^{\jB},
$$ where $a_{\jB}$ are periodic functions. If all the periodic coefficients $a_{\jB}$ are rational (or integer), then we call the whole quasi-polynomial $f$ \emph{rational} (resp. \emph{integer}). A period-vector $p$, which is common for all the coefficients $a_{\jB}(x)$, is called a \emph{period-vector} or a \emph{multi-period} of $f$.
%If $f$ is integer quasipolynomial, then the LCM of periods of all the coefficients $a_{j}(x)$ is called the \emph{quasiperiod of $f$}.
\end{definition}
Denote the projection from $\RR^{n_x + n_y}$ to the parametric space $\RR^{n_y}$ by $\Pi$. In other words, for $x \in \RR^{n_x}$ and $y \in \RR^{n_y}$, we have $\Pi\binom{x}{y} = y$.
\begin{definition}\label{chamber_decomp_def}
    Let $\PC$ be a polyhedron, defined by \ref{canonical_form} or \ref{standard_form}. Consider a collection $\QS$ of rational polyhedra $\QC$ with the following properties:
    \begin{enumerate}
        \item The equality $\Pi(\PC) = \bigcup\limits_{\QC \in \QS} \relint(\QC)$ is true;

        \item The equality $\relint(\QC_1) \cap \relint(\QC_2) = \emptyset$ is true, for different $\QC_1,\QC_2 \in \QS$;

        \item For any $\QC \in \QS$, all the polyhedra of the family $\{ \PC_{y} \colon y \in \QC \}$ have the same fixed combinatorial type;

        \item For any $\QC \in \QS$, there exists a subset of bases $\base(\QC)$ of $A$, such that, for any $y \in \relint(\QC)$, we have
        $$ 
        \vertex(\PC_{y}) = \bigl\{ A_{\BC}^{-1} (b_{\BC} - B_{\BC} y) \colon \BC \in \base(\QC)\bigr\}.
        $$ The set of corresponding affine functions
        $$
        \pvertex(\QC) := \bigl\{\VC_{\BC}(y) = A_{\BC}^{-1} (b_{\BC} - B_{\BC} y) \colon \BC \in \BS_{\QC}\bigr\}
        $$
        is called the \emph{parametric vertices of $\PC_y$.}
    \end{enumerate}
    The family $\QS$ is called the \emph{chamber decomposition of $\PC$}, elements of the decomposition are called \emph{chambers}. 
\end{definition}
Here, the symbol $\relint(\cdot)$ denotes the relative interior of the corresponding subset of $\RR^n$. More precisely, for $\PC \subseteq \RR^n$,
$$
\relint(\PC) = \{x \in \RR^n \colon \exists \varepsilon > 0 \text{ such that } x + \varepsilon \cdot \BC_{\norm{\cdot}_2} \cap \affh(\PC) \subseteq \PC \},
$$ where $\BC_{\norm{\cdot}_2}$ is the unit Euclidean ball and 
$
\affh(\PC)
$ is the affine hull of $\PC$.

\begin{definition}\label{denom_def}
    For a rational polyhedron $\PC \subseteq \RR^n$, the minimal value $q \in \ZZ_{\geq 1}$ (or $q \in \QQ_{>0}$), such that 
    $$
    \vertex(q \cdot \PC) \subseteq \ZZ^n
    $$ is called the \emph{integer denominator of $\PC$} (or resp. \emph{rational denominator of $\PC$}) and is denoted by $\denom_{\ZZ}(\PC)$ (resp. $\denom_{\QQ}(\PC)$). Clearly, $\denom_{\QQ}(\PC) \leq \denom_{\ZZ}(\PC)$.

    Let $\PC$ be a polyhedron, defined by \ref{canonical_form} or \ref{standard_form}, and let $\QS$ be a chamber decomposition of $\PC$. For a chamber $\QC \in \QS$, the minimal value $q \in \ZZ_{>0}$ (resp. $q \in \QQ_{>0}$), such that 
    $$
    q \cdot \VC(y) \in \ZZ^{n_x},\quad\text{for all $y \in \ZZ^{n_y}$ and $\VC \in \pvertex(\QC)$,}
    $$ is called the \emph{integer $($resp. rational$)$ chamber denominator of $\QC$} and is denoted by $\chdenom_{\ZZ}(\QC)$ (resp. $\chdenom_{\QQ}(\QC)$). Clearly, $\chdenom_{\QQ}(\QC) \leq \chdenom_{\ZZ}(\QC)$, for any $\QC \in \QS$.
\end{definition}

\begin{theorem}[Theorem~2, Clauss \& Loechner \cite{Parametric_Clauss}]\label{Z_multi_Ehrh_th}
    Let $\PC$ be a polyhedron, defined by \ref{canonical_form} or \ref{standard_form}. Then, the function $\ZEnum$ can be represented by an \emph{integer piece-wise quasi-polynomial} of degree $n_x$. 
    
    That is, there exists a chamber decomposition $\QS$, such that, for any $\QC \in \QS$ and $y \in \relint(\QC) \cap \ZZ^{n_y}$, the function $\ZEnum$ is an integer quasi-polynomial of degree $n_x$. The vector $\chdenom_{\ZZ}(\QC) \cdot \BUnit$ can be chosen as its period-vector.
\end{theorem}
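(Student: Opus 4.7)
The plan is to follow the classical tangent-cone argument of Clauss--Loechner via Brion's theorem.

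First I would construct the chamber decomposition $\QS$ explicitly. For each size-$n_x$ subset $\BC$ of row indices of $A$ with $\det(A_\BC)\neq 0$, define the candidate parametric vertex $\VC_\BC(y) = A_\BC^{-1}(b_\BC - B_\BC y)$ and the region $\RC_\BC = \{y \in \RR^{n_y} : A\,\VC_\BC(y) \leq b + B y\}$. Each $\RC_\BC$ is a polyhedron in $y$, and the common refinement of the arrangement of all hyperplanes $A_i \VC_\BC(y) = b_i + B_i y$ inside $\Pi(\PC)$ partitions $\Pi(\PC)$ into relatively open cells $\QC$ on which both $\base(\QC) = \{\BC : y \in \RC_\BC\}$ and the entire face lattice of $\PC_y$ are combinatorially invariant; this is essentially the construction of Loechner \& Wilde \cite{VerticesOfParametricPolyhedra}.

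Now fix a chamber $\QC$ and a point $y \in \relint(\QC) \cap \ZZ^{n_y}$, and assume $\PC_y$ is bounded (otherwise $\ZEnum(y) = +\infty$ and there is nothing to prove on this chamber). By Brion's theorem,
$$\ZEnum(y) = \sum_{\BC \in \base(\QC)} S\bigl(K_\BC(y)\bigr),$$
where $K_\BC(y)$ is the tangent cone at $\VC_\BC(y)$ and $S(\cdot)$ denotes the regularized $\BUnit$-specialization of the lattice-point generating function. Constancy of combinatorial type on $\QC$ implies $K_\BC(y) = \VC_\BC(y) + C_\BC$ with $C_\BC$ independent of $y$. Applying Barvinok's signed unimodular decomposition reduces each $C_\BC$ to a signed sum of simple unimodular cones; for each such cone with apex $v$, the lattice-point count is a polynomial in $v$ of degree $n_x$ whose coefficients depend only on the residue class $v \bmod \ZZ^{n_x}$. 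Substituting $v = \VC_\BC(y)$ and summing over $\BC$ yields a polynomial in $y$ of degree at most $n_x$ whose coefficients are periodic functions of $y$, i.e.\ an integer quasi-polynomial of degree $n_x$.

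For the claimed period, observe that if $p \in \ZZ^{n_y}$ satisfies $A_\BC^{-1} B_\BC p \in \ZZ^{n_x}$ for every $\BC \in \base(\QC)$, then $\VC_\BC(y+p) - \VC_\BC(y) = -A_\BC^{-1} B_\BC p \in \ZZ^{n_x}$, so all the fractional-part-dependent coefficients are invariant under the shift $y \mapsto y+p$. By the definition of $\chdenom_\ZZ(\QC)$, the vector $\chdenom_\ZZ(\QC) \cdot \VC_\BC(y)$ is integral for every $y \in \ZZ^{n_y}$ and every $\BC \in \base(\QC)$, which forces $\chdenom_\ZZ(\QC) \cdot A_\BC^{-1} B_\BC \in \ZZ^{n_x \times n_y}$. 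Taking $p = \chdenom_\ZZ(\QC) \cdot e_j$ for each $j$ gives the asserted period-vector $\chdenom_\ZZ(\QC) \cdot \BUnit$.

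The main obstacle I foresee is the specialization step: individual tangent-cone generating functions are singular at $z = \BUnit$, so one must invoke either Barvinok's short signed decomposition into unimodular cones with its explicit $\varepsilon$-limit, or the Brion--Vergne regularization, to show that the poles cancel in the sum and that the result is a genuine polynomial of degree at most $n_x$ rather than a rational function. A secondary subtlety is handling chambers where $\PC_y$ is lower-dimensional (which merely depresses the degree but is harmless) and chambers where $\PC_y$ is unbounded (excluded from the finite domain of $\ZEnum$); both will be treated rigorously using the auxiliary machinery the paper develops in its preliminaries on valuations and tangent cones.
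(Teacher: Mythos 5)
Your plan is correct and would work, but it takes a genuinely different route from the one the paper itself develops. You prove the quasi-polynomial structure via Brion's theorem followed by Barvinok's \emph{signed unimodular decomposition}, specializing at $\BUnit$ and reading off the periodic coefficients from the residues of $\VC_{\BC}(y)\bmod\ZZ^{n_x}$. The paper never proves Theorem~\ref{Z_multi_Ehrh_th} that way: it is stated as a citation of Clauss~\&~Loechner, and the paper's own derivation of this result (with a sharper rational refinement) comes as a corollary of Theorem~\ref{main_param_th} together with Lemma~\ref{period_quasipoly_coeff_lm} and Theorem~\ref{my_real_Ehr_th}. There, instead of Barvinok's unimodular decomposition, the tangent cones are triangulated into \emph{simple but not necessarily unimodular} cones $\cone(A_{\BC})$, and the generating function of each such cone is produced by the Smith-Normal-Form/group-theoretic Lemma~\ref{quad_system_lm}, with the constant term extracted via Todd polynomials; the periodic coefficients then have period \emph{matrix} $S_{\BC}$ (not period lattice $\ZZ^{n_x}$), with $\abs{\det S_{\BC}}\le\Delta$ possible values per coefficient, and the decomposition size is $\mu$ rather than Barvinok's $2^{O(n\log\log\Delta)}$-size unimodular family. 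This is the whole point of the paper: Barvinok's route (yours) gives $\poly(\log\Delta)$-type bounds suited to fixed dimension, while the paper's simple-cone/SNF route trades that for $\poly(\Delta)$-type bounds that drive the FPT results in Section~\nameref{codim_subs}. Your period argument at the end — deducing $\chdenom_{\ZZ}(\QC)\cdot A_{\BC}^{-1}B_{\BC}\in\ZZ^{n_x\times n_y}$ and shifting by $\chdenom_{\ZZ}(\QC)\cdot e_j$ — is essentially identical to the proof of the first proposition of Lemma~\ref{period_quasipoly_coeff_lm}, so that part matches. One small imprecision worth fixing: for a shifted unimodular cone the lattice-point count is a polynomial in the \emph{rounded} apex $\lceil U^{-1}v\rceil$ (an affine step-function of $v$), not literally a ``polynomial in $v$ with residue-dependent coefficients''; and lower-dimensional $\PC_y$ requires the variable-elimination reduction the paper performs in Section~\ref{fixed_cham_subs} before Brion's theorem applies, not merely a degree drop.
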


\begin{definition}[Integer Piece-wise Ehrhart's Quasi-polynomial]\label{integer_Ehr_poly_def}
The representation of $\ZEnum$, given by the previous theorem, is called the \emph{integer piece-wise Ehrhart's quasi-polynomial of $\PC$}. The coefficients $a_{\jB}(y)$ can be interpreted as the integer periodic piece-wise defined functions. Since the number of chambers in the chamber decomposition $\QS$ is always finite, any of the coefficients $a_{\jB}(y)$ takes only a finite number of values. Therefore, we can directly store the values of all periodic coefficients $a_{\jB}(y)$. We call such a representation of $\ZEnum$ as the \emph{complete representation of the integer piece-wise Ehrhart's quasi-polynomial of $\PC$.} 
\end{definition}

Clauss \& Loechner \cite{Parametric_Clauss} give an algorithm to compute the part of chamber decomposition, i.e., only the chambers of dimension $n_y$. For the most applications, it is enough to know only the full-dimensional part of chambers. However, to establish our main results, we need to know the chambers of all dimensions. In our paper, we will propose an algorithm to compute them.

The following theorem, due to Henk \& Linke \cite{LinkeRational2} (see also Linke \cite{LinkeRational}), gives a generalization of Theorem \ref{Z_multi_Ehrh_th} with respect to the function $\QEnum$. Note that in the homogeneous case, when $b = 0$, all the chambers from the chamber decomposition of $\PC$ become polyhedral cones. %Additionally, note that our definition of the following theorem slightly differs from the original paper  \cite{LinkeRational2}
\begin{theorem}[Henk \& Linke \cite{LinkeRational2}]\label{Q_multi_Ehrh_th}
    Let $\PC$ be a polyhedron, defined by \ref{canonical_form} with $b = 0$ and $B = I_{m}$. In other words, $\PC_{y} = \{x \in \RR^{n_x} \colon A x \leq y\}$, for $y \in \RR^{n_y}$. Let $\QS$ be a chamber decomposition of $\PC$.

    Fix a chamber $\QC \in \QS$, and let $\QC = \cone(h_1, h_2, \dots h_s)$. Then, for $y \in \relint(\QC)$, the function $\QEnum$ can be represented by a rational quasi-polynomial
    $$
    \QEnum(y) = \sum\limits_{\substack{\jB \in \ZZ_{\geq 0}^{n_y}\\
\|\jB\|_1 \leq n_x}} a_{\jB}(y) \cdot y^{\jB},
    $$ where $a_{\jB}(y) = a_{\jB}\bigl(y + \denom_{\QQ}(\PC_{h_i}) \cdot h_i\bigr)$, for each $i \in \intint s$. That is, the matrix $D \cdot H$ is a period-matrix of the quasi-polynomial, where $H = (h_1\;\dots\;h_s)$ and $D$ is diagonal with $D_{i i} = \denom_{\QQ}(\PC_{h_i})$. 
    Furthermore, $a_{\jB}$ is a piece-wise defined polynomial of degree $n_x - \|\jB\|_1$ in $y$ with 
    $$
    \frac{\partial}{\partial y_i} a_{\jB} = - (\jB_i + 1) \cdot a_{\jB + e_i}.
    $$
\end{theorem}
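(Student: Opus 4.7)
The plan is to combine Brion's decomposition theorem with the short rational generating functions of Barvinok and the Khovanskii-Pukhlikov specialization, and then read off the quasi-polynomial coefficients. Fix a chamber $\QC = \cone(h_1,\dots,h_s)$ and $y \in \relint(\QC)$. By Definition \ref{chamber_decomp_def}, the combinatorial type of $\PC_y$ is constant on $\QC$; each parametric vertex is $\VC_\BC(y) = A_\BC^{-1} y_\BC$, a linear function of $y$ (using $b=0$, $B = I_m$); and the rays of the tangent cone at $\VC_\BC(y)$ coincide with the columns of $-A_\BC^{-1}$ and are independent of $y$ inside $\QC$. By Brion's theorem,
\begin{equation*}
\sum_{x \in \PC_y \cap \ZZ^{n_x}} z^x = \sum_{\BC \in \base(\QC)} \frac{z^{\VC_\BC(y)}\,\sigma_\BC\bigl(z;\{\VC_\BC(y)\}\bigr)}{\prod_{j=1}^{n_x}\bigl(1-z^{u_{\BC,j}}\bigr)},
\end{equation*}
where the $u_{\BC,j}$ are the fixed ray directions of the tangent cone and $\sigma_\BC(z;\tau)$ is the polynomial enumerating integer points of the half-open fundamental parallelepiped shifted by the fractional part $\tau$ of $\VC_\BC(y)$.

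Next I would substitute $z = \exp(\xi)$, expand each summand as a Laurent series in $\xi$, and extract the constant term; this is the classical specialization that produces $\QEnum(y)$. The $y$-dependence of a single summand then splits into (i) a \emph{polynomial} piece coming from the Taylor expansion of $\exp(\langle\xi,\VC_\BC(y)\rangle)$, whose $k$-th coefficient is a degree-$k$ polynomial in $y$, and (ii) a \emph{periodic} piece coming from $\sigma_\BC(z;\{\VC_\BC(y)\})$, since $\tau \mapsto \sigma_\BC(z;\tau)$ is invariant under integer shifts of $\tau$. Collecting terms by powers of $y$ and bookkeeping the orders in $\xi$ that survive the constant-term extraction shows that the coefficient of $y^\jB$ is a periodic function whose polynomial-in-$y$ part has degree $n_x - \|\jB\|_1$, which is the claimed degree.

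For the multi-period assertion, shift $y$ along a generator $h_i$ by $\denom_\QQ(\PC_{h_i})$: by the very definition of $\denom_\QQ$, this shifts every $\VC_\BC(y)$ by an integer vector, simultaneously for all $\BC \in \base(\QC)$, so the fractional parts $\{\VC_\BC(y)\}$ and hence each enumerator $\sigma_\BC$ remain unchanged. The entire periodic-coefficient structure is therefore invariant under this shift, which proves that $D\cdot H$ is a period-matrix of every $a_\jB$.

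The derivative identity follows by differentiating the Brion expression with respect to $y_i$ term by term and re-identifying coefficients: from $\partial_{y_i}(a_\jB y^\jB) = (\partial_{y_i} a_\jB)\, y^\jB + \jB_i a_\jB y^{\jB - e_i}$ on one side, and the chain rule $\partial_{y_i}\exp(\langle\xi,\VC_\BC(y)\rangle) = \langle\xi,\partial_{y_i}\VC_\BC(y)\rangle\exp(\ldots)$ on the other, a shift in the multi-index together with matching like powers of $y$ yields $\partial_{y_i}a_\jB = -(\jB_i + 1)\,a_{\jB + e_i}$. I expect the main technical obstacle to be the uniformity of the period across all bases $\BC \in \base(\QC)$, i.e.\ the claim that a single value $\denom_\QQ(\PC_{h_i})$ serves every parametric vertex of $\PC_y$ at once; this is precisely what the chamber structure guarantees via the fact that $\PC_{h_i}$ has the same combinatorial type as $\PC_y$ for $y \in \relint(\QC)$. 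A secondary difficulty is handling non-simplicial tangent cones, which would be resolved by a Barvinok-style signed decomposition into simplicial cones as in Definition \ref{triang_def}.
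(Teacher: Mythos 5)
The paper does not actually prove this statement: Theorem~\ref{Q_multi_Ehrh_th} is quoted from Henk \& Linke, and the closest thing the paper establishes itself, Theorem~\ref{my_real_Ehr_th}, is deliberately weaker---it drops the derivative identity, explaining that ``we cannot say anything about the derivatives of $a_{\jB}$, because of the discrete nature of $\bar\pi_{\BC,k}$.'' Your proposal relies on the same Brion/Khovanskii--Pukhlikov generating-function machinery the paper uses for Theorem~\ref{my_real_Ehr_th}, and it runs into exactly the same obstruction at the derivative step, which is where the genuine content of the claim lies.

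The concrete gap is in the derivative computation. In your factorization $z^{\VC_\BC(y)}\,\sigma_\BC\bigl(z;\{\VC_\BC(y)\}\bigr)/\prod_j\bigl(1-z^{u_{\BC,j}}\bigr)$, the periodic factor $\sigma_\BC\bigl(z;\{\VC_\BC(y)\}\bigr)$ is not locally constant in $y$: inside a period cell the fractional part $\{\VC_\BC(y)\}$ is a non-constant affine function of $y$, and after the specialization $z=\exp(\xi)$ the Laurent coefficients of that factor depend polynomially on this shift, hence on $y$. Your chain-rule step only differentiates $\exp\bigl(\langle\xi,\VC_\BC(y)\rangle\bigr)$ and silently drops $\partial_{y_i}\sigma_\BC\bigl(z;\{\VC_\BC(y)\}\bigr)$, which is generically non-zero. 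Moreover, even granting a complete term-by-term differentiation of the Brion expansion, the identity you would obtain is $\sum_\jB\bigl[\partial_{y_i}a_\jB + (\jB_i+1)\,a_{\jB+e_i}\bigr]\,y^\jB = 0$ (since $\QEnum$ is piecewise constant in $y$ and has zero derivative a.e.); because the bracketed expressions are themselves functions of $y$, you cannot conclude that each bracket vanishes without first pinning down a canonical normalization or uniqueness for the coefficients $a_\jB$, which you have not supplied and which the Brion route does not hand you for free. The degree claim $\deg a_\jB = n_x-\|\jB\|_1$ has the same under-specification: it amounts to asserting that the constant-term coefficient $\hat\pi_{n_x-\|\jB\|_1}$ coming from the periodic factor is a polynomial of that exact degree in the shift parameter, and your ``bookkeeping the orders in $\xi$'' does not establish this. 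The multi-period argument is essentially sound, though you should justify that $\denom_\QQ(\PC_{h_i})\cdot\VC_\BC(h_i)\in\ZZ^{n_x}$ for \emph{every} $\BC\in\base(\QC)$ even though $h_i$ lies on a proper face of $\QC$ where parametric vertices may coincide---the combinatorial-type equality you invoke does not hold at $h_i$ itself.
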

\begin{definition}[Rational Piece-wise Ehrhart's Quasi-polynomial]\label{rational_Ehr_poly_def}
The representation of $\QEnum$, given by this theorem, is called the \emph{rational piece-wise Ehrhart's quasi-polynomial of $\PC$}.
\end{definition}

\begin{remark}\label{Linke_rm}
    It would be interesting to establish a simpler periodic property by analogy with Theorem \ref{Z_multi_Ehrh_th}, due to Clauss \& Loechner, such that
    $$
    \boxed{
    \text{for each $\QC \in \QS$, the vector $\chdenom_{\QQ}(\QC) \cdot \BUnit$ is a period-vector of $\QEnum$.}
    }
    $$ To the best of our knowledge, at the current moment of time such a property is not known. But, it seems that it can be deduced from the \emph{step-polynomial representation of $\QEnum$}, due to Verdoolaege, Seghir, Beyls, Loechner \& Bruynooghe \cite{CountingInParametricPolyhedra} (see also \cite{CountingFunctionEncoding}, due to Verdoolaege \& Woods), even for the non-homogeneous case with $b \not= \BZero$ and $B \not= I_{m}$. Notwithstanding this, in the Subsection \ref{quasipoly_subs}, Theorem \ref{my_real_Ehr_th} gives an independent proof of a slightly refined version of Theorem \ref{Q_multi_Ehrh_th}, due to Henk \& Linke, to prove this property for the general non-homogeneous case.
\end{remark}

Theorem \ref{Q_multi_Ehrh_th}, due to Henk \& Linke, has many generalizations that work with more general functions (evaluations) than $\QEnum$. For example, the weighted Minkowski sums of rational polyhedra can be represented as quasi-polynomials on weights, see Henk \& Linke \cite{LinkeRational2} and Stapledon \cite{Stapledon_counting} for algorithmic implications. The major generalization of $\QEnum$ is given by the notion of the \emph{intermediate weighted sums on polyhedra}: 
$$
\SC^{\LC}(\PC_{y}, h) = \sum\limits_{x \in \ZZ^{n_x}/\LC} \quad \int\limits_{\PC_{y} \cap (x + \LC)} h(t)\, dt,
$$ where $y \in \QQ^{n_y}$, $h(x)$ is a polynomial function, and $\LC$ is a rational linear subspace of $\RR^{n_x}$. It turns out that the structure of $\SC^{\LC}(\PC_{y}, h)$ can also be expressed by quasi-polynomials. The algorithmic theory (in a fixed dimension) of intermediate weighted sums on polyhedra is developed in the sequence of works \cite{RealEhrh2,ThreeEhrhart,RealEhrh}, due to Baldoni, Berline, De~Loera, K{\"o}ppe, Vergne, and \cite{RealEhrh3}, due to Beck, Elia \& Rehberg. Similar to Barvinok \cite{Simplex_Barv}, Baldoni et al. \cite{Simplex_Baldoni} give a polynomial-time algorithm to compute the highest coefficients of the corresponding quasi-polynomials.

{\bf The piece-wise step-polynomial representation of $\QEnum$.} As it was already mentioned, even for a fixed $n_x$, the integer or real Ehrhart's quasi-polynomials can not be used as an effective data structure to calculate $\ZEnum(y)$ or $\QEnum$, for a given $y \in \ZZ^{n_y}$ or $y \in \QQ^{n_y}$. For this reason, Verdoolaege, Seghir, Beyls, Loechner \& Bruynooghe \cite{CountingInParametricPolyhedra}, present an alternative representation of $\QEnum$, called the \emph{piece-wise step-polynomial}, which is, for a fixed chamber $\QC$, is a polynomial expression in the lower integer parts of the parametric vertices $\pvertex(\QC)$ of $\QC \in \QS$. The algorithm of \cite{CountingInParametricPolyhedra} computes such a representation of $\QEnum$ by a polynomial-time algorithm, assuming that $n_x$ and $n_y$ are fixed. If $n_x$ is fixed, the length of the resulting representation is bounded by a polynomial on the input size, which gives a practically good query time to compute $\QEnum(y)$, for a given $y \in \QQ^{n_y}$. Due to Verdoolaege \& Woods \cite{CountingFunctionEncoding}, the class of piece-wise step-polynomials and the class of rational generating functions are equivalent in the following sense: both representations can be transformed to each other by a polynomial-time algorithm in the assumption that $n_x$ and $n_y$ are fixed.

{\bf The dual principle.} All the considered algorithms are called primal-type counting algorithms. The dual-type counting algorithms are originally applied to polytopes $\PC$, defined in the standard form $\PC = \{x \in \RR_{\geq 0}^n \colon A x = y \}$, where $A \in \ZZ^{k \times n}$, $\rank(A) = k$, and $y \in \ZZ^k$. To the best of our knowledge, the dual-type generating-function framework was initiated by Brion \& Vergne \cite{brion_simple_poly,brion_residue}, Beck \cite{beck_residue,beck_partial_fractions,beck_dedekind_sums}, Nesterov \cite{knapsack_nesterov}, and Lasserre \& Zeron \cite{AlternativeCounting,fixedM_counting_lasserre}; see the monograph \cite{counting_Lasserre_book} by Lasserre. Denote $f_{A}(y) = |\PC_{y} \cap \ZZ^{n}|$ and consider the \emph{Z-transform} $\hat f_{A}(\zB) := \sum\limits_{y \in \ZZ^k} f_A(y) \cdot \zB^y$. Brion \& Vergne \cite{brion_residue} showed that $\hat f_A$ admits a simple closed formula $\hat f_A(\zB) = \prod\limits_{i=1}^n \frac{1}{1-\zB^{-A_{* i}}}$ and that  $f_A(b)$ can be recovered by the \emph{inverse Z-transform}, which is a multi-dimensional contour integration of $\hat f_A$.  Using this technique, Lasserre \& Zeron \cite{AlternativeCounting} present an algorithm to find $f_A(y)$ with the complexity bound $O(k)^d \cdot \Lambda$, where $d := \dim(\PC(b)) = n - k$ and $\Lambda$, where the parameter $\Lambda$ depends as a polynomial on $m$, $d$, and $\Delta_1$, but exponentially on the input size. The last bound was improved in \cite{SparseILP_Gribanov} to the bound $O\bigl(\frac{k}{d}+1\bigr)^{d/2} \cdot d^3 \cdot \Delta^3$, which additionally can be used for general polytopes of the co-dimension $k$, defined by both standard and canonical forms.

\section{New representation: Piece-wise periodic step-polynomials}\label{new_repr_sec}

In our work, we introduce the class of \emph{piece-wise periodic step-polynomials}, which differs from standard piece-wise step-polynomials by periodicity of the coefficients. The period-vector of any coefficient in our representation has smaller components than in the rational piece-wise Ehrhart's polynomial representation. More precisely, the product of the multi-period components of any coefficient is bounded by $\Delta$. The total length of our new representation can be even polynomial on $n_x$ in some important cases. Following to the papers \cite{Parametric_Clauss,CountingFunctionEncoding}, let us make some definitions.
\begin{definition}
Given real vector spaces $\VC$ and $\WC$, the function $\TC \colon \VC \to \WC$ of the form
    $$
    \TC(x) = \bigl\lfloor \AC(x) \bigr\rfloor,
    $$ where $\AC \colon \VC \to \WC$ is an affine map, is called \emph{the affine step-function}. 
\end{definition}

\begin{definition}
    Given real vector spaces $\VC$ and $\WC$ and an integer lattice $\Lambda \subseteq \WC$, a \emph{periodic step-polynomial} $f \colon \VC \to \RR$ is a function of the form
    \begin{equation*}
        f(x) = \sum\limits_{i = 1}^l \pi_{i}\bigl(\TC_i(x)\bigr) \cdot \Bigl(\LC_i\bigl(\TC_i(x)\bigr)\Bigr)^{d_i},
    \end{equation*}
    where, for any $i \in \intint l$, $\TC_i \colon \VC \to \WC$ are affine step-functions, $\LC_i \colon \WC \to \RR$ are linear functions, $\pi_{i} \colon \Lambda \to \RR$ are periodic functions 
    %(e.g. $\pi_{i}(x) = \pi_{i}(x + t)$, for some $t \in \Lambda$) 
    and $d_i \in \ZZ_{\geq 0}$. We say that \emph{the degree} of $f(x)$ is $\max_i\{d_i\}$ and \emph{the length} of $f(x)$ is $l$. 
    %A vector $q$ is called a \emph{multi-period} of $f(x)$, if $q$ is a multi-period for all $\pi_i$.
\end{definition}

\begin{definition}\label{periodic_step_poly_def}
Let $\DC$ be a subset of a real vector space and $\QS$ be a family of rational polyhedra, such that their relative interiors form a partition of $\DC$. Then, a function $f \colon \DC \to \RR$ with the property:
    $$
    \text{for each $\QC \in \QS$, the function $\restr{f}{\relint(\QC)}$ is a periodic step-polynomial,}
    $$
    is called a \emph{piece-wise periodic step-polynomial defined on $\QS$}. 
\end{definition}

The next theorem is the main theorem of our work. It states that there exists a piece-wise periodic step-polynomial representation of $\QEnum$ of a very special structure. Additionally, it presents an algorithm to compute this representation.
\begin{theorem}\label{main_param_th}
Let $\PC$ be a polyhedron, defined by \ref{canonical_form}. Assume that $\PC_{y}$ is bounded, for at least one $y \in \Pi(\PC)$. Then, there exists a chamber decomposition $\QS$ of $\PC$ and a piece-wise periodic step-polynomial $f$, defined on $\QS$, such that, for any $y \in \Pi(\PC)$ with bounded $\PC_{y}$, it holds $\QEnum(y) = f(y)$.

Additionally, the following propositions hold:
\begin{enumerate}
    \item For a base $\BC$ of $A$, denote $A_{\BC}=P_{\BC} S_{\BC} Q_{\BC}$, where $S_{\BC}$ is the SNF of $A_{\BC}$ and $P_{\BC},Q_{\BC} \in \ZZ^{n_x \times n_x}$ are unimodular. Then, for a fixed chamber $\QC \in \QS$ and $y \in \relint(\QC)$, we have
    \begin{equation}\label{counting_func_repr_eq}
    f(y) = \sum\limits_{\BC \in \base(\QC)} \sum\limits_{k = 0}^{n_x} \pi_{\BC,k}\bigl(P_{\BC}\TC_{\BC}(y)\bigr) \cdot \bigl\langle c_{\BC}, \TC_{\BC}(y) \bigr\rangle^k,
    \end{equation}
    where $\TC_{\BC}(y) \colon \RR^{n_y} \to \RR^{n_x}$ are affine step-functions, $c_{\BC} \in \QQ^{n_x}$, and $\pi_{\BC,k} \colon \ZZ^n \to \QQ_{\geq 0}$ are periodic functions with a period-matrix $S_{\BC}$. 
    %the formula $\pi_{\BC,k}(x) = \pi_{\BC,k}(x + S_{\BC} \cdot \BUnit_{n})$. 
    More precisely, for $\BC \in \base(\QC)$, the vector $c_{\BC}$ and the step-function $\TC_{\BC}(y)$ are given by the formulas:
    \begin{enumerate}
        \item $c_{\BC} = A_{\BC}^{-\top} c$, where $c \in \ZZ^n$ is some fixed integer vector;
        \item $\TC_{\BC}(y) = \bigl\lfloor A_{\BC} \VC_{\BC}(y) \bigr\rfloor$, where $\VC_{\BC}(y) = A_{\BC}^{-1} (b_{\BC} - B_{\BC} y)$ is the parametric vertex of $\PC$, corresponding to the base $\BC$. Consequently, $\TC_{\BC}(y) = \bigl\lfloor b_{\BC} - B_{\BC} y \bigr\rfloor$.
    \end{enumerate}

    \item Assume that $n_y$-dimensional faces and $(n_y-1)$-dimensional faces of $\PC$ are given.
%, the $0$-dimensional faces (vertices). 
Here, we assume that each face $\FC$ of $\PC$ is uniquely determined by a set of inequalities, which become equalities on $\FC$. Then, the function $f$ can be computed with the arithmetic cost 
$$
O^*\bigl((f_{n_y-1})^{n_y} \cdot f_{n_y} + (f_{n_y-1})^{2n_y} \cdot (f_{n_y-1} + \mu^2 \cdot \Delta^3)\bigr).
$$

\item The length and degree of the resulting piece-wise periodic step-polynomial is bounded by $\mu \cdot (n_x+1)$ and $n_x$, respectively.  Having such a representation, queries to $f$ can be performed with the cost of
$$
O\Bigl( n_y \cdot f_{n_y-1} +  \mu \cdot n_x \cdot \bigl(\log(\Delta) + n_y\bigr) \Bigr) \quad \text{operations.}
$$
\end{enumerate}
\end{theorem}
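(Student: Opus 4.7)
The plan is to combine three ingredients: the chamber decomposition of $\Pi(\PC)$ supplied by the companion Theorem \ref{chamber_decomp_th}, the Brion identity on tangent cones of bounded polyhedra in the valuation algebra of Subsection \nameref{valuations_subs}, and the Smith Normal Form factorization $A_{\BC} = P_{\BC} S_{\BC} Q_{\BC}$ of each basis submatrix. Fix a chamber $\QC \in \QS$ and $y \in \relint(\QC)$; by the companion theorem, $\PC_y$ has fixed combinatorial type there, and its vertices are exactly the affine functions $\VC_{\BC}(y) = A_{\BC}^{-1}(b_{\BC} - B_{\BC} y)$ indexed by $\BC \in \base(\QC)$.

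First I would apply Brion's theorem in the valuation algebra to write, for $y$ with $\PC_y$ bounded,
\begin{equation*}
    \QEnum(y) \;=\; \sum_{\BC \in \base(\QC)} \bigl| \tcone(\VC_{\BC}(y), \PC_y) \cap \ZZ^{n_x} \bigr|,
\end{equation*}
interpreted as the image under the integer-point-counting homomorphism; each tangent cone is the shifted simplicial cone $\VC_{\BC}(y) + \{x \in \RR^{n_x} : A_{\BC} x \leq 0\}$. Chambers on which some $\PC_y$ is unbounded are handled by the auxiliary lemmas of Subsection \nameref{unbouded_subs}, which either discard the unbounded bases from the sum or extend the identity by continuity inside the valuation ring, so that the formula above remains meaningful on the boundary chambers of $\Pi(\PC)$.

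Next, for a fixed base $\BC$ I would count lattice points in the shifted simplicial cone. Changing coordinates via $Q_{\BC}^{-1}$ turns the cone into a shifted orthant, and the lattice $\ZZ^{n_x}$ into a sublattice of index $\det(S_{\BC}) \leq \Delta$. It follows that the count depends on the shift $\VC_{\BC}(y)$ only through the integer vector $A_{\BC} \VC_{\BC}(y) = b_{\BC} - B_{\BC} y$, rounded down componentwise to $\TC_{\BC}(y) = \lfloor b_{\BC} - B_{\BC} y \rfloor$, and periodically in $P_{\BC} \TC_{\BC}(y)$ with period matrix $S_{\BC}$; this produces the factor $\pi_{\BC,k}(P_{\BC} \TC_{\BC}(y))$ in \eqref{counting_func_repr_eq}. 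To extract the polynomial dependence on the vertex, I would expand the Brion--Vergne generating function in a generic integer direction $c$ whose pullback $c_{\BC} = A_{\BC}^{-\top} c$ produces the linear form $\langle c_{\BC}, \TC_{\BC}(y)\rangle$; extracting the coefficient of degree $k$ in its Taylor expansion yields the $k$-th summand.

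The main obstacle will be the length bound $\mu \cdot (n_x+1)$: one cannot afford to process each base $\BC \in \base(\QC)$ independently, since $|\base(\QC)|$ can be much larger than $\mu$. Instead, the bases must be organized through a single triangulation of $\cone(A^\top)$ of size at most $\mu$ (Definition \ref{triang_def} and Subsection \nameref{vdtt_subs}), and one must verify that the SNF-based periodicity is preserved under this global triangulation. Once this is in place, the per-chamber cost $O^*(\mu^2 \cdot \Delta^3)$ follows from an SNF computation per base plus the triangulation cost, and the overall complexity combines this with the chamber-enumeration bound of Theorem \ref{chamber_decomp_th}. The query bound follows by locating $y$ among the chambers in $O(n_y \cdot f_{n_y-1})$ time and evaluating the $O(\mu \cdot n_x)$ terms, each requiring $O(\log(\Delta) + n_y)$ work to look up a periodic coefficient and evaluate the affine step-function.
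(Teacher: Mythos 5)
Your architecture (chambers from Theorem \ref{chamber_decomp_th}, Brion in the valuation algebra, SNF of basis submatrices, generating-function expansion in a generic direction $c$) matches the paper's, but there is a concrete incorrect step and, as a consequence, a misdiagnosis of where the triangulation is needed.

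The step that fails is the claim that ``each tangent cone is the shifted simplicial cone $\VC_{\BC}(y) + \{x : A_{\BC} x \leq 0\}$.'' This is only true when the vertex $\VC_{\BC}(y)$ is \emph{simple}, i.e.\ when exactly $n_x$ inequalities of $Ax \leq b + By$ are tight there. At a degenerate vertex, more than $n_x$ inequalities are tight, $\tcone\bigl(\PC_y,\VC_{\BC}(y)\bigr) = \{x : A_{\JC *} x \leq b_{\JC}\}$ with $\JC \supsetneq \BC$, and this cone is a proper subcone of $\VC_{\BC}(y) + \{x : A_{\BC} x \leq 0\}$. Your Brion sum would therefore overcount. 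The paper resolves this by passing to the polar cone $\cone(A_{\JC}^\top)$, triangulating it into simplicial cones $\cone(A_{\BC}^\top)$ for bases $\BC \subseteq \JC$ (the ``duality trick'' of Remark \ref{fcone_generator_rm} and Theorem \ref{linear_map_eval_th}), and only then dualizing each piece back to a half-open simplicial cone $\PC(A_{\BC}, \cdot)$. So the triangulation of the normal fan is not an optimization to control the length of the step-polynomial; it is what makes the representation \eqref{counting_func_repr_eq} correct at all. Relatedly, your worry that $|\base(\QC)|$ can be ``much larger than $\mu$'' is misplaced: in the chamber decomposition each parametric vertex is indexed by a single base, so $|\base(\QC)| \leq \nu \leq \mu$; the source of the factor $\mu$ is that non-simple tangent cones split into several simplicial pieces, whose total number across all vertices is the size of a triangulation of $\cone(A^\top)$.

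Two smaller mismatches. First, you invoke the unbounded-case lemmas (Subsection \nameref{unbouded_subs}), but they are not used in this proof: the recession cone $\{Ax \leq 0\}$ of $\PC_y$ does not depend on $y$, so the hypothesis that $\PC_y$ is bounded for one $y \in \Pi(\PC)$ forces boundedness for all $y \in \Pi(\PC)$; those lemmas are invoked later, in Theorems \ref{fixed_k_main_th} and \ref{fixed_ny_main_th1}. Second, you omit the dimension-reduction step the paper needs for chambers on which $\dim(\PC_y) < n_x$ (a substitution via the HNF of a tight row reduces to an integrally equivalent system in $n_x - 1$ variables with the same $\Delta$); without it, the Brion/generating-function machinery, which assumes a full-dimensional polytope, does not directly apply on boundary chambers.
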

A theorem's proof is given in Section \ref{main_param_th_proof}. Its important part is a computation of the corresponding chamber decomposition of $\PC$. As it was already noted, the previous works only give algorithms to compute full-dimensional chambers, because it is sufficient for all the applications so far. However, for our needs, we need the full chamber decomposition of $\PC$, according to Definition \ref{chamber_decomp_def}. Since our algorithm is new, and, perhaps, it has an independent interest, we emphasize it to a separate theorem:
\begin{theorem}\label{chamber_decomp_th}
Let $\PC$ be a polyhedron, defined by \ref{standard_form} with $\rank(A) = n_x$ and $\dim(\PC) = n_x + n_y$. Assume that $n_y$-dimensional faces and $(n_y-1)$-dimensional faces of $\PC$ are given by lists of inequalities, which become equalities on a corresponding face. Then, the chamber decomposition of $\PC$ can be computed by an algorithm with arithmetic complexity bound:
$$
O^*\Bigl((f_{n_y-1})^{n_y} \cdot f_{n_y} + (f_{n_y-1})^{2n_y} \cdot (\nu + f_{n_y-1} )\Bigr).
$$ The total number of chambers is bounded by $O\bigl((f_{n_y-1})^{2 n_y}\bigr)$, the number of chambers of the dimension $n_y$ is bounded by $O\bigl((f_{n_y-1})^{n_y}\bigr)$. For a given point $y \in \Pi(\PC)$, the corresponding chamber $\QC$ with $y \in \relint(\QC)$ can be found with
$$
O(n_y \cdot f_{n_y-1})\quad\text{operations.}
$$ 
\end{theorem}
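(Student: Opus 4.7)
The plan is to reduce the construction of the chamber decomposition to computing a hyperplane arrangement in the parameter space $\RR^{n_y}$ and then attaching to each cell the associated set of bases of $A$. First, for each given $(n_y-1)$-dimensional face $\FC_i$ of $\PC$, I form the hyperplane $H_i \subset \RR^{n_y}$ obtained as the affine hull of $\Pi(\FC_i)$. The family $\{H_1,\dots,H_{f_{n_y-1}}\}$ defines an arrangement $\AC$ in $\RR^{n_y}$, and the given $n_y$-dimensional faces of $\PC$ supply an outer description of $\Pi(\PC)$. The key geometric fact is that an $(n_y-1)$-dimensional face of $\PC$ is precisely the locus where $n_x+1$ inequalities of the defining system are simultaneously tight, i.e.\ the condition for an edge of $\PC_y$ to collapse, so the set $\pvertex(\PC_y)$ and the combinatorial type of $\PC_y$ remain constant as $y$ varies inside a connected component of $\Pi(\PC) \setminus \bigcup_i H_i$. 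Consequently, the relative interiors of the cells of $\AC$ lying inside $\Pi(\PC)$ form a chamber decomposition in the sense of Definition \ref{chamber_decomp_def}.

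I would then compute $\AC$ by an incremental sweep that inserts the $h := f_{n_y-1}$ hyperplanes one-by-one. The $n_y$-dimensional cells are bounded in number by the standard arrangement count $O(h^{n_y})$; matching each of them to one of the $f_{n_y}$ given top-dimensional faces of $\PC$ via a single membership test per face contributes time $f_{n_y}$ per cell, giving the first summand $(f_{n_y-1})^{n_y} \cdot f_{n_y}$. Lower-dimensional cells are enumerated by iterating over ordered tuples of the $H_i$'s of length up to $n_y$ that support them, producing the upper bound $O(h^{2n_y})$ on the total number of chambers.

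For every chamber $\QC$ encountered, I would select a rational interior point $y_0 \in \relint(\QC)$, enumerate the at most $\nu$ vertices of $\PC_{y_0}$, and, from the tight rows at each vertex, read off the corresponding base $\BC \in \base(\QC)$. Confirming which cell $y_0$ falls into requires $O(h)$ sign tests against the $H_i$, which accounts for the inner factor $\nu + f_{n_y-1}$ and yields the second summand. For a query $y$, evaluating the $h$ linear forms defining the $H_i$ takes $O(n_y)$ operations per form and produces a sign vector that uniquely identifies the chamber containing $y$ via a precomputed table, giving the query bound $O(n_y \cdot f_{n_y-1})$.

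The main obstacle is a rigorous proof of the separation property: that no combinatorial transition of $\PC_y$ occurs unless $y$ crosses some $H_i$. This reduces to showing that every change in $\base(\QC)$ as $y$ moves corresponds to an edge collapse of $\PC_y$, and hence to the simultaneous activity of $n_x+1$ inequalities, which forces $y$ to lie on the projection of an $(n_y-1)$-dimensional face of $\PC$. Here the hypotheses $\rank(A) = n_x$ and $\dim(\PC) = n_x + n_y$ are used to guarantee that parametric vertices are uniquely determined by their bases and that generic projections have full expected dimension. Two further technical points are handling degenerate projections (an $(n_y-1)$-dimensional face whose projection drops dimension contributes no hyperplane, and the adjacent cell structure must be adjusted) and justifying the $O((f_{n_y-1})^{2n_y})$ total-chamber bound uniformly without general-position assumptions on the arrangement.
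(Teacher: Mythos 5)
Your overall plan — form the hyperplane arrangement $\HS$ in $\RR^{n_y}$ from the projections of the $(n_y-1)$-dimensional faces, read off the full-dimensional chambers as the top cells, bound the total number of chambers by $O\bigl((f_{n_y-1})^{2n_y}\bigr)$, and answer queries via sign vectors — is the same architecture the paper uses. The paper, however, does not build the arrangement from scratch: it first invokes Clauss \& Loechner's algorithm to produce the full-dimensional chambers $\DS$ together with the lists of parametric vertex functions $\TS_{\DC}$, and only then forms $\HS$ from the facets of those chambers. By choosing to rebuild the full-dimensional layer yourself, you are forced to prove the ``separation property'' that you flag as the main obstacle; the paper simply inherits it from the cited prior work. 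That is a real gap in your argument as written, not a cosmetic one: until you have established that no transition in $\base(\QC)$ can occur without crossing a member of $\HS$, you have not shown that the arrangement cells satisfy items 3 and 4 of Definition \ref{chamber_decomp_def}.

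The more serious omission is the handling of lower-dimensional chambers. Your step ``select a rational interior point $y_0 \in \relint(\QC)$, enumerate the at most $\nu$ vertices of $\PC_{y_0}$, and, from the tight rows at each vertex, read off the corresponding base'' is where all the hard work in the paper's proof actually lives. On a positive-codimension cell $\QC = \DC\cap\LC$, a vertex of $\PC_{y_0}$ is typically degenerate: several distinct bases $\BC_1 \ne \BC_2$ define affine maps $\VC_{\BC_1}, \VC_{\BC_2}$ that agree on $\affh(\QC)$ but disagree off it. Definition \ref{chamber_decomp_def} asks for a set $\base(\QC)$ that parameterizes the vertices \emph{bijectively} for every $y\in\relint(\QC)$, so you must detect and discard these duplicated affine maps. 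The paper does this by writing $\LC = \linh(B)$ for a basis matrix $B$ and observing that $\VC_{\BC_1}$ and $\VC_{\BC_2}$ coincide on $\LC$ iff $T_1 B = T_2 B$ and $t_1 = t_2$, which also gives the sorting-based deduplication of the chambers themselves, and it resolves the dimension-drop question (your ``degenerate projections'' point) with the explicit criterion in \eqref{k_dim_criteria}, applied inductively from $(k-1)$-dimensional to $k$-dimensional subspaces. None of this appears in your sketch, yet it is exactly what the $\nu$-dependent term $(f_{n_y-1})^{2n_y}\cdot\nu$ in the complexity bound is paying for; ``enumerate the at most $\nu$ vertices and read off a base'' neither achieves that bound (vertex enumeration is not free) nor produces a deduplicated $\base(\QC)$. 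Filling these two holes would bring your proof into line with the paper's.
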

A theorem's proof is given in Section \ref{chamber_decomp_proof}. 

\subsection{Connection with the rational and integer Ehrhart's quasi-polynomials}\label{quasipoly_subs}

In this Subsection, we are going to show that the new piece-wise periodic step-polynomial and Ehrhart's piece-wise quasi-polynomial representations are closely connected. Moreover, the second one can be computed using the first. 

Consider the formula \eqref{counting_func_repr_eq} for a fixed chamber $\QC \in \QS$, and denote $\psi_{\BC}(y) = \bigl\langle c_{\BC}, b_{\BC} - \{ b_{\BC} - B_{\BC} y \} \bigr\rangle$. Clearly, $\bigl\langle c_{\BC}, \TC_{\BC}(y) \bigr\rangle = \psi_{\BC}(y) - \bigl\langle c_{\BC}, B_{\BC} y \bigr\rangle$. Substituting the last expression to \eqref{counting_func_repr_eq}, we have
\begin{multline}
    \QEnum(y) = \sum\limits_{\BC \in \base(\QC)} \sum\limits_{k = 0}^{n_x} \pi_{\BC,k}\bigl(P_{\BC}\TC_{\BC}(y)\bigr) \cdot \Bigl(\psi_{\BC}(y) - \bigl\langle c_{\BC}, B_{\BC} y \bigr\rangle\Bigr)^k = \\
    = \sum\limits_{\BC \in \base(\QC)} \sum\limits_{k = 0}^{n_x} \sum\limits_{i = 0}^k \pi_{\BC,k}\bigl(P_{\BC}\TC_{\BC}(y)\bigr) \binom{k}{i} \psi^{k-i}_{\BC}(y) \bigl\langle c_{\BC}, -B_{\BC} y \bigr\rangle^i = \\
    = \sum\limits_{i = 0}^{n_x} \sum\limits_{\BC \in \base(\QC)} \bigl\langle c_{\BC}, -B_{\BC} y \bigr\rangle^i \cdot \sum\limits_{k = i}^{n_x} \pi_{\BC,k}\bigl(P_{\BC}\TC_{\BC}(y)\bigr) \binom{k}{i} \psi^{k-i}_{\BC}(y) = \\
    = \sum\limits_{i = 0}^{n_x} \sum\limits_{\BC \in \base(\QC)} \bar \pi_{\BC,i}(y) \cdot \bigl\langle c_{\BC}, -B_{\BC} y \bigr\rangle^i,\label{period_quasipoly_coeff_eq} 
\end{multline}
where $\bar \pi_{\BC,i}(y) = \sum\limits_{k = i}^{n_x} \pi_{\BC,k}\bigl(P_{\BC}\TC_{\BC}(y)\bigr) \binom{k}{i} \psi^{k-i}_{\BC}(y)$.

Clearly, the expression $\bigl\langle c_{\BC}, - B_{\BC} y \bigr\rangle^i$ forms homogeneous polynomials on $y$. The next lemma shows that the coefficients $\bar \pi_{\BC,i}(y)$ are periodic functions. Consequently, the formula \eqref{period_quasipoly_coeff_eq} forms a quasi-polynomial.

\begin{lemma}\label{period_quasipoly_coeff_lm}
The following propositions hold:
    \begin{enumerate}
        \item The vectors $\chdenom_{\ZZ}(\QC) \cdot \BUnit$ and $\chdenom_{\QQ}(\QC) \cdot \BUnit$ could be chosen as period-vectors of $\bar \pi_{\BC,i}$;

        \item Let $y \in \QC$ and $z \in \relint(\QC)$. Denote $q = \lcm\bigl(\denom_{\QQ}(\PC_{y}), \denom_{\QQ}(\PC_{z})\bigr)$. Then, $\bar \pi_{\BC,i}(y) = \bar \pi_{\BC,i}\bigl(y + q \cdot (z-y)\bigr)$. 
    \end{enumerate}
\end{lemma}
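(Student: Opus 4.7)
Both parts of the lemma will follow from a single invariance principle. Call a shift $\delta \in \RR^{n_y}$ \emph{admissible} for the base $\BC$ if
\begin{equation*}
    B_\BC \delta \in \ZZ^{n_x} \quad\text{and}\quad P_\BC B_\BC \delta \in \inth(S_\BC).
\end{equation*}
The plan is first to show that $\bar\pi_{\BC,i}$ is invariant under any admissible shift, and then to verify that the specific shifts appearing in parts (1) and (2) are admissible.

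The invariance step is short. The integrality $B_\BC\delta \in \ZZ^{n_x}$ gives
$
\TC_\BC(y+\delta) = \lfloor b_\BC - B_\BC y - B_\BC\delta\rfloor = \TC_\BC(y) - B_\BC\delta,
$
so the fractional parts of $b_\BC - B_\BC y$ are preserved, whence $\psi_\BC(y+\delta) = \psi_\BC(y)$ from the definition of $\psi_\BC$. Next, $P_\BC\TC_\BC(y+\delta) = P_\BC\TC_\BC(y) - P_\BC B_\BC\delta$, and the condition $P_\BC B_\BC\delta \in \inth(S_\BC)$ combined with the $S_\BC$-periodicity of $\pi_{\BC,k}$ guaranteed by Theorem~\ref{main_param_th} yields $\pi_{\BC,k}\bigl(P_\BC\TC_\BC(y+\delta)\bigr) = \pi_{\BC,k}\bigl(P_\BC\TC_\BC(y)\bigr)$. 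Substituting both identities termwise into the defining sum for $\bar\pi_{\BC,i}$ gives $\bar\pi_{\BC,i}(y+\delta) = \bar\pi_{\BC,i}(y)$.

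For admissibility, the uniform workhorse is the implication: \emph{if $A_\BC^{-1} B_\BC\delta \in \ZZ^{n_x}$, then $\delta$ is admissible.} Multiplication by the integer matrix $A_\BC$ immediately gives $B_\BC\delta \in \ZZ^{n_x}$, while the SNF factorization $A_\BC = P_\BC S_\BC Q_\BC$ together with unimodularity of $Q_\BC$ translates $A_\BC^{-1} B_\BC\delta \in \ZZ^{n_x}$ into the required membership $P_\BC B_\BC\delta \in S_\BC \ZZ^{n_x} = \inth(S_\BC)$ after clearing the unimodular factors. For part (1), take $\delta = q e_j$ with $q \in \{\chdenom_\ZZ(\QC),\,\chdenom_\QQ(\QC)\}$; applied at the lattice points $0,\, e_j \in \ZZ^{n_y}$, the chamber denominator definition gives $q\VC_\BC(0),\,q\VC_\BC(e_j) \in \ZZ^{n_x}$, so $A_\BC^{-1} B_\BC\delta = q\bigl(\VC_\BC(0)-\VC_\BC(e_j)\bigr) \in \ZZ^{n_x}$; varying $j$ identifies $q\cdot\BUnit$ as a period-vector. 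For part (2), take $\delta = q(z-y)$; provided $\VC_\BC(y)$ and $\VC_\BC(z)$ are genuine vertices of $\PC_y$ and $\PC_z$, they are integralized by $\denom_\QQ(\PC_y)$ and $\denom_\QQ(\PC_z)$, hence by their lcm $q$, which gives $A_\BC^{-1} B_\BC\delta = -q\bigl(\VC_\BC(z)-\VC_\BC(y)\bigr) \in \ZZ^{n_x}$.

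The one delicate point I expect is the boundary case in part (2): when $y$ lies on the relative boundary of $\QC$ rather than its interior, $\VC_\BC(y)$ is still well-defined by the affine formula attached to $\base(\QC)$, but it is not immediate that it remains a vertex of $\PC_y$ and hence subject to the rescaling by $\denom_\QQ(\PC_y)$. I would settle this by a continuity argument: choose $y_n \to y$ with $y_n \in \relint(\QC)$, so that each $\VC_\BC(y_n) \in \vertex(\PC_{y_n})$; continuity of the affine map $\VC_\BC$ and closedness of $\PC_y$ place $\VC_\BC(y) \in \PC_y$, and the $n_x$ rows of $A$ indexed by $\BC$ stay tight in the limit, so $\VC_\BC(y) \in \vertex(\PC_y)$, legitimizing the use of $\denom_\QQ(\PC_y)$ in the admissibility verification above.
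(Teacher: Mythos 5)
Your proposal is correct and follows essentially the same route as the paper: in both, the crux is the chain $\TC_\BC(y+\delta) = \TC_\BC(y) - B_\BC\delta$, which preserves fractional parts (so $\psi_\BC$ is invariant) and yields a shift by $S_\BC t'$ with integer $t'$ after conjugating by the SNF factors (so $\pi_{\BC,k}$ is invariant by its $S_\BC$-periodicity), followed by verifying the hypotheses for the two specific shifts via the denominator definitions. Your reorganization into a single "admissible shift" criterion factors out what the paper writes twice, and your closing continuity argument makes explicit a boundary-case point that the paper covers more tersely with the remark that $\PC_y$ and $\PC_z$ share the same parametric vertices when $y\in\QC$, $z\in\relint(\QC)$; this is a tidier presentation but the mathematical content is the same.
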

\begin{proof}
    Denote $q = \chdenom_{\QQ}(\QC)$. Let us prove the first proposition. More precisely, we claim that any of the vectors $q \cdot t$, for $t \in \ZZ^{n_y}$, can be chosen as a period-vector of both functions. By definition of $\chdenom_{\QQ}(\QC)$, for any $\BC \in \base(\QC)$, we have 
    $$
    q \cdot A_{\BC}^{-1}(b_{\BC} - B_{\BC}  t) \in \ZZ^{n_x},\quad \text{for any $t \in \ZZ^{n_y}$}.
    $$ The last fact is possible if and only if $q \cdot A_{\BC}^{-1} b_{\BC} \in \ZZ^{n_x}$ and $q \cdot A_{\BC}^{-1} B_{\BC} t \in \ZZ^{n_x}$. Note, additionally, that $q \cdot B_{\BC} t \in \ZZ^{n_x}$. Therefore,
    \begin{multline*}
        \pi_{\BC,k}\bigl(P_{\BC}\TC_{\BC}(y + q t)\bigr) = \pi_{\BC,k}\bigl(P_{\BC} \lfloor b_{\BC} - B_{\BC} y \rfloor - q P_{\BC} B_{\BC} t \bigr) =\\ 
        = \pi_{\BC,k}\bigl(P_{\BC} \TC_{\BC}(y) - q P_{\BC} A_{\BC} A_{\BC}^{-1} B_{\BC} t \bigr) = \pi_{\BC,k}\bigl(P_{\BC} \TC_{\BC}(y) - S_{\BC} t' \bigr),
    \end{multline*}
    where $t' = q Q_{\BC}^{-1} A_{\BC}^{-1} B_{\BC} t$ is an integer vector, because $q \cdot A_{\BC}^{-1} B_{\BC} t \in \ZZ^{n_x}$ and $Q_{\BC}$ is unimodular. Since $t' \cdot S_{\BC}$ is a multi-period of $\pi_{\BC,k}$, the equality $\pi_{\BC,k}\bigl(P_{\BC}\TC_{\BC}(y + q \cdot t)\bigr) = \pi_{\BC,k}\bigl(P_{\BC}\TC_{\BC}(y)\bigr)$ holds. To finish the proof of the first proposition, we have left to show that $\psi_{\BC}(y + q \cdot t) = \psi_{\BC}(y)$. Definitely, due to the definition of $\psi_{\BC}(y)$, we just need to establish the equality $\{b_{\BC} - B_{\BC} (y + q \cdot t)\} = \{b_{\BC} - B_{\BC} y \}$, which holds since $q \cdot B_{\BC} t \in \ZZ^{n_x}$.

    Let us prove the second proposition. Since $y \in \QC$ and $z \in \relint(\QC)$, the polyhedra $\PC_y$ and $\PC_{z}$ have the same set of parametric vertices. Denote $t = z - y$. By definition of $q$, we have
    \begin{gather*}
    q \cdot A_{\BC}^{-1} \bigl(b_{\BC} - B_{\BC}y \bigr) \in \ZZ^{n_x},\\
    q \cdot A_{\BC}^{-1} \bigl(b_{\BC} - B_{\BC}z\bigr) \in \ZZ^{n_x}, \quad \text{for any $\BC \in \base(\QC)$}.
    \end{gather*}
    Consequently,
    \begin{gather*}
    q \cdot A_{\BC}^{-1} B_{\BC} t \in \ZZ^{n_x},\\
    q \cdot B_{\BC} t \in \ZZ^{n_x}, \quad \text{for any $\BC \in \base(\QC)$}.
    \end{gather*}
    Therefore, the same chain of reasoning can be used in a proof of the first proposition: 
    \begin{multline*}
        \pi_{\BC,k}\bigl(P_{\BC}\TC_{\BC}(y + q t)\bigr) = \pi_{\BC,k}\bigl(P_{\BC} \lfloor b_{\BC} - B_{\BC} y \rfloor - q P_{\BC} B_{\BC} t \bigr) =\\ 
        = \pi_{\BC,k}\bigl(P_{\BC} \TC_{\BC}(y) - q P_{\BC} A_{\BC} A_{\BC}^{-1} B_{\BC} t \bigr) = \pi_{\BC,k}\bigl(P_{\BC} \TC_{\BC}(y) - S_{\BC} t' \bigr) = \\
        = \pi_{\BC,k}\bigl(P_{\BC}\TC_{\BC}(y)\bigr).
    \end{multline*}
    A proof of the equality $\psi_{\BC}(y + q \cdot t) = \psi_{\BC}(y)$ is also completely similar.
\end{proof}

Therefore, we have proven the following theorem, which gives a direct generalization of Theorem \ref{Q_multi_Ehrh_th}, due to Henk \& Linke, modulo that we cannot say anything about the derivatives of $a_{\jB}$, because of the discrete nature of $\bar \pi_{\BC, k}$.
\begin{theorem}\label{my_real_Ehr_th}
    Let $\PC$ be a polyhedron, defined by \ref{canonical_form}. Then, there exists a chamber decomposition $\QS$ of $\PC$, such that, for any fixed $\QC \in \QS$ that corresponds to a bounded $\PC_{y}$ and $y \in \relint(\QC)$, the function $\QEnum$ can be  represented by a quasi-polynomial of degree $n_x$:
    $$
    \QEnum(y) = \sum\limits_{\substack{\jB \in \ZZ_{\geq 0}^{n_y}\\
\|\jB\|_1 \leq n_x}} a_{\jB}(y) \cdot y^{\jB}.
    $$ Additionally, the following propositions hold:
    \begin{enumerate}
        \item The vector $\chdenom_{\QQ}(\QC) \cdot \BUnit$ can be chosen as a period-vector of all $a_{\jB}$;

        \item For any $y \in \QC$ and $z \in \relint(\QC)$, if $q = \lcm\bigl(\denom_{\QQ}(\PC_{y}), \denom_{\QQ}(\PC_{z})\bigr)$, then $a_{\jB}(y) = a_{\jB}\bigl(y + q \cdot (z-y)\bigr)$ is true.
    \end{enumerate}
\end{theorem}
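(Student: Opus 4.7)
The plan is to derive Theorem \ref{my_real_Ehr_th} as a direct consequence of the piece-wise periodic step-polynomial representation from Theorem \ref{main_param_th}, together with the algebraic manipulation already set up in formula \eqref{period_quasipoly_coeff_eq} and the periodicity statements of Lemma \ref{period_quasipoly_coeff_lm}. Essentially, the hard technical work has been done in advance; what remains is to collect terms into the desired quasi-polynomial shape and verify that nothing is lost in the regrouping.

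First, I would fix a chamber $\QC \in \QS$ on which $\PC_y$ is bounded and invoke Theorem \ref{main_param_th} to write $\QEnum(y)$ on $\relint(\QC)$ as the periodic step-polynomial given by \eqref{counting_func_repr_eq}. The decisive step is to split the step-function as $\TC_{\BC}(y) = (b_{\BC} - B_{\BC} y) - \{b_{\BC} - B_{\BC} y\}$, so that $\langle c_{\BC}, \TC_{\BC}(y)\rangle = \psi_{\BC}(y) - \langle c_{\BC}, B_{\BC} y\rangle$ decomposes into a bounded periodic piece plus an affine (in fact linear modulo constants) piece in $y$. Then the binomial expansion of $\langle c_{\BC}, \TC_{\BC}(y)\rangle^k$ and a swap of summation order, carried out verbatim as in \eqref{period_quasipoly_coeff_eq}, produce
\begin{equation*}
\QEnum(y) \;=\; \sum_{i=0}^{n_x} \sum_{\BC \in \base(\QC)} \bar{\pi}_{\BC,i}(y) \cdot \bigl\langle c_{\BC}, -B_{\BC} y \bigr\rangle^{i}.
\end{equation*}

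Second, I would expand each factor $\langle c_{\BC}, -B_{\BC} y\rangle^{i}$ multinomially in the coordinates of $y$ and collect terms of like monomial $y^{\jB}$. Since $\|\jB\|_1 = i$ in each such monomial and $i$ ranges from $0$ to $n_x$, the resulting expression is a polynomial in $y$ of total degree at most $n_x$, with coefficients of the form
\begin{equation*}
a_{\jB}(y) \;=\; \sum_{\BC \in \base(\QC)} \lambda_{\BC,\jB} \cdot \bar{\pi}_{\BC,\|\jB\|_1}(y),
\end{equation*}
where the constants $\lambda_{\BC,\jB} \in \QQ$ are pulled out of the multinomial expansion of $\langle c_{\BC}, -B_{\BC} y \rangle^{\|\jB\|_1}$ and hence do not depend on $y$.

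Third, the two periodicity statements transfer immediately from $\bar{\pi}_{\BC,i}$ to $a_{\jB}$: property (1) follows from part (1) of Lemma \ref{period_quasipoly_coeff_lm}, since a finite $\QQ$-linear combination of functions sharing the period-vector $\chdenom_{\QQ}(\QC) \cdot \BUnit$ keeps that period-vector; property (2) similarly descends from part (2) of the same lemma, because the constants $\lambda_{\BC,\jB}$ are unaffected by the shift $y \mapsto y + q \cdot (z - y)$. The main point that requires any care, rather than an obstacle, is to confirm that the polynomial part $\langle c_{\BC}, -B_{\BC} y\rangle^{i}$ is genuinely a \emph{polynomial} in $y$ (not merely a periodic function), so that the regrouping yields a bona fide quasi-polynomial in the sense of Definition \ref{quasipoly_def}; this is clear from the formula $c_{\BC} = A_{\BC}^{-\top} c$ given in Theorem \ref{main_param_th}, which shows $\langle c_{\BC}, -B_{\BC} y\rangle$ is an honest linear form in $y$.
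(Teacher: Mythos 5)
Your proposal is correct and follows the paper's own route essentially line for line: starting from Theorem \ref{main_param_th} and formula \eqref{counting_func_repr_eq}, splitting $\TC_{\BC}(y)$ into a periodic part $\psi_{\BC}$ and the affine form $-\langle c_{\BC}, B_{\BC} y\rangle$, binomially regrouping to obtain \eqref{period_quasipoly_coeff_eq}, expanding multinomially to extract $a_{\jB}(y)$ as in \eqref{aj_coeff_eq}, and transporting both periodicity claims from $\bar\pi_{\BC,i}$ to $a_{\jB}$ via Lemma \ref{period_quasipoly_coeff_lm}. No gaps; the constants $\lambda_{\BC,\jB}$ you introduce are precisely the multinomial coefficients $\binom{k}{j_1\,\dots\,j_{n_y}}(-c_{\BC}^{\top} B_{\BC})^{\jB}$ appearing in the paper's formula.
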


The formula \eqref{period_quasipoly_coeff_eq} can be used to establish the exact formula for $a_{\jB}(y)$, which can be used to estimate the computational complexity of $a_{\jB}(y)$. Definitely, due to \eqref{period_quasipoly_coeff_eq}:
\begin{multline*}
    \QEnum(y) = \sum\limits_{k = 0}^{n_x} \sum\limits_{\BC \in \base(\QC)} \bar \pi_{\BC,k}(y) \cdot \bigl\langle c_{\BC}, -B_{\BC} y \bigr\rangle^k =\\
    = \sum\limits_{k = 0}^{n_x} \sum\limits_{\substack{\jB \in \ZZ_{\geq 0}^{n_y} \\
    j_1 + \dots + j_{n_y} = k}} y^{\jB} \cdot \sum\limits_{\BC \in \base(\QC)} \binom{k}{j_1 \,\dots\, j_{n_y}} \cdot \bar \pi_{\BC,k}(y) \cdot (-c_{\BC}^{\top} B_{\BC})^{\jB}
\end{multline*}
\begin{multline}
    \text{Therefore,}\quad a_{\jB}(y) = \\
    = \sum\limits_{\BC \in \base(\QC)} \binom{k}{j_1 \,\dots\, j_{n_y}} \cdot \bar \pi_{\BC,k}(y) \cdot (-c_{\BC}^{\top} B_{\BC})^{\jB},\quad \text{where $k = j_1 + \dots + j_{n_y}$.} \label{aj_coeff_eq}
\end{multline}
In the next theorem, we estimate the computation complexity to evaluate the periodic coefficients $a_{\jB}(y)$.

\begin{theorem}\label{my_real_Ehr_comp_th}
    Assume that all the conditions of Theorem \ref{my_real_Ehr_th} are satisfied. Then, there exists a preprocessing algorithm with the arithmetic complexity 
    $$
    O^*\bigl((f_{n_y-1})^{n_y} \cdot f_{n_y} + (f_{n_y-1})^{2n_y} \cdot (f_{n_y-1} + \mu^2 \cdot \Delta^3)\bigr),
    $$ such that, for any $\QC \in \QS$, $\jB$, and $y \in \relint(\QC)$, the value $a_{\jB}(y)$ can be computed with
    $$
    O\Bigl(\mu \cdot n_x \cdot \bigl( \log(\Delta) + n_y\bigr)\Bigr) \quad\text{operations.}
    $$ The corresponding chamber $\QC$ can be found with $O\bigl(n_y \cdot f_{n_y - 1}\bigr)$ operations.
\end{theorem}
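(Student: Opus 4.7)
My plan is to reduce the statement almost entirely to the machinery already assembled in Theorem \ref{main_param_th}, Theorem \ref{chamber_decomp_th}, and the explicit coefficient formula \eqref{aj_coeff_eq}; what will remain is a careful arithmetic bookkeeping, nothing conceptually new. For the preprocessing phase, I would simply invoke Theorem \ref{main_param_th}, which in exactly the claimed time $O^*\bigl((f_{n_y-1})^{n_y}\cdot f_{n_y} + (f_{n_y-1})^{2n_y}\cdot(f_{n_y-1}+\mu^2\cdot\Delta^3)\bigr)$ produces the chamber decomposition $\QS$ together with, for every $\QC\in\QS$ and every $\BC\in\base(\QC)$, the unimodular factors $P_{\BC},Q_{\BC}$, the Smith form $S_{\BC}$, the vector $c_{\BC}\in\QQ^{n_x}$, the affine step-function $\TC_{\BC}$, and a table of values of the periodic functions $\pi_{\BC,0},\ldots,\pi_{\BC,n_x}$ over a system of coset representatives of $\ZZ^{n_x}/\inth(S_{\BC})$. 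This matches the preprocessing bound verbatim.

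For the query phase, given $(y,\jB)$ with $k=\|\jB\|_1$, I would first locate the chamber $\QC$ containing $y$ using Theorem \ref{chamber_decomp_th} in $O(n_y\cdot f_{n_y-1})$ arithmetic operations, and then evaluate
\[
a_{\jB}(y)=\sum_{\BC\in\base(\QC)}\binom{k}{j_1\,\dots\,j_{n_y}}\cdot\bar\pi_{\BC,k}(y)\cdot(-c_{\BC}^{\top}B_{\BC})^{\jB}
\]
term by term. By the length bound in Theorem \ref{main_param_th}, the sum has at most $\mu$ summands, and for each base $\BC$ the plan is: (i) form $\TC_{\BC}(y)=\lfloor b_{\BC}-B_{\BC}y\rfloor$ in $O(n_x\cdot n_y)$; (ii) index the tabulated $\pi_{\BC,j}$ by reducing $P_{\BC}\TC_{\BC}(y)$ modulo the diagonal $S_{\BC}$ coordinate-wise in $O(n_x\log\Delta)$; (iii) assemble $\bar\pi_{\BC,k}(y)=\sum_{j=k}^{n_x}\pi_{\BC,j}(P_{\BC}\TC_{\BC}(y))\binom{j}{k}\psi_{\BC}^{j-k}(y)$ in $O(n_x)$ extra operations using $\psi_{\BC}(y)=\langle c_{\BC},b_{\BC}-\{b_{\BC}-B_{\BC}y\}\rangle$; (iv) compute the monomial $(-c_{\BC}^{\top}B_{\BC})^{\jB}$ in $O(n_y)$ by taking $n_y$ inner products followed by a power-product. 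Summing over the $\mu$ bases gives the promised query cost $O\bigl(\mu\cdot n_x\cdot(\log\Delta+n_y)\bigr)$.

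The theorem's correctness is already supplied by Lemma \ref{period_quasipoly_coeff_lm} and the identity \eqref{aj_coeff_eq}, so no fresh analytic content is needed. The only delicate point that I expect to have to justify carefully is step (ii): the target bound $O(n_x\log\Delta)$ per base for the periodic table lookup relies on $S_{\BC}$ being diagonal with $\det S_{\BC}=|\det A_{\BC}|\leq\Delta$, so that each of the $n_x$ coordinate reductions costs $O(\log\Delta)$; the multinomial coefficients $\binom{k}{j_1\,\dots\,j_{n_y}}$ with $k\le n_x$ may be assumed tabulated during preprocessing and are absorbed into the $O^*(\cdot)$ term. Everything else is a routine arithmetic audit of the formula, which is why I expect no substantial obstacle beyond this bookkeeping.
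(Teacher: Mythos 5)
Your proof is correct and follows essentially the same route as the paper: invoke Theorem \ref{main_param_th} for the preprocessing bound, locate the chamber via Theorem \ref{chamber_decomp_th}, and then evaluate the explicit formula \eqref{aj_coeff_eq} base-by-base, with the periodic table lookup indexed through the SNF reduction. The one place you diverge is the handling of $(-c_{\BC}^{\top}B_{\BC})^{\jB}$: the paper precomputes these during preprocessing (together with the multinomial coefficients), while you compute them on the fly at query time. That choice is fine, but your step-(iv) estimate of $O(n_y)$ is off — forming $c_{\BC}^{\top}B_{\BC}$ already requires $n_y$ inner products of $n_x$-vectors, so the cost is $O(n_x\cdot n_y)$ per base, and the power-product adds another $O(n_x+n_y)$. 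Fortunately this slip does not hurt the final bound, since $O(n_x\cdot n_y)$ per base is already subsumed by the claimed $O\bigl(\mu\cdot n_x\cdot(\log\Delta+n_y)\bigr)$ after summing over the $\mu$ bases. Also note, as you correctly flag, that the $O(n_x\log\Delta)$ for the SNF reduction in step (ii) hinges on $S_{\BC}$ having at most $\log_2\Delta$ diagonal entries exceeding $1$, so only those coordinates of $P_{\BC}\TC_{\BC}(y)$ need to be formed; a naive $O(n_x^2)$ matrix-vector product would not fit the budget.
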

\begin{proof}
    To compute the value $a_{\jB}(y)$, we will use the formulas \eqref{period_quasipoly_coeff_eq} and \eqref{aj_coeff_eq}. First, we construct a piece-wise periodic polynomial representation of $\QEnum(y)$, using Theorem \ref{main_param_th}, with 
    \begin{equation}\label{my_real_Ehr_comp_eq1}
        O^*\bigl((f_{n_y-1})^{n_y} \cdot f_{n_y} + (f_{n_y-1})^{2n_y} \cdot (f_{n_y-1} + \mu^2 \cdot \Delta^3)\bigr).
    \end{equation}
     Additionally, we need to precompute the values $\binom{k}{j_1\,j_2\,\dots j_{n_y}}$, $\bigl(-c_{\BC}^{\top} B_{\BC}\bigr)^{\jB}$, where $\jB = \ZZ_{\geq 0}^{n_y}$, $j_1 + \dots + j_{n_y} = k$, $k \in \intint[0]{n_x}$, $\BC \in \base(\QC)$, and $\QC \in \QS$. It can be done with $O\bigl(N + |\QS| \cdot \mu \cdot  n_x \cdot n_y\bigr)$ operations, where $N$ is the total number of $\jB$-indices. Since, due to Theorem \ref{chamber_decomp_th}, $|\QS| = O\bigl((f_{n_y - 1})^{2 n_y}\bigr)$ and $N = O\bigl(n_x^{n_y}\bigr)$, the arithmetic cost of these additional computations is negligible with respect to \eqref{my_real_Ehr_comp_eq1}.

    For a given point $y \in \QQ^{n_y}$ and an index $\jB$, we first need to find a chamber $\QC \in \QS$, such that $y \in \relint(\QC)$. Due to Theorem \ref{chamber_decomp_th}, it can be done with $O(n_y \cdot f_{n_y - 1})$ operations. Next, for each $\BC \in \base(\QC)$, we use the following scheme:
    \begin{enumerate}
        \item Compute $\psi_{\BC}(y) = \bigl\langle c_{\BC}, b_{\BC} - \{b_{\BC} - B_{\BC} y\} \bigr\rangle$ with $O(n_x \cdot n_y)$ operations;

        \item For $i \in \intint[0]{n_x}$, compute $\bigl(\psi_{\BC}(y)\bigr)^i$ with $O(n_x)$ operations;

        \item Compute $g = P_{\BC} \TC_{\BC}(y) \bmod S_{\BC} \cdot \ZZ^{n_x}$ with $O(n_x \cdot \log(\Delta))$ operations; 
        % Compute $P_{\BC} \TC_{\BC}(y)$ with $O(n_x \cdot (n_x + n_y))$ operations;

        \item For $i \in \intint[0]{n_x}$, extract the values $\pi_{\BC, i}\bigl( g \bigr)$ with $O\bigl(n_x \cdot \log(\Delta)\bigr)$ operations;

        \item Compute $\bar \pi_{\BC,k}(y)$, where $k = j_1 + \dots + j_{n_y}$, with $O(n_x)$ operations, using the formula \eqref{period_quasipoly_coeff_eq}.
    \end{enumerate}
    The total arithmetic complexity of the presented scheme is $O\bigl(\mu \cdot n_x \cdot ( \log(\Delta) + n_y)\bigr)$. Now, we can compute the value $a_{\jB}(y)$, using the formula \eqref{aj_coeff_eq} with $O(\mu)$ operations.
\end{proof}

Due to Clauss \& Loechner's Theorem \ref{Z_multi_Ehrh_th}, the function $\ZEnum$ can be represented by an integer Ehrhart's piece-wise quasi-polynomial. Due to the periodicity reasons, its coefficient values can be stored exactly in a hash-table, which will give very fast evaluation time for the function $\ZEnum$. We can use the previous Theorem \ref{my_real_Ehr_comp_th} to compute this “complete” representation of $\ZEnum$, which is summarized in the following corollary:
\begin{corollary}\label{my_int_Ehr_comp_cor}
    Let $f(x)$ be an integer Ehrhart's piece-wise quasi-polynomial that represents $\ZEnum$, and let $\QS$ be the corresponding chamber decomposition. Denote $q = \sum_{\QC \in \QS} \bigl(\chdenom_{\ZZ}(\QC)\bigr)^{n_y}$. Assume additionally that the preprocessing step of Theorem \ref{my_real_Ehr_comp_th} has already been performed.
    
    Then, for all the chambers in $\QS$, we can precompute the values of all the periodic coefficients with
    $$
    O\Bigl(q \cdot M \cdot \mu \cdot n_x \cdot \bigl( \log(\Delta) + n_y\bigr)\Bigr)
    $$ operations, where $M = O(n_x^{n_y})$ is the maximum number of monomials. After that, for any $y \in \ZZ^{n_y}$, the value $f(y)$ can be computed with 
    $$
    O\bigl(n_y \cdot(f_{n_y -1} + M)\bigr)
    $$ operations.
\end{corollary}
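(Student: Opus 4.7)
The plan is to tabulate, for every chamber $\QC \in \QS$ and every monomial multi-index $\jB$, the list of values taken by the periodic coefficient $a_{\jB}$ on a complete system of residue representatives, and to answer a query by a chamber lookup followed by a standard polynomial evaluation with coefficients read from the table.

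For the preprocessing, I would fix $\QC \in \QS$ and observe that, by Theorem \ref{my_real_Ehr_th} together with Lemma \ref{period_quasipoly_coeff_lm}, every coefficient $a_{\jB}$ admits $\chdenom_{\ZZ}(\QC) \cdot \BUnit$ as an integer period-vector, and hence is determined on $\ZZ^{n_y}$ by its values on the $(\chdenom_{\ZZ}(\QC))^{n_y}$ residue classes of $\ZZ^{n_y} / \chdenom_{\ZZ}(\QC) \cdot \ZZ^{n_y}$. For each such class $r$ I would pick an arbitrary representative $y_r \in \ZZ^{n_y}$, and then, for each of the $M$ multi-indices $\jB$, invoke Theorem \ref{my_real_Ehr_comp_th} to compute $a_{\jB}(y_r)$, storing the result in a hash table keyed by $(\QC, r, \jB)$. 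There is no need for $y_r$ to belong to $\relint(\QC)$, because the formula \eqref{aj_coeff_eq} from which Theorem \ref{my_real_Ehr_comp_th} draws its output is assembled from globally defined step-functions, fractional parts and linear maps, and Lemma \ref{period_quasipoly_coeff_lm} guarantees that its value is constant on each residue class. Summing the per-call cost $O(\mu \cdot n_x \cdot (\log(\Delta) + n_y))$ over the
\[
\sum_{\QC \in \QS} (\chdenom_{\ZZ}(\QC))^{n_y} \cdot M \;=\; q \cdot M
\]
triples $(\QC, r, \jB)$ yields exactly the announced preprocessing bound.

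For a query at $y \in \ZZ^{n_y}$ I would first locate the chamber $\QC$ with $y \in \relint(\QC)$ using Theorem \ref{chamber_decomp_th}, at cost $O(n_y \cdot f_{n_y-1})$. Next I would reduce $y$ modulo $\chdenom_{\ZZ}(\QC) \cdot \ZZ^{n_y}$ in $O(n_y)$ operations to obtain the key $r$, and precompute the powers $y_i^{j}$ for $i \in \intint{n_y}$ and $j \in \intint[0]{n_x}$ in $O(n_x \cdot n_y)$ operations. For each of the $M$ multi-indices $\jB$, I retrieve the stored value $a_{\jB}(y_r)$ in $O(1)$, form the monomial $y^{\jB}$ with $O(n_y)$ multiplications of precomputed powers, and accumulate into a running sum. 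The total query cost is therefore $O(n_y \cdot f_{n_y-1} + n_y \cdot M) = O(n_y \cdot (f_{n_y-1} + M))$, as claimed.

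The whole argument is essentially bookkeeping built on top of Theorems \ref{my_real_Ehr_th}, \ref{my_real_Ehr_comp_th} and \ref{chamber_decomp_th}, so there is no serious conceptual obstacle. The only delicate point that demands a small check is the legitimacy of calling Theorem \ref{my_real_Ehr_comp_th} at representatives $y_r$ which need not lie in $\relint(\QC)$; this is settled by the periodicity statement of Lemma \ref{period_quasipoly_coeff_lm}, which ensures that the value returned by \eqref{aj_coeff_eq} depends only on the residue class of $y_r$ modulo $\chdenom_{\ZZ}(\QC) \cdot \ZZ^{n_y}$ and therefore coincides with the true value of $a_{\jB}$ at every integer point of $\QC$ in the same class.
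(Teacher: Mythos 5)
Your proof is correct and follows essentially the same approach as the paper: tabulate, per chamber and per monomial index, the finitely many coefficient values (bounded via the period $\chdenom_{\ZZ}(\QC)\cdot\BUnit$) by repeated calls to the single-coefficient evaluation of Theorem~\ref{my_real_Ehr_comp_th}, store them in a hash table, and answer a query by a chamber lookup followed by a residue reduction and polynomial evaluation. You are somewhat more careful than the paper's terse proof in that you explicitly address why the formula \eqref{aj_coeff_eq} may be evaluated at a residue-class representative $y_r$ that need not lie in $\relint(\QC)$, resolving it correctly via the periodicity statement of Lemma~\ref{period_quasipoly_coeff_lm}.
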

\begin{proof}
    During the preprocessing, for each $\QC \in \QS$, we precompute values of all the corresponding coefficients $a_{\jB}$, using Theorem \ref{my_real_Ehr_comp_th}. We store these values in a hash-table with $O(1)$ lookup time and linear construction time. Since, for a fixed chamber, there are $M$ of such coefficients and since there are at most $\bigl(\chdenom_{\ZZ}(\QC)\bigr)^{n_y}$ unique values of a single coefficient, the total preprocessing cost is the same as it was claimed. 

    Now, for a given vector $y \in \ZZ^{n_y}$, we need first to find a chamber $\QC \in \QS$, such that $y \in \relint(\QC)$, due to Theorem \ref{chamber_decomp_th}, it can be done with $O(n_y \cdot f_{n_y -1})$ operations. After that, we look up values of the corresponding coefficients $a_{\jB}(y)$ and take a resulting sum with $O(n_y \cdot M)$ operations, which gives the desired complexity bound. 
\end{proof}

\section{Brief review of the obtained results}\label{summary_sec}
In the current Section, we review implications of our work from theoretical and computational perspectives.
{\newline \bf Theoretical perspective:}

\begin{enumerate}
    \item We show that the function $\QEnum$ can be represented by a new type of functions, named \emph{periodic piece-wise step-polynomials} (see Definition \ref{periodic_step_poly_def} and Theorem \ref{main_param_th}), which is a generalization of piece-wise step-polynomials from \cite{CountingInParametricPolyhedra,CountingFunctionEncoding}, due to Verdoolaege et al. Further, we show that the new representation of $\QEnum$ is more efficient than piece-wise step-polynomials and can be effectively computed in certain situations.
    
    \item We show that the rational piece-wise Ehrhart's quasi-polynomial representation of $\QEnum$ is a partial case of our new representation. More precisely, we give an independent proof of the main results of \cite{LinkeRational2} and \cite{LinkeRational}, due to Linke and Linke \& Henk, for the general non-homogeneous case. Additionally, we give some new information about multi-periods of the resulting piece-wise polynomials, based on the new notion of a \emph{chamber's denominator}. See Theorem \ref{Q_multi_Ehrh_th}, due to Henk \& Linke, our Theorem \ref{my_real_Ehr_th}, and Remark \ref{Linke_rm}.
\end{enumerate}
{\bf Computational perspective:}
\begin{enumerate}

\item {\bf General computational tool. } We give a general computational tool to construct a piece-wise periodic step-polynomial representation of $\QEnum$, which uses information on the face-lattice structure of $\PC$. This result is given in Theorem \ref{main_param_th}, and it is used to derive all the other consequences of our work. More precisely, assuming that faces of $\PC$ of dimensions $n_y$ and $n_y-1$ are given, the arithmetic complexity to construct the new representation of $\QEnum$ is
$$
O^*\bigl( (f_{n_y-1})^{n_y} \cdot f_{n_y} + (f_{n_y-1})^{2n_y} \cdot (f_{n_y-1} + \mu^2 \cdot \Delta^3) \bigr),
$$ while the arithmetic complexity to evaluate $\QEnum$, using our representation, in any given $y \in \QQ^{n_y}$, is
$$
O\bigl( n_y \cdot f_{n_y-1} + \mu \cdot n_x \cdot (\log\Delta + n_y)\bigr).
$$
% In the following table [???] we compare computational complexity of our approach with the approach of the papers \cite{CountingFunctionEncoding,CountingInParametricPolyhedra}. 

% \begin{center}
%     \begin{tabular}{|c|c|c|}
%     \hline
%     \hline
%          Preprocessing complexity & $\QEnum$-evaluation complexity &  \\
%          \hline
%          \hline
%     $O^*(p^{O(n_y)} \cdot \mu \cdot 2^{O(n_x)} \cdot (\log \Delta)^{n_x \ln n_x})$ & $O(\mu \cdot (\log \Delta)^{n_x \ln n_x})$ & due to \cite{CountingInParametricPolyhedra} \\
%     \hline
%     $O^*(p^{O(n_y)} \cdot \mu^2 \cdot \Delta^3)$ & $O(n_y \cdot p + \mu \cdot n_x \cdot (\log \Delta + n_y))$ & {\color{red} this work} \\
%     \hline
%     \hline
%     \end{tabular}
% \end{center}
Additionally, we give similar complexity bounds to compute coefficient values of the rational piece-wise Ehrhart's quasi-polynomial of $\PC$ and the complete integer piece-wise Ehrhart's quasi-polynomial representation of $\PC$. See Theorem \ref{my_real_Ehr_comp_th}, Corollary \ref{my_int_Ehr_comp_cor}, and Definitions \ref{integer_Ehr_poly_def}, \ref{rational_Ehr_poly_def}.

    \item {\bf Complexity bounds for polyhedra of bounded co-dimension.} Consider the polyhedron $\PC$, defined by a system
\begin{equation}\label{fixed_codim_system}
\begin{cases}
    A x = y\\
    x \in \RR_{\geq 0}^{n_x}\\
    y \in \RR^{k},
\end{cases}
\end{equation}
where $A \in \ZZ^{k \times n_x}$ and $\rank(A) = k$. This system can be considered as the “worst case” of the general parametric system in \ref{standard_form}, because it is natural to assume that $n_y \leq k$.

The systems of the type \eqref{fixed_codim_system} with a fixed $y \in \ZZ^{k}$ and a fixed co-dimension $k$ have received a considerable amount of attention in the literature. In his seminal work \cite{Papadimitriou}, Papadimitriou shows that ILP problems with systems \eqref{fixed_codim_system} can be solved by $\poly(n_x,\|A\|_{\max},\|y\|_{\infty})$-time algorithm, for any fixed $k$. The result of Papadimitriou was significantly refined by Jansen \& Rohwedder \cite{OnIPAndConv}, where $\poly\bigl(n_x,\|A\|_{\max},\log(\|y\|_{\infty})\bigr)$-time algorithm with a significantly better asymptotic behavior on $n_x$, $k$, and $\|A\|_{\max}$ was presented. Following to Jansen \& Rohwedder \cite{OnIPAndConv} or Eisenbrand \& Weismantel \cite{SteinitzILP}, in order to solve an ILP problem, one can reformulate the original system in such a way that the r.h.s. vector $y$ of the new system will depend only on $k$ and $\|A\|_{\max}$. Consequently, ILP problems with systems of the type \eqref{fixed_codim_system} can be solved in $\poly(n_x,\|A\|_{\max})$-time, for any fixed $k$. Moreover, the final complexity bound is FPT with respect to $k$ and $\|A\|_{\max}$.

Considering the counting problem, for any fixed $k$, Lasserre \& Zeron \cite{fixedM_counting_lasserre} present a dual-type algorithm that uses $\poly(n_x,\|A\|_{\max})$ arithmetic operations with real numbers. Unfortunately, the complexity analysis of this algorithm is not completely finished. The first $\poly(n_x,\|A\|_{\max})$-time algorithm with a complete complexity analysis was presented in \cite{Counting_FPT_Delta}, see also \cite{Counting_FPT_Delta_corrected}, since the original paper contained an inaccuracy.

As one of the main results of the current paper, we show that after \\$\poly(n_x,\|A\|_{\max})$-operations of a preprocessing algorithm, the parametric counting can be performed by a polynomial-time algorithm:
\begin{proposition}\label{fixed_k_main_result}
    Let $\PC$ be a polyhedron, defined by the system \eqref{fixed_codim_system}. Assume that the co-dimension $k$ is fixed. Then, there exists an $\poly\bigl(n, \|A\|_{\infty}\bigr)$-operations algorithm that returns a function $f : \RR^k \to \ZZ_{\geq 0} \cup \{+\infty\}$, such that $\QEnum(y) = f(y)$, for any $y \in \RR^k$. For any $y \in \QQ^k$, the value of $f(y)$ can be computed by a polynomial-time algorithm.
\end{proposition}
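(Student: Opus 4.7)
The plan is to reduce Proposition~\ref{fixed_k_main_result} to Theorem~\ref{main_param_th} and to verify that every parameter appearing in its complexity bound is polynomial in $n_x$ and $\|A\|_\infty$ once $k$ is treated as a constant.

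I would first dispose of the unbounded case. The recession cone of $\PC_y$ is $\{d \in \RR_{\geq 0}^{n_x} : A d = 0\}$, which does not depend on $y$. If it is non-trivial, then, being rational, it contains a non-zero integer direction, so $\QEnum(y) = +\infty$ whenever $\PC_y \neq \emptyset$, and the required $f$ is realised by a single LP-feasibility test. Otherwise every non-empty $\PC_y$ is bounded, the hypothesis of Theorem~\ref{main_param_th} is satisfied, and I would convert \eqref{fixed_codim_system} to the canonical form via Remark~\ref{form_trans_rm}, producing $A' \in \ZZ^{(n_x + k) \times n_x}$ with $\rank(A') = n_x$, $n_y = k$, and $\Delta(A') = \Delta(A)$.

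Next, I would establish the relevant parameter bounds for fixed $k$. Hadamard's inequality yields $\Delta := \Delta(A') \leq (\sqrt{k}\,\|A\|_\infty)^k = \poly(\|A\|_\infty)$. The faces of $\PC$ correspond bijectively to subsets $S \subseteq \intint{n_x}$ (indicating which coordinates $x_i$ are zero), with $\dim(\FC_S) = n_x - |S|$; hence $f_{n_y}, f_{n_y-1} \leq \binom{n_x}{k} = O(n_x^k)$. The decisive bound is on $\mu := \mu(A'^\top)$: since $A'^\top$ has only $n_x + k$ columns in $\RR^{n_x}$, any triangulation of $\cone(A'^\top)$ into simple cones is indexed by an $n_x$-subset of those columns, giving $\mu \leq \binom{n_x + k}{n_x} = O(n_x^k)$. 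The $n_y$- and $(n_y-1)$-dimensional faces of $\PC$ needed as input to Theorem~\ref{main_param_th} are enumerated by iterating over the $O(n_x^k)$ relevant subsets and filtering the non-empty ones by LP feasibility in polynomial time.

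Substituting these polynomial bounds into the preprocessing estimate of Theorem~\ref{main_param_th} yields $\poly(n_x, \|A\|_\infty)$ arithmetic operations for constructing $f$, while the query bound $O\bigl(n_y \cdot f_{n_y-1} + \mu \cdot n_x \cdot (\log \Delta + n_y)\bigr)$ becomes $\poly(n_x, \log\|A\|_\infty)$. The main obstacle is the bound on $\mu$, which is essential because of the $\mu^2 \cdot \Delta^3$ factor in the preprocessing complexity; it rests crucially on the fact that the canonical form produced by Remark~\ref{form_trans_rm} has only $n_x + k$ rows, so the number of potential bases of $A'^\top$ stays polynomial in $n_x$ for fixed $k$. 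All other ingredients (Hadamard, the face-enumeration argument, and the dichotomy between bounded and unbounded recession cones) are comparatively routine.
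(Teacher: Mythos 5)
Your overall strategy — reduce to Theorem~\ref{main_param_th}, then check that $f_{n_y}$, $f_{n_y-1}$, $\mu$, $\Delta$ are polynomial in $n_x$ and $\|A\|_\infty$ for fixed $k$ — matches the paper's route (the paper first proves the more general Theorem~\ref{fixed_k_main_th} and then specializes). The Hadamard estimate, the face-count estimate $\binom{n_x}{k}$, and the $\mu$-bound (modulo a dimensional slip: after Remark~\ref{form_trans_rm} one gets $A' \in \ZZ^{n_x \times (n_x-k)}$, not $\ZZ^{(n_x+k)\times n_x}$, but the resulting $\binom{n_x}{k}$ bound is the same) are all fine.

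There is, however, a genuine gap in the unbounded branch of your dichotomy. You claim that when the recession cone $\{d \geq 0 : Ad = 0\}$ is non-trivial, $\QEnum(y) = +\infty$ whenever $\PC_y \neq \emptyset$, so that a single LP-feasibility test suffices. This is false: a non-empty rational polyhedron with a non-trivial lattice direction in its recession cone need not contain \emph{any} integer point. Take $n_x = 2$, $k = 1$, $A = (2\; -2)$, $y = 1$: then $\PC_y = \{(t + \tfrac12, t) : t \ge 0\}$ is non-empty and unbounded, yet $\PC_y \cap \ZZ^2 = \emptyset$, so $\QEnum(1) = 0$ while your $f$ would return $+\infty$. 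The correct criterion is $\PC_y \cap \ZZ^{n_x} \neq \emptyset$, i.e.\ an \emph{ILP}-feasibility test, not an LP test. The paper avoids this trap via Lemma~\ref{param_bounding_lm}: it appends one inequality $c^\top x \leq y_{n_x+1}$ with a parametric right-hand side, producing a bounded $\PC'$ whose counting function vanishes exactly when the original has no integer points; the thresholded output is then $+\infty$ when that count is positive. If you want to keep your dichotomy, you must replace the LP test by an FPT integer-feasibility test (Papadimitriou / Eisenbrand--Weismantel / Jansen--Rohwedder give $\poly(n_x, \|A\|_\infty)$ for fixed $k$), or invoke Lemma~\ref{param_bounding_lm} as the paper does. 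With that repair, the rest of the argument goes through.
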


Similar results can be formulated with respect to the rational piece-wise Ehrhart's quasi-polynomial of $\PC$. 
\begin{proposition}\label{fixed_k_main_result2}
    Let $\PC$ be a polyhedron, defined by the system \eqref{fixed_codim_system}. Let $f(y)$ be the corresponding rational piece-wise Ehrhart's quasi-polynomial representation of $\QEnum$ with its chamber decomposition $\QS$. Assume that the co-dimension $k$ is fixed, then there exists an $\poly\bigl(n, \|A\|_{\infty}\bigr)$-operations preprocessing algorithm, which allows to compute any of the coefficients $a_{\jB}(y)$ by a polynomial-time algorithm, for any given chamber $\QC \in \QS$ and $y \in \relint(\QC)$.
\end{proposition}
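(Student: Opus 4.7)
The plan is to apply Theorem \ref{my_real_Ehr_comp_th} directly and verify that, for the standard-form system \eqref{fixed_codim_system} with fixed co-dimension $k$, every complexity parameter appearing there becomes polynomial in $n_x$ and $\|A\|_\infty$. Since $y \in \RR^k$ gives $n_y = k$, the parametric dimension is already a constant; what remains is to bound $\Delta$, $\mu$, and the face counts $f_{n_y-1}$ and $f_{n_y}$.

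I would first transform \eqref{fixed_codim_system} into an equivalent canonical-form system via Remark \ref{form_trans_rm}: this produces $m = n_x + k$ inequalities, preserves $\Delta$ and $\dim \PC = n_x$, and keeps $n_y = k$ unchanged. Hadamard's inequality then yields $\Delta \leq k^{k/2} \|A\|_\infty^k = O(\|A\|_\infty^k)$, so $\log \Delta = O(\log \|A\|_\infty)$. Each $j$-face of $\PC$ is the intersection of at least $\dim \PC - j = n_x - j$ of the $m = n_x + k$ facet-defining inequalities, giving $f_j \leq \binom{n_x + k}{n_x - j}$, which is $O(n_x^{2k-1})$ for $j = k - 1$ and $O(n_x^{2k})$ for $j = k$. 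The triangulation parameter $\mu = \mu(A^\top)$ is bounded by the number of available simple $n_x$-subcones, namely $\binom{n_x + k}{n_x} = O(n_x^k)$. All four quantities are polynomial in $n_x$ and $\|A\|_\infty$ for fixed $k$.

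Substituting these bounds into Theorem \ref{my_real_Ehr_comp_th} makes the preprocessing cost $\poly(n_x, \|A\|_\infty)$, while the per-coefficient evaluation cost $O\bigl(\mu \cdot n_x \cdot (\log \Delta + n_y) + n_y \cdot f_{n_y-1}\bigr)$ is polynomial in $n_x$ and $\log \|A\|_\infty$. The main technical obstacle is producing the face lattice up to dimension $n_y$ that Theorem \ref{my_real_Ehr_comp_th} takes as input; for fixed $k$ the polynomial face-count bounds above let this be done by enumerating all candidate intersections of the appropriate size and testing feasibility of each via a single linear programming call, still within the $\poly(n_x, \|A\|_\infty)$ budget. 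This yields the claimed polynomial-time evaluation of every coefficient $a_{\jB}(y)$ in a specified chamber $\QC \in \QS$, parallel to Proposition \ref{fixed_k_main_result} but at the level of individual quasi-polynomial coefficients.
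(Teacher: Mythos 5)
Your proposal is correct and follows essentially the same route the paper takes: Theorem \ref{my_real_Ehr_comp_th} is the engine, Remark \ref{form_trans_rm} converts the standard-form system to canonical form, Hadamard bounds $\Delta$, binomial bounds control $f_{n_y-1}$, $f_{n_y}$, $\mu$, and for fixed $k$ everything specializes to $\poly(n_x,\|A\|_\infty)$. In the paper this is organized as the specialization of Theorem \ref{fixed_k_main_th2}, which you have effectively re-derived in the particular case $n_y=k$.

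Two small points worth tightening. First, after the conversion the canonical form has $n' = n_x - k$ variables and $m = n' + k = n_x$ inequalities, not $m = n_x + k$; your estimates are therefore mildly inflated but still polynomial for fixed $k$, so the conclusion is unaffected. Second, you implicitly assume $\PC_y$ is bounded on each chamber, whereas the system \eqref{fixed_codim_system} can give unbounded $\PC_y$ (take, e.g., $A = (1,\,-1)$). Theorem \ref{my_real_Ehr_th} only defines the quasi-polynomial on chambers with bounded $\PC_y$, so one should first invoke Lemmas \ref{lines_eliminating_lm} and \ref{param_bounding_lm}, as the paper does in the proof of Theorem \ref{fixed_k_main_th}; this costs one extra inequality and replaces $\Delta$ by $n_x \cdot \Delta$, both of which stay within the $\poly(n_x, \|A\|_\infty)$ budget.
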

For the problem to compute the complete integer piece-wise Ehrhart's quasi-polynomial representation of $\ZEnum$, we use a weaker parameter $\Delta_{\lcm} := \Delta_{\lcm}(A)$.
\begin{proposition}\label{fixed_k_main_result3}
    Let $\PC$ be a polyhedron, defined by the system \eqref{fixed_codim_system}, and $f(y)$ be the corresponding integer piece-wise Ehrhart's quasi-polynomial representation of $\ZEnum$. Assume that the co-dimension $k$ is fixed, then the complete representation of $f(y)$ can be computed by an $\poly(\Delta_{\lcm}, n_x)$-operations algorithm.
\end{proposition}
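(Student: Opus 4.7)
The plan is to invoke Corollary \ref{my_int_Ehr_comp_cor} for the system \eqref{fixed_codim_system} with $n_y = k$, and show that when $k$ is fixed every quantity entering the complexity bound is polynomially bounded in $n_x$ and $\Delta_{\lcm}$.

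First, I would analyze the face structure of $\PC$. Projecting out $y$ gives an affine isomorphism $\PC \cong \RR^{n_x}_{\geq 0}$, so faces of $\PC$ correspond bijectively with coordinate faces of the orthant, yielding $f_d = \binom{n_x}{n_x-d}$ for $0 \leq d \leq n_x$. In particular, $f_{n_y-1}$ and $f_{n_y}$ are $\poly(n_x)$, and the corresponding face lists can be enumerated explicitly in $\poly(n_x)$ time, which meets the input assumption of Theorem \ref{main_param_th}.

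Second, I would bound the matrix parameters. The cone $\cone(A^\top) \subseteq \RR^{n_x}$ admits triangulations whose simple cones are indexed by $k$-element subsets of the $n_x$ columns of $A^\top$, so $\mu \leq \binom{n_x}{k} = \poly(n_x)$. The inequality $\Delta \leq \Delta_{\lcm}$ is immediate since the maximal non-zero $k \times k$ sub-determinant divides $\Delta_{\lcm}$. The key point, and likely the main obstacle, is showing $\chdenom_{\ZZ}(\QC) \mid \Delta_{\lcm}$ for every $\QC \in \QS$: for each basis $\BC \in \base(\QC)$, the minimal integer $q$ such that $q \cdot A_{\BC}^{-1}(b_{\BC} - B_{\BC} y) \in \ZZ^{n_x}$ for all $y \in \ZZ^{n_y}$ is the largest elementary divisor of $A_{\BC}$, which divides $|\det A_{\BC}|$ and hence $\Delta_{\lcm}$; taking the lcm over $\BC \in \base(\QC)$ preserves this divisibility, so $\chdenom_{\ZZ}(\QC) \leq \Delta_{\lcm}$.

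Finally, I would substitute these bounds into Corollary \ref{my_int_Ehr_comp_cor}. The preprocessing inherited from Theorem \ref{my_real_Ehr_comp_th} costs $O^{*}\bigl((f_{n_y-1})^{n_y} \cdot f_{n_y} + (f_{n_y-1})^{2n_y}(f_{n_y-1} + \mu^2 \Delta^3)\bigr) = \poly(n_x, \Delta_{\lcm})$. For building the complete lookup table, Theorem \ref{chamber_decomp_th} gives $|\QS| = O((f_{n_y-1})^{2n_y}) = \poly(n_x)$, hence $q = \sum_{\QC \in \QS}(\chdenom_{\ZZ}(\QC))^{n_y} \leq |\QS| \cdot \Delta_{\lcm}^{k}$, while $M = O(n_x^{n_y}) = \poly(n_x)$; plugging into $O\bigl(q \cdot M \cdot \mu \cdot n_x \cdot (\log \Delta + n_y)\bigr)$ yields a $\poly(n_x, \Delta_{\lcm})$ total cost. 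Combining the two phases completes the proof.
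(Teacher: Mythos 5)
Your proposal follows essentially the same route as the paper: the paper states Proposition~\ref{fixed_k_main_result3} as a direct corollary of Theorem~\ref{fixed_k_main_th3}, which is in turn deduced from Corollary~\ref{my_int_Ehr_comp_cor} by the same kind of parameter bookkeeping you carry out (bound $f_{n_y-1}, f_{n_y}, \mu$, $|\QS|$, and $q$ with $n_y=k$ fixed, and observe $\Delta \leq \Delta_{\lcm}$ and $\chdenom_{\ZZ}(\QC) \mid \Delta_{\lcm}$). Your explicit identification $\PC \cong \RR^{n_x}_{\geq 0}$ and the resulting exact face counts $f_d = \binom{n_x}{d}$ are a clean shortcut compared with the paper's generic binomial estimates after conversion to canonical form, and your explicit argument that the largest elementary divisor of each basis $A_{*\BC}$ divides $\Delta_{\lcm}$ is a useful piece the paper leaves implicit.

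There is one gap you should close. Theorem~\ref{main_param_th} (and hence Theorem~\ref{my_real_Ehr_comp_th} and Corollary~\ref{my_int_Ehr_comp_cor}) require that $\PC_y$ be bounded for at least one $y$, and the quasi-polynomial statements are asserted only over chambers with bounded fibers. For \eqref{fixed_codim_system} the recession cone of $\PC_y$ is $\{x \geq 0 \colon Ax = 0\}$, independent of $y$, and when it is nontrivial every nonempty fiber is unbounded (with $\ZEnum(y) = +\infty$). The paper handles this uniformly by first invoking Lemmas~\ref{lines_eliminating_lm} and \ref{param_bounding_lm}, which append one extra inequality with a new parametric variable, at the benign cost of replacing $m$ by $m+1$ and $\Delta$ by $n_x \cdot \Delta$. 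Your proof should either perform this reduction (and then re-check the face counts with the extra inequality, which still stay polynomial) or explicitly argue that the unbounded case is detected in polynomial time and yields the degenerate $\{0,+\infty\}$-valued representation. Also note a small notational slip: for $A \in \ZZ^{k \times n_x}$, $A^\top$ has $k$ columns, not $n_x$; the cone you are triangulating into $\binom{n_x}{k}$ simple cones is $\cone(A) \subseteq \RR^k$ (or, after the standard-to-canonical conversion used in the paper, $\cone\bigl((A')^\top\bigr) \subseteq \RR^{n_x-k}$ with $n_x$ generators). The bound $\mu = O(n_x^k)$ is nevertheless correct.
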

All the presented propositions are the straight corollaries of the general Theorems \ref{fixed_k_main_th}, \ref{fixed_k_main_th2}, and \ref{fixed_k_main_th3}, which are valid for any systems of the types \ref{canonical_form} and \ref{standard_form}.

\item {\bf General $\Delta$-modular polyhedra of a small co-dimension $k$.} As it was noted before, the results of the previous item are the partial cases of the more general Theorems \ref{fixed_k_main_th}, \ref{fixed_k_main_th2}, and \ref{fixed_k_main_th3}. More precisely, due to Theorem \ref{fixed_k_main_th}, for any parametric polyhedron $\PC$, defined by a system in the \ref{canonical_form} or \ref{standard_form} form, a piece-wise periodic step-polynomial representation of $\QEnum$ can be constructed with
$$
O(n_x/k)^{2k(n_y + 1)} \cdot \Delta^3 \cdot \poly(n_x,n_y,k)
$$ operations. After that, for any $y \in \QQ^{n_y}$, the value of $\QEnum(y)$ can be computed with
$$
O(n_x/k)^{k+1} \cdot n_x \cdot \bigl(\log(n_x \Delta) + n_y\bigr)
$$
operations. Similar results with respect to integer/rational piece-wise Ehrhart's quasi-polynomials are given in Theorems \ref{fixed_k_main_th2} and \ref{fixed_k_main_th3}.
 
\item {\bf Polyhedra of a general type with bounded dimension $n_y$ of the parametric space.}
Let us assume that the parameter $n_y$ is a fixed constant and consider the Parametric Counting problem for the general class of polyhedra, defined by systems in the \ref{canonical_form} or \ref{standard_form} forms. In the following Table \ref{complexity_ny_tb}, we compare our complexity bound (for the precise bound, see Theorem \ref{fixed_ny_main_th1}) with the approach from \cite{CountingInParametricPolyhedra} and \cite{CountingFunctionEncoding}, due to Verdoolaege et al.
% The justification of the complexity bound of the algorithm used in the paper \cite{CountingInParametricPolyhedra} is presented in Subsection \ref{VSBLB_just_rm}.
\begin{center}
    \begin{tabular}{|c|c|c|}
    \hline
    \hline
         Preprocessing complexity & $\QEnum$-evaluation complexity &  \\
         \hline
         \hline
    $m^{O(n_x)} \cdot \mu \cdot (\log \Delta)^{n_x \ln n_x}$ & $O^*(f_{n_y-1} + \mu \cdot (\log \Delta)^{n_x \ln n_x})$ & see \cite{CountingInParametricPolyhedra} \\
    \hline
    $m^{O(n_x)} \cdot \mu^2 \cdot \Delta^3$ & $O(f_{n_y - 1} + \mu \cdot n_x \cdot \log(n_x \Delta))$ & {\color{red} this work} \\
    \hline
    \hline
    \end{tabular}

    \captionof{table}{Complexity for a fixed $n_y$}\label{complexity_ny_tb}
\end{center}
Here, in Table \ref{complexity_ny_tb}, we use our Theorem \ref{chamber_decomp_th} to construct a data structure that stores and accesses the chambers from the chamber decomposition of $\PC$ for both approaches. Due to Theorem \ref{chamber_decomp_th}, assuming $n_y = O(1)$, the decomposition can be constructed with $n_x^{O(n_x)}$ operations, while the access costs is $O(f_{n_y-1})$. 
% The preprocessing complexity of the approach due to the paper \cite{CountingInParametricPolyhedra} is analyzed by the following way: for each chamber $\QC \in \QS$, we do a triangulation of the normal corresponding to $\PC_y$, for $y \in \relint(\QC)$ (which is the same for any $y \in \relint(A)$). The complexity of the last step can be estimated by $m^{O(n_x)} \cdot \mu$. Next, for each cone of the resulting triangulation, the Barvinok's algorithm is applied 

As we can see from Table \ref{complexity_ny_tb}, our approach has a better evaluation complexity, while the preprocessing step is competitive only for bounded values of $\Delta$. There are two interesting scenarios, when the evaluation complexity becomes $2^{O(n_x)}$:
\begin{enumerate}
    \item {\it Polyhedra with linear number of facets, i.e. $m = O(n_x)$.} In this situation, we clearly have $\mu,f_{n_y -1} = 2^{O(n_x)}$, so the bounds of Table \ref{complexity_ny_tb} become
    \begin{center}
    \begin{tabular}{|c|c|c|}
    \hline
    \hline
         preprocessing complexity & $\QEnum$-evaluation complexity &  \\
         \hline
         \hline
    $n_x^{O(n_x)} \cdot (\log \Delta)^{n_x \ln n_x}$ & $2^{O(n_x)} \cdot (\log \Delta)^{n_x \ln n_x}$ & see \cite{CountingInParametricPolyhedra} \\
    \hline
    $n_x^{O(n_x)} \cdot \Delta^3$ & $2^{O(n_x)}$ & {\color{red} this work} \\
    \hline
    \hline
    \end{tabular}

    \captionof{table}{complexity for $m=O(n_x)$}\label{complexity_mnx_tb}
\end{center}

    \item {\it Elements of $(A\, B)\footnote{We assume here that $B$ is integer.}$ are bounded, and $(A\, B)$ is sparse.} Denote $M = (A\,B)$. Let $r$ and $l$ denote the maximum number of non-zero elements in rows and columns of non-degenerate square sub-matrices of $M$, respectively. It was shown in \cite{SparseILP_Gribanov} that any triangulation of a cone, induced by a sparse matrix, has a bounded size, see the $3$-rd proposition of Lemma \ref{nu_mu_lm} of this work. Therefore, the following inequalities hold: $f_0 \leq \norm{M}_{\max}^{n_x + n_y} \cdot \min\{r,l\}^{n_x + n_y}$ and $\mu \leq \norm{A}_{\max}^{n_x} \cdot \min\{r,l\}^{n_x}$. Moreover, due to the Hadamard's bound, $\Delta \leq \norm{A}_{\max}^{n_x} \cdot \min\{r,l\}^{n_x/2}$. Therefore, assuming that $\norm{M}_{\max} = O(1)$, $f_{n_y-1}$ can be bounded by $\binom{f_0}{n_y - 1} = \min\{r,l\}^{O(n_x)}$, and the bounds of Table \ref{complexity_ny_tb} become
    \begin{center}
    \begin{tabular}{|c|c|c|}
    \hline
    \hline
         preprocessing complexity & $\QEnum$-evaluation complexity &  \\
         \hline
         \hline
    $m^{O(n_x)}$ & $n_x^{O(n_x)}$ & see \cite{CountingInParametricPolyhedra} \\
    \hline
    $m^{O(n_x)}$ & $\min\{r,l\}^{O(n_x)}$ & {\color{red} this work} \\
    \hline
    \hline
    \end{tabular}

    \captionof{table}{The complexity for sparse $A$ with bounded elements}\label{complexity_sp_tb}
\end{center}

For example, if $B$ is an identity matrix and $A$ is an incidence matrix of some hypergraph with a fixed maximum vertex degree or with a fixed maximum edge cardinality, or just an incidence matrix of some simple graph, then the evaluation complexity can be bounded by $2^{O(n_x)}$.
\end{enumerate}

\item {\bf Possible applications for the compiler analysis.}
Finally, let us make some speculative look of possible applications of our results to the compiler analysis. The classical work \cite{Parametric_Clauss}, due to Clauss \& Loechner, gives several examples, which illustrate how the parametric counting problem can be used for the compiler analysis. Let us consider only one of such examples, which is about the problem to estimate the nested loop execution time. Following to \cite{Parametric_Clauss}, consider the following loop nest:
\medskip
\begin{verbatim}
for i := 0 to n do
    for j := 0 to 1+i + m/2 do
        for k := to i - n + p - 1 do
            Statement
\end{verbatim}

We want to compute the number of flops in order to evaluate the execution time of this code segment. The loop nest is modeled by the parametric polytope $\PC = \{ x \in \ZZ^3 \colon A x \leq B y + b\}$, where 
% $x^\top = (i, j, k)$, $y^\top = (n, m, p)$, 
\begin{gather*}
    x^\top = (i, j, k),\quad y^\top = (n, m, p),\\
    A = \begin{pmatrix}
        -1 & 0 & 0 \\
        1 & 0 & 0 \\
        0 & -1 & 0 \\
        -2 & 2 & 0 \\
        0 & 0 & -1 \\
        -1 & 0 & 1
    \end{pmatrix}, \quad B = \begin{pmatrix}
        0 & 0 & 0 \\
        1 & 0 & 0 \\
        0 & 0 & 0 \\
        0 & 1 & 0 \\
        0 & 0 & 0 \\
        -1 & 0 & 1
    \end{pmatrix}, \quad b = \begin{pmatrix}
        0 \\ 0 \\ 0 \\ 2\\ 0 \\ -1
    \end{pmatrix}.
\end{gather*}
Therefore, for any given $y \in \ZZ^3$, the value of $\ZEnum(y)$ gives the exact number of the statement's evaluations.   

Now, let us make some, for our opinion, natural assumptions on a programmer's code:
\begin{enumerate}
    \item The nested loop defines a constant number of $i, j, k, \dots$-variables per  a single nested level. Definitely, in the most situations, a programmer uses only one variable (or a constant number of variables) per a single nested level. In terms of the parametric polyhedron $\PC$, it means that $m = O(n_x)$;  

    \item The coefficients of the $i, j, k, \dots$-variables in the nested loop are bounded by some fixed constant. Here, we assume that in a “common” case, the coefficients of the variables are “small”. With respect to $\PC$, it implies that $\Delta = n_x^{O(n_x)}$.
\end{enumerate}

Assuming that $n_y$, i.e., the number of $n,m,p,\dots$-variables, is fixed, let us compare the complexity of our approach with respect to the approach of Verdoolaege et al. \cite{CountingInParametricPolyhedra}. Due to Table \ref{complexity_mnx_tb}, we have the following complexity in the considered case:
\begin{center}
    \begin{tabular}{|c|c|c|}
    \hline
    \hline
         Preprocessing complexity & $\QEnum$-evaluation complexity &  \\
         \hline
         \hline
    $n_x^{O(n_x)}$ & $n_x^{O(n_x)}$ & see \cite{CountingInParametricPolyhedra} \\
    \hline
    $n_x^{O(n_x)}$ & $2^{O(n_x)}$ & {\color{red} this work} \\
    \hline
    \hline
    \end{tabular}
\end{center}
Therefore, in the considered “common” scenario, the theoretical $\QEnum$-evaluation complexity is better for our approach.

\end{enumerate}

\section{The computational complexity for special cases}\label{special_sec}

In this Section, we are going to apply Theorem \ref{main_param_th} to estimate the complexity of the parametric counting problem in two different scenarios. Additionally, we analyze the complexity to compute Ehrhart's quasi-polynomials for them. More precisely, we consider the following classes:
\begin{enumerate}
    \item Polyhedra, defined by systems of a bounded co-dimension;
    \item Polyhedra of the general type with a bounded dimension $n_y$ of the parametric space.
\end{enumerate}

\subsection{Polyhedra, defined by systems of a bounded co-dimension}\label{codim_subs}

In the following result, we use our main Theorem \ref{main_param_th} to construct a complexity bound for the parametric counting problem with respect to parametric polyhedra of a small co-dimension. The cases, when $\PC_y$ is unbounded, for some $y \in \QQ^{n_y}$, are handled, using Lemmas \ref{lines_eliminating_lm} and \ref{param_bounding_lm}.
\begin{theorem}\label{fixed_k_main_th}
    Let $\PC$ be a polyhedron, defined by a system in the \ref{standard_form} or \ref{canonical_form} of the co-dimension $k$. Then, there exists an algorithm with the arithmetic complexity bound
    $$
    O(n_x/k)^{2 k (n_y + 1)} \cdot \Delta^3 \cdot \poly(n_x,n_y,k),
    $$ which returns a piece-wise periodic step-polynomial $f(y) \colon \RR^{n_y} \to \ZZ_{\geq 0} \cup \{+\infty\}$, such that $\QEnum(y) = f(y)$. For any $y \in \QQ^{n_y}$, the value of $f(y)$ can be computed with 
    $$
    O(n_x/k)^{k+1} \cdot n_x \cdot\bigl(\log(n_x \Delta) + n_y\bigr)\quad\text{operations}.
    $$
\end{theorem}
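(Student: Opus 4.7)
The plan is to apply the main computational result, Theorem~\ref{main_param_th}, after normalizing the input and supplying sharp combinatorial bounds on the face numbers $f_{n_y-1}$, $f_{n_y}$ and the triangulation parameter $\mu$ in terms of the co-dimension~$k$. First, using Remark~\ref{form_trans_rm}, I would convert any system in \ref{standard_form} into an equivalent system in \ref{canonical_form} with $m = n_x + k$ inequalities, preserving $\dim(\PC)$ and $\Delta$. If $\PC_{y}$ is unbounded for some $y$, Lemmas~\ref{lines_eliminating_lm} and~\ref{param_bounding_lm} would be used to split off the values of $y$ where $\QEnum(y) = +\infty$ and to reduce the remaining region to fibers that are uniformly bounded, so that Theorem~\ref{main_param_th} becomes applicable.

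Next, I would estimate the combinatorial quantities entering the complexity of Theorem~\ref{main_param_th}. Since each $j$-face of $\PC$ is determined by a set of $n_x + n_y - j$ linearly independent tight inequalities chosen out of the $m = n_x + k$ available ones,
\begin{equation*}
f_{n_y - 1} \leq \binom{n_x+k}{n_x+1} = O\bigl(n_x/k\bigr)^{k-1} \quad\text{and}\quad f_{n_y} \leq \binom{n_x+k}{n_x} = O\bigl(n_x/k\bigr)^{k}.
\end{equation*}
Similarly, $\mu = \mu(A^\top)$ is bounded by the number of $n_x$-bases of $A^\top$, which is $\binom{n_x+k}{n_x} = O(n_x/k)^{k}$. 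The lists of $n_y$- and $(n_y-1)$-dimensional faces of $\PC$ can be produced by iterating over all candidate constraint subsets of the corresponding sizes and testing the dimension via a rank computation; this cost is absorbed into the claimed running time.

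Substituting these estimates into the preprocessing bound of Theorem~\ref{main_param_th} gives
\begin{equation*}
O^{*}\Bigl(\bigl(n_x/k\bigr)^{(k-1)n_y}\cdot\bigl(n_x/k\bigr)^{k} + \bigl(n_x/k\bigr)^{2(k-1)n_y}\cdot\bigl(n_x/k\bigr)^{2k}\cdot\Delta^{3}\Bigr),
\end{equation*}
which is $O(n_x/k)^{2k(n_y+1)}\cdot\Delta^{3}\cdot\poly(n_x,n_y,k)$, as claimed. The evaluation bound of Theorem~\ref{main_param_th} specializes to $O\bigl(n_y\cdot(n_x/k)^{k-1}\bigr) + O\bigl((n_x/k)^{k}\cdot n_x\cdot(\log\Delta + n_y)\bigr)$, which is dominated by the advertised $O(n_x/k)^{k+1}\cdot n_x\cdot(\log(n_x\Delta)+n_y)$.

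The main obstacle, in my view, is the careful handling of unbounded fibers, since Theorem~\ref{main_param_th} produces a representation only on the subset $\{y \colon \PC_{y} \text{ is bounded}\}$: the role of Lemmas~\ref{lines_eliminating_lm} and~\ref{param_bounding_lm} is to glue this representation together with the $+\infty$ region into a single piece-wise object defined on all of $\RR^{n_y}$, and to ensure that the combinatorial bounds on $f_{n_y-1}$, $f_{n_y}$ and $\mu$ survive the reduction. The remaining ingredients — the binomial estimates for the face numbers, the base-counting bound on $\mu$, and the arithmetic of substituting into Theorem~\ref{main_param_th} — are routine once this reduction is in place.
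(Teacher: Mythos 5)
Your plan reproduces the paper's proof: normalize via Remark~\ref{form_trans_rm} and Lemmas~\ref{lines_eliminating_lm}--\ref{param_bounding_lm}, bound $f_{n_y-1}$, $f_{n_y}$, $\mu$ by binomial coefficients, enumerate the relevant faces by brute force, and substitute into Theorem~\ref{main_param_th}. There is, however, a bookkeeping slip: after applying Lemma~\ref{param_bounding_lm} one extra inequality is appended, so the effective parameters become $m = n_x + k + 1$ (not $n_x + k$) and $\Delta$ is replaced by $n_x\cdot\Delta$. This is precisely why the paper obtains $f_{n_y-1} = O(n_x/k)^{k}$, $f_{n_y},\mu = O(n_x/k)^{k+1}$ and why the theorem's evaluation bound features $\log(n_x\Delta)$; your exponents $k-1$ and $k$ and your $\log\Delta$ are too optimistic. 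You are saved only because the theorem's stated bounds are upper bounds larger than your underestimates, and you (correctly but silently) absorb the gap into ``dominated by.'' You also skip the degenerate standard-form case $\rank(A) < k$, which the paper first eliminates by Gaussian elimination before invoking the conversion of Remark~\ref{form_trans_rm}. Neither issue changes the skeleton of the argument, but both should be made explicit to have a rigorous proof.
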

\begin{proof}
    % Consider first the case, when $r := \rank(A) < k$ and assume that the first $r$ rows of $A$ are linearly independent. Using the Gaussian elimination, we transform the original system $Ax = b + By$ to an equivalent system $\binom{A_{1:r}}{\BZero} x = \binom{b_{1:r}}{\hat b} + \binom{B_{1:r}}{\hat B} y$, where the sub-system $A_{1:r} x = b_{1:r} + B_{1:r}$ represents the first $r$ lines of the original system. Clearly, the new system is feasible (even in $\RR^{n_x}$) only if $\hat b + \hat B y = \BZero$. Since the check of the equality $\hat b + \hat B y = \BZero$ can be performed by a polynomial-time algorithm for any $y \in \QQ^{n_y}$, we continue our work only with the reduced system $A_{1:r} x = b_{1:r} + B_{1:r} y$ of rank $k$.
    
    % Assume next that $\rank(A) = k$ and denote $d = n_x - k$, $m = d + k$. Due to \cite[Lemma~4 and Lemma~5]{OnCanonicalProblems_Grib}, the system in the \ref{standard_form} can be polynomially transformed to an equivalent system $A' x \leq b' + B' y$ in the \ref{canonical_form} with $A' \in \ZZ^{m \times d}$, $b' \in \QQ^{m}$ and $B' \in \QQ^{m \times n_y}$ with $\rank(A') = d$ and $\Delta(A') = \Delta(A)$. Let $\PC$ be the parametric polyhedron defined by the new system. Due to Remark \ref{B_rank_rm}, we can assume that $\rank(B') = n_y$. Due to Lemmas \ref{lines_eliminating_lm}, \ref{param_bounding_lm}, we can assume that $\PC_y$ is bounded, for any $y \in \Pi(\PC)$. The last property is achieved at the cost of replacing of $m = d + k$ by $m = d + k +1$, and $\Delta$ by $d \cdot \Delta$.

    Consider first the case, when $\PC$ is defined by a \ref{canonical_form}-system. Due to Remark \ref{B_rank_rm}, we can assume that $\rank(B) = n_y$. Due to Lemmas \ref{lines_eliminating_lm} and \ref{param_bounding_lm}, we can assume that $\PC_y$ is bounded, for any $y \in \Pi(\PC)$. The last property is achieved at the cost of replacing of $m = n_x + k$ by $m = n_x + k +1$ and $\Delta$ by $n_x \cdot \Delta$.

    We are going to use our main Theorem \ref{main_param_th}. To this end, we need to estimate the values $f_{n_y}$, $f_{n_y - 1}$, $\mu$ and the complexity to enumerate $n_y$-dimensional and $(n_y-1)$-dimensional faces of $\PC$. For our purposes, it is sufficient to use straightforward estimates for $f_{n_y}$, $f_{n_y - 1}$ and $\mu$. It follows that
    \begin{enumerate}
        \item $f_{n_y} \leq \binom{m}{(n_x + n_y)-n_y} = \binom{n_x + k +1}{k+1} = O(n_x/k)^{k+1}$;

        \item $f_{n_y-1} \leq \binom{m}{(n_x + n_y)-(n_y-1)} = \binom{n_x + k +1}{k} = O(n_x/k)^{k}$;

        \item $\mu \leq \binom{m}{n_x} = \binom{n_x+k+1}{k+1} = O(n_x/k)^{k+1}$.
    \end{enumerate}
    We enumerate all the $n_y$-dimensional and $(n_y-1)$-dimensional faces of $\PC$ just by straightforward enumeration of the corresponding sub-systems. Clearly, the complexity of such an enumeration procedure can be estimated by $O(n_x/k)^{k+1}\cdot \poly(n_x,n_y,k)$.

    Consider now the case, when $\PC$ is defined by a \ref{standard_form}-system. Assume that $r := \rank(A) < k$ and the first $r$ rows of $A$ are linearly independent. Using the Gaussian elimination, we transform the original system $Ax = b + By$ to an equivalent system $\binom{A_{1:r}}{\BZero} x = \binom{b_{1:r}}{\hat b} + \binom{B_{1:r}}{\hat B} y$, where the sub-system $A_{1:r} x = b_{1:r} + B_{1:r}$ represents the first $r$ lines of the original system. Clearly, the new system is feasible, even in $\RR^{n_x}$, only if $\hat b + \hat B y = \BZero$. Since, for any $y \in \QQ^{n_y}$, we can check the equality $\hat b + \hat B y = \BZero$ by a polynomial-time algorithm, we will continue our work only with the reduced system $A_{1:r} x = b_{1:r} + B_{1:r} y$ of rank $k$.
    
    % Since the check of the equality $\hat b + \hat B y = \BZero$ can be performed by a polynomial-time algorithm for any $y \in \QQ^{n_y}$, we continue our work only with the reduced system $A_{1:r} x = b_{1:r} + B_{1:r} y$ of rank $k$.

    Assume that $\rank(A) = k$ and denote $n_x' = n_x - k$, $m' = n_x' + k$. Due to Remark \ref{form_trans_rm}, any system in \ref{standard_form} can be polynomially transformed to an equivalent system $A' x \leq b' + B' y$ in the \ref{canonical_form} with $A' \in \ZZ^{m' \times n_x'}$, $b' \in \QQ^{m'}$ and $B' \in \QQ^{m' \times n_y}$, and with $\rank(A') = n_x'$ and $\Delta(A') = \Delta(A)$. To finish the proof, we just need to use the former reasoning to the new \ref{canonical_form}-system.
\end{proof}

In the next theorems, we present the same complexity analysis with respect to coefficients of the rational/integer piece-wise quasi-polynomials of $\PC$.
\begin{theorem}\label{fixed_k_main_th2}
    Let $\PC$ be a polyhedron, defined by a system in \ref{canonical_form} or \ref{standard_form} of the co-dimension $k$. Let $f(y)$ and $\QS$ be the corresponding rational piece-wise Ehrhart's quasi-polynomial representation of $\QEnum$ and its chamber decomposition. 

    Then, there exists a preprocessing algorithm with the arithmetic complexity bound 
    $$
    O(n_x/k)^{2 k (n_y + 1)} \cdot \Delta^3 \cdot \poly(n_x,n_y,k),
    $$ which allows to compute any of the coefficients $a_{\jB}(y)$ by an algorithm with the arithmetic complexity bound
    $$
    O(n_x/k)^{k+1} \cdot n_x \cdot\bigl(\log(n_x \Delta) + n_y\bigr),
    $$
    for any given chamber $\QC \in \QS$ and $y \in \relint(\QC)$. 
\end{theorem}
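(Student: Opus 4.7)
The plan is to prove Theorem \ref{fixed_k_main_th2} by following the exact same reduction scheme as in Theorem \ref{fixed_k_main_th}, but invoking Theorem \ref{my_real_Ehr_comp_th} in place of Theorem \ref{main_param_th}. All ingredients are already at hand; only the combinatorial substitution needs to be carried out.

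First, I would reduce the standard-form case to the canonical-form case exactly as in Theorem \ref{fixed_k_main_th}: a Gaussian elimination on a \ref{standard_form}-system lets me assume $\rank(A) = k$ (checking in polynomial time, for any query $y$, the consistency of the parametric redundant rows), after which Remark \ref{form_trans_rm} polynomially transforms the system into an equivalent \ref{canonical_form}-system with $m' = n_x' + k$, $\rank(A') = n_x'$, and unchanged $\Delta$. In the canonical case, Remark \ref{B_rank_rm} allows us to assume $\rank(B) = n_y$, while Lemmas \ref{lines_eliminating_lm} and \ref{param_bounding_lm} allow us to assume that $\PC_y$ is bounded for every $y \in \Pi(\PC)$, at the cost of replacing $m = n_x + k$ by $n_x + k + 1$ and $\Delta$ by $n_x \cdot \Delta$. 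Under these assumptions, the hypotheses of Theorem \ref{my_real_Ehr_comp_th} are satisfied, and it can be applied directly.

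Next, I would insert the face-counting estimates already used in Theorem \ref{fixed_k_main_th}, namely
\begin{align*}
f_{n_y} &\leq \binom{n_x + k + 1}{k+1} = O(n_x/k)^{k+1},\\
f_{n_y-1} &\leq \binom{n_x + k + 1}{k} = O(n_x/k)^{k},\\
\mu &\leq \binom{n_x + k + 1}{k+1} = O(n_x/k)^{k+1},
\end{align*}
together with a straightforward sub-system enumeration of the $n_y$- and $(n_y-1)$-dimensional faces in $O(n_x/k)^{k+1}\cdot \poly(n_x,n_y,k)$ time. Substituting into the preprocessing bound $O^*\bigl((f_{n_y-1})^{n_y} \cdot f_{n_y} + (f_{n_y-1})^{2n_y} \cdot (f_{n_y-1} + \mu^2 \cdot \Delta^3)\bigr)$ of Theorem \ref{my_real_Ehr_comp_th} yields the claimed $O(n_x/k)^{2k(n_y+1)} \cdot \Delta^3 \cdot \poly(n_x,n_y,k)$, after absorbing the extra $n_x^3$ from $\Delta \mapsto n_x\Delta$ into the polynomial factor. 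The per-coefficient evaluation bound $O\bigl(\mu \cdot n_x \cdot (\log \Delta + n_y)\bigr)$ from Theorem \ref{my_real_Ehr_comp_th} likewise simplifies to $O(n_x/k)^{k+1} \cdot n_x \cdot (\log(n_x\Delta) + n_y)$ as claimed.

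The main obstacle is purely bookkeeping: the replacement $\Delta \mapsto n_x\Delta$ from Lemma \ref{param_bounding_lm} has to be threaded through both the logarithm (giving the $\log(n_x\Delta)$ in the evaluation bound) and the $\Delta^3$ factor in the preprocessing bound, and the increment $m \mapsto m+1$ must be checked to be harmless for the binomial estimates. No new idea is required beyond the already-established Theorem \ref{my_real_Ehr_comp_th}.
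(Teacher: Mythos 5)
Your proposal is correct and matches the paper's intended argument: the paper explicitly states that Theorem \ref{fixed_k_main_th2} is obtained from Theorem \ref{my_real_Ehr_comp_th} by the same reduction used to deduce Theorem \ref{fixed_k_main_th} from Theorem \ref{main_param_th}, which is precisely what you carry out. Your bookkeeping of the face-count bounds, the $\Delta \mapsto n_x\Delta$ substitution from Lemma \ref{param_bounding_lm}, and the absorption of residual factors into $\poly(n_x,n_y,k)$ is consistent with the claimed bounds.
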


\begin{theorem}\label{fixed_k_main_th3}
    Let $\PC$ be a polyhedron, defined by a system in \ref{canonical_form} or \ref{standard_form} of the co-dimension $k$. Let $f(y)$ and $\QS$ be the corresponding integer piece-wise Ehrhart's quasi-polynomial representation of $\ZEnum$ and its chamber decomposition.

    Then, the complete representation of $f(y)$ can be computed by an algorithm with the complexity bound
    $$
    \Delta_{\lcm}^k \cdot M \cdot O(n_x/k)^{2 k n_y + n_y + k} \cdot \poly(\phi),
    $$ where $M = O(n_x^{n_y})$ is the maximum number of monomials and $\phi$ is the input size.
\end{theorem}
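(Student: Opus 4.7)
The plan is to combine the preprocessing output of Theorem \ref{fixed_k_main_th2} with the tabulation procedure of Corollary \ref{my_int_Ehr_comp_cor}. As a preliminary step, I would adopt the reduction used in the proof of Theorem \ref{fixed_k_main_th}: via Remark \ref{form_trans_rm} together with Lemmas \ref{lines_eliminating_lm} and \ref{param_bounding_lm}, we may assume that $\PC$ is given in \ref{canonical_form} with $m = n_x + k + O(1)$ and that every non-empty slice $\PC_y$ is bounded. The straightforward binomial face-counts then yield $f_{n_y} = O(n_x/k)^{k+1}$, $f_{n_y-1} = O(n_x/k)^{k}$, $\mu = O(n_x/k)^{k+1}$, and Theorem \ref{chamber_decomp_th} gives $|\QS| = O(n_x/k)^{2kn_y}$. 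Running Theorem \ref{fixed_k_main_th2} produces the piece-wise periodic step-polynomial representation together with the data required to evaluate any single rational Ehrhart coefficient $a_{\jB}(y)$ in $O(n_x/k)^{k+1} \cdot n_x \cdot (\log(n_x\Delta) + n_y)$ operations; its preprocessing cost $O(n_x/k)^{2k(n_y+1)} \cdot \Delta^3 \cdot \poly(\phi)$ is comfortably dominated by the target bound (since $\Delta \leq \Delta_{\lcm}$).

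Next, I would apply Corollary \ref{my_int_Ehr_comp_cor} chamber by chamber. By Theorem \ref{Z_multi_Ehrh_th}, every periodic coefficient of the integer Ehrhart quasi-polynomial on a chamber $\QC$ admits $\chdenom_{\ZZ}(\QC)\cdot\BUnit$ as period-vector, hence takes at most $(\chdenom_{\ZZ}(\QC))^{n_y}$ distinct values, each recoverable in $O(\mu \cdot n_x \cdot (\log\Delta + n_y))$ arithmetic operations via Theorem \ref{my_real_Ehr_comp_th}. The crucial arithmetic estimate is
\begin{equation*}
\chdenom_{\ZZ}(\QC) \;\leq\; \lcm_{\BC \in \base(\QC)} |\det A_{\BC}| \;\leq\; \Delta_{\lcm},
\end{equation*}
which follows from the parametric-vertex formula $\VC_{\BC}(y) = A_{\BC}^{-1}(b_{\BC} - B_{\BC} y)$ together with the fact that the denominators of the entries of $A_{\BC}^{-1}$ divide $|\det A_{\BC}|$. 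Summing the per-coefficient evaluation cost over all chambers, all $M = O(n_x^{n_y})$ monomials, and all residue classes modulo the period-lattice then gives a total of
\begin{equation*}
|\QS| \cdot M \cdot \Delta_{\lcm}^{n_y} \cdot O\bigl(\mu \cdot n_x \cdot (\log\Delta + n_y)\bigr) \;=\; \Delta_{\lcm}^{n_y} \cdot M \cdot O(n_x/k)^{2kn_y + k + 1} \cdot \poly(\phi)
\end{equation*}
arithmetic operations to populate the full hash-table of coefficient values.

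The main obstacle is rewriting this direct estimate into the announced form $\Delta_{\lcm}^k \cdot M \cdot O(n_x/k)^{2kn_y+n_y+k} \cdot \poly(\phi)$. Under the natural hypothesis $n_y \leq k$ (automatic for \ref{standard_form} systems, where the parametric right-hand side lies in the image of $A$, and used implicitly in Proposition \ref{fixed_k_main_result3}), we obtain $\Delta_{\lcm}^{n_y} \leq \Delta_{\lcm}^{k}$, while the exponent gap $n_y - 1$ in the $n_x/k$ factor is absorbed by writing $n_x = k \cdot (n_x/k)$ and pushing $k^{O(n_y)}$ constants into $\poly(\phi)$. The residual work of wiring the hash-table indexing from residue classes in $\ZZ^{n_y}/\chdenom_{\ZZ}(\QC)\cdot\ZZ^{n_y}$ to the stored monomial coefficients, and of locating the chamber of a query point via Theorem \ref{chamber_decomp_th}, is then routine and does not affect the bound.
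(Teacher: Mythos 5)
Your proposal is correct and follows essentially the same route the paper sketches (it states that the result is "easily deduced from Theorem~\ref{my_real_Ehr_comp_th} and Corollary~\ref{my_int_Ehr_comp_cor} in the same way as Theorem~\ref{fixed_k_main_th} has been deduced from Theorem~\ref{main_param_th}"): reduce to the canonical form with $m = n_x + k + O(1)$, plug the binomial face-count bounds into the preprocessing and tabulation costs of Theorem~\ref{my_real_Ehr_comp_th} and Corollary~\ref{my_int_Ehr_comp_cor}, and bound each $\chdenom_{\ZZ}(\QC)$ by $\Delta_{\lcm}$ via the parametric-vertex formula $\VC_{\BC}(y) = A_{\BC}^{-1}(b_{\BC} - B_{\BC} y)$ and Cramer's rule. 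The working hypothesis $n_y \leq k$ you invoke is indeed the setting the paper has in mind (see the remark just before system \eqref{fixed_codim_system}), and the exponent $2kn_y + k + 1$ you actually derive is at least as strong as the claimed $2kn_y + n_y + k$ since $n_y \geq 1$.
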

Proofs can be easily deduced from Theorem \ref{my_real_Ehr_comp_th} and Corollary \ref{my_int_Ehr_comp_cor} in the same way as Theorem \ref{fixed_k_main_th} has been deduced from Theorem \ref{main_param_th}.

\subsection{Polyhedra, defined by general-type systems}\label{nyfixed_subs}

In the following result, we use our main Theorem \ref{main_param_th} to construct a complexity bound for the parametric counting problem with respect to general parametric polyhedra, assuming that the dimension $n_y$ of the parametric space is fixed. The cases, when $\PC_y$ is unbounded, for some $y \in \QQ^{n_y}$, are again handled, using Lemmas \ref{lines_eliminating_lm} and \ref{param_bounding_lm}.

\begin{theorem}\label{fixed_ny_main_th1}
    Let $\PC$ be a polyhedron, defined by the \ref{canonical_form}. Then, the periodic piece-wise step-polynomial representation of the function $\QEnum$ can be computed by an algorithm with the complexity
    $$
    O^*\bigl( m^{\frac{n_x n_y}{2}} \cdot (m^{\frac{n_x}{2}} + \mu^2 \cdot \Delta^3) \bigr).
    $$ For any $y \in \QC^{n_y}$, the value of $\QEnum(y)$ can be found with
    $$
    O\bigl(n_y \cdot f_{n_y-1} + \mu \cdot n_x \cdot \bigl(\log (n_x \Delta) + n_y\bigr)\bigr)\quad\text{operations.}
    $$
\end{theorem}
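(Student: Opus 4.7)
The plan is to derive Theorem \ref{fixed_ny_main_th1} as a direct corollary of our general Theorem \ref{main_param_th}, so the work is concentrated in (a) reducing to the bounded-fibre case, (b) establishing sharp bounds on the face-numbers $f_{n_y}$ and $f_{n_y-1}$ together with $\mu$ in terms of $m$, and (c) providing an enumeration of the $n_y$- and $(n_y-1)$-dimensional faces whose cost does not exceed the preprocessing bound we are aiming for.

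First I would reduce to the case where $\PC_y$ is bounded for every $y \in \Pi(\PC)$. This is done exactly as in the proof of Theorem \ref{fixed_k_main_th}: by Remark \ref{B_rank_rm} one may assume $\rank(B) = n_y$, and then Lemmas \ref{lines_eliminating_lm} and \ref{param_bounding_lm} add at most one auxiliary inequality, replacing $m$ by $m+1$ and $\Delta$ by $n_x \cdot \Delta$. This is precisely the source of the $\log(n_x \Delta)$ appearing in the stated evaluation complexity.

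Next I would bound the combinatorial parameters that feed into Theorem \ref{main_param_th}. Since $\PC$ is a polyhedron in $\RR^{n_x+n_y}$ with $m$ facet-defining inequalities, the relevant face-numbers can be controlled via McMullen's Upper Bound Theorem (after a generic perturbation of the right-hand side, which does not affect the integer hull and yields a simple polytope). This gives $f_j = O\bigl(m^{\lfloor(n_x+n_y)/2\rfloor}\bigr)$ for every $j$, and in particular, with $n_y$ fixed, $f_{n_y-1}, f_{n_y} = O^*\bigl(m^{n_x/2}\bigr)$. The triangulation parameter $\mu$ is simply carried along as a separate symbol in the final bound. Enumeration of the $n_y$- and $(n_y-1)$-dimensional faces can then be performed by iterating over subsets of $n_x$ and $n_x+1$ tight inequalities and checking rank/feasibility, but to stay within the target the enumeration must be driven by the actual face-lattice structure rather than the naive $\binom{m}{n_x+1}$ bound; standard output-sensitive face-enumeration combined with the UBT bound achieves this in time $O^*\bigl(m^{n_x/2}\bigr)$.

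Plugging these estimates into the preprocessing bound $O^*\bigl(f_{n_y-1}^{n_y} f_{n_y} + f_{n_y-1}^{2n_y} (f_{n_y-1} + \mu^2 \Delta^3)\bigr)$ from Theorem \ref{main_param_th} then yields the claimed preprocessing cost of the form $O^*\bigl(m^{n_x n_y/2}(m^{n_x/2} + \mu^2 \Delta^3)\bigr)$, and substituting into the query-time bound of the same theorem, with $\Delta$ replaced by $n_x \Delta$ coming from the boundedness reduction, yields the evaluation bound $O(n_y \cdot f_{n_y-1} + \mu \cdot n_x (\log(n_x\Delta) + n_y))$. The main obstacle in the plan is step (b): getting the face counts tight enough to match the exponent $n_x n_y/2$ in the preprocessing bound. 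A trivial bound $f_{n_y-1} \leq \binom{m}{n_x+1}$ is far too coarse, and the UBT bound $O(m^{\lfloor(n_x+n_y)/2\rfloor})$ is what delivers the $m^{n_x/2}$ factor; any loss here propagates with exponent $2n_y$ into the preprocessing complexity, so the perturbation/simplicity argument and the face-enumeration procedure must be handled carefully.
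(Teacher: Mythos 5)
Your proposal is correct and follows essentially the same route as the paper: reduce to the bounded-fibre case via Remark~\ref{B_rank_rm} and Lemmas~\ref{lines_eliminating_lm} and \ref{param_bounding_lm}, bound $f_{n_y-1}$ and $f_{n_y}$ by $O^*(m^{n_x/2})$ using McMullen's Upper Bound Theorem, enumerate the faces of dimension at most $n_y$ output-sensitively (the paper instantiates this with Avis--Fukuda for vertex enumeration plus Kaibel--Pfetsch for face enumeration, arriving at $O^*(m^{n_x})$ for the enumeration step rather than your stated $O^*(m^{n_x/2})$, though either way this is dominated by the other preprocessing terms), and then substitute into Theorem~\ref{main_param_th}. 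The perturbation step you invoke is unnecessary: the face-count formula the paper cites already holds for arbitrary, not only simple, $d$-polytopes with $m$ facets, so no genericity argument is needed.
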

\begin{proof}
    Due to Remark \ref{B_rank_rm} and Lemma \ref{lines_eliminating_lm}, we can assume that $\rank(B) = n_y$ and $\rank(A) = n_x$. Due to Lemma \ref{param_bounding_lm}, we can assume that $\PC_y$ is bounded, for any $y \in \Pi(\PC)$. The last property is achieved at the cost of replacing $m$ by $m+1$, and $\Delta$ by $n_x \cdot \Delta$, which can be hided by $O^*(\cdot)$ in the resulting complexity bound. Due to Remark \ref{dim_PC_rm}, we can assume that $d := \dim(\PC) = n_x + n_y$. 

    Let $\mu^* = \mu(\bigl( A\,B \bigr)^\top)$. Due to Lemma \ref{nu_mu_lm}, the value of $\mu^*$ can be estimated by $O\bigl(\frac{m}{d}\bigr)^{\frac{d+1}{2}} = O^*\bigl((\frac{m}{n_x})^{\frac{n_x}{2}}\bigr)$. Hence, due to D.~Avis \& K.~Fukuda \cite{AvisFukuda}, all the vertices of $\PC$ can be enumerated by an algorithm with the same complexity bound $O^*\bigl((\frac{m}{n_x})^{\frac{n_x}{2}}\bigr)$. Due to V.~Kaibel \& M.~Pfetsch \cite{FacesEnum_Kaibel}, all faces of dimension $\leq n_y$ can be enumerated by an algorithm with the complexity bound $O(m \cdot \alpha \cdot \phi^{\leq n_y}) = O^*\bigl((\frac{m}{n_x})^{\frac{n_x}{2}} \cdot \phi^{\leq n_y}\bigr)$, where $\phi^{\leq n_y}$ is the number of faces and $\alpha$ is the number of vertex-facet incidences. Due to \cite[Theorem~(7.4), p.~273]{Barv_book_convexity}, 
    \begin{equation*}
        f_k(\PC) = \sum\limits_{i = k}^{\lfloor d/2 \rfloor} \binom{i}{k} \binom{m-d+i-1}{i} + \sum\limits_{i = \lfloor d/2 \rfloor + 1}^{d} \binom{i}{k} \binom{m-i-1}{d-i}.
    \end{equation*}
    Since $n_y$ is a constant and $k \leq n_y$, it can be directly checked that $f_k(\PC) = O^*(m^{d/2}) = O^*(m^{n_x/2})$. Summarizing, we state that all the faces of $\PC$ of dimension $\leq n_y$ can be enumerated by an algorithm with the complexity bound $O^*(m^{n_x})$.

    % $O^*(m^{n_x+2})$, 

    Now, we can use the complexity bound of Theorem \ref{main_param_th}. Due to \ref{main_param_th}, the bound for $\phi^{\leq n_y}$, and the corresponding complexity bound, the periodic piece-wise step-polynomial representation of $\QEnum$ can be computed by an algorithm with the complexity bound
    \begin{equation*}
        O^*\bigl( m^{\frac{n_x n_y}{2}} \cdot (m^{\frac{n_x}{2}} + \mu^2 \cdot \Delta^3) \bigr).
    \end{equation*}
    % Again, due to Lemma \ref{nu_mu_lm}, $\mu(A) = O^*\bigl((\frac{m}{n_x})^{n_x/2}\bigr)$, the last complexity bound becomes
    % \begin{equation*}
    %     O^*\bigl( m^{\frac{n_x(n_y+2)}{2}} \cdot \Delta^3 \bigr).
    % \end{equation*}

    Finally, due to Theorem \ref{main_param_th}, for any $y \in \QQ^{n_y}$, the value of $\QEnum(y)$ can be evaluated by an algorithm with the complexity bound $O\bigl(n_y \cdot f_{n_y-1} + \mu \cdot n_x \cdot \bigl(\log(n_x \Delta) + n_y\bigr)\bigr)$, 
    % = O( \frac{m}{n_x} )^{n_x/2} \cdot n_x \cdot (\log \Delta + n_y)$, 
    which completes the proof.
\end{proof}

In the next theorem, we do the same complexity analysis with respect to coefficients of the rational piece-wise quasi-polynomial of $\PC$.
\begin{theorem}\label{fixed_ny_main_th2}
    Let $\PC$ be a polyhedron, defined by the \ref{canonical_form}. Let $f(y)$ and $\QS$ be the corresponding rational piece-wise Ehrhart's quasi-polynomial representation of $\QEnum$ and its chamber decomposition. Then, there exists a preprocessing algorithm with the complexity bound
    $$
    O^*\bigl( m^{\frac{n_x n_y}{2}} \cdot (m^{\frac{n_x}{2}} + \mu^2 \cdot \Delta^3) \bigr),
    $$ which allows to compute any of the coefficients $a_{\jB}(y)$ by an algorithm with the arithmetic complexity bound
    $$
    O\bigl(\mu \cdot n_x \cdot \bigl(\log (n_x \Delta) + n_y\bigr)\bigr),
    $$ for any given $\QC \in \QS$ and $y \in \relint(\QC)$. Given $y \in \QQ^{n_y}$, the corresponding chamber $\QC \in \QS$ with $y \in \relint(\QC)$ can be found with $O(n_y \cdot f_{n_y-1})$ operations.
\end{theorem}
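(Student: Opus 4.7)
The plan is to mirror the argument used to prove Theorem \ref{fixed_ny_main_th1}, but with Theorem \ref{my_real_Ehr_comp_th} taking the place of Theorem \ref{main_param_th}; the statement of Theorem \ref{fixed_ny_main_th2} is essentially what one obtains by pushing the same face-counting estimates through the sharper complexity bounds of Theorem \ref{my_real_Ehr_comp_th}.

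First I would normalize the input exactly as in the proof of Theorem \ref{fixed_ny_main_th1}. By Remark \ref{B_rank_rm} and Lemma \ref{lines_eliminating_lm} we may assume $\rank(B)=n_y$ and $\rank(A)=n_x$, and by Lemma \ref{param_bounding_lm} we may assume that $\PC_y$ is bounded for every $y \in \Pi(\PC)$. This costs a replacement of $m$ by $m+1$ and of $\Delta$ by $n_x \cdot \Delta$, both absorbed by the $O^*(\cdot)$-notation. In particular $\dim(\PC) = n_x + n_y$ is simplicial-friendly so that the Upper Bound Theorem applies.

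Next I would estimate the relevant face counts. As in the proof of Theorem \ref{fixed_ny_main_th1}, the Upper Bound Theorem (see the formula for $f_k(\PC)$ quoted there from \cite{Barv_book_convexity}) together with $n_y$ treated as in that proof gives $f_{n_y}(\PC),\,f_{n_y-1}(\PC) = O^*(m^{n_x/2})$. All faces of dimension at most $n_y$ can be enumerated in time $O^*(m^{n_x})$ by combining vertex enumeration (Avis--Fukuda) with the face-enumeration algorithm of Kaibel--Pfetsch, exactly as in the proof of Theorem \ref{fixed_ny_main_th1}. This step is negligible with respect to the preprocessing cost coming from Theorem \ref{my_real_Ehr_comp_th}.

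Finally, I would feed these estimates into Theorem \ref{my_real_Ehr_comp_th}. Substituting $f_{n_y}, f_{n_y-1} = O^*(m^{n_x/2})$ into the preprocessing bound
$O^*\bigl((f_{n_y-1})^{n_y} \cdot f_{n_y} + (f_{n_y-1})^{2n_y} \cdot (f_{n_y-1} + \mu^2 \cdot \Delta^3)\bigr)$
yields, after absorbing polynomial factors in $n_y$, the announced bound
$O^*\bigl(m^{\frac{n_x n_y}{2}} \cdot (m^{\frac{n_x}{2}} + \mu^2 \cdot \Delta^3)\bigr)$.
The per-coefficient bound $O(\mu \cdot n_x \cdot (\log(n_x \Delta) + n_y))$ is the evaluation bound from Theorem \ref{my_real_Ehr_comp_th}, updated for the replacement $\Delta \leftarrow n_x \Delta$, and the chamber-location bound $O(n_y \cdot f_{n_y-1})$ is the chamber-lookup bound from the same theorem (equivalently, from Theorem \ref{chamber_decomp_th}). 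I do not expect a genuine obstacle: the whole argument is a bookkeeping exercise transferring the Upper Bound Theorem estimates into the complexity statement of Theorem \ref{my_real_Ehr_comp_th}. The only delicate point to verify carefully is that the constant-$n_y$ simplification of the Upper Bound Theorem really absorbs the $(f_{n_y-1})^{2n_y}$ factor into the claimed $m^{n_x n_y / 2}$, exactly as is done in the proof of Theorem \ref{fixed_ny_main_th1}.
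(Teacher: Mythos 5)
Your proposal is exactly the argument the paper has in mind: the paper's own ``proof'' of Theorem~\ref{fixed_ny_main_th2} consists solely of the remark that it follows from Theorem~\ref{my_real_Ehr_comp_th} in the same way that Theorem~\ref{fixed_ny_main_th1} follows from Theorem~\ref{main_param_th}, which is precisely the substitution you carry out. The ``delicate point'' you flag at the end (whether $(f_{n_y-1})^{2n_y}$ with $f_{n_y-1}=O^*(m^{n_x/2})$ really gives $m^{n_x n_y/2}$ rather than $m^{n_x n_y}$) is indeed a real concern, but it is inherited verbatim from the proof of Theorem~\ref{fixed_ny_main_th1} in the paper itself, so your write-up is faithful to the paper's reasoning.
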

A proof can be deduced from Theorem \ref{my_real_Ehr_comp_th} in the same way as Theorem \ref{fixed_ny_main_th1} has been deduced from Theorem \ref{main_param_th}.

\section{Preliminaries}\label{prelim_sec}

\subsection{Valuations and indicator functions of polyhedra}\label{valuations_subs}

In this Subsection, we mainly follow to the monographs \cite{BarvBook,BarvPom} in the most definitions and notations. Let $\VC$ be the Euclidean space and $\Lambda \subset \VC$ be an integer lattice.

\begin{definition}
Let $\AC \subseteq \VC$ be a set. The \emph{indicator} $[\AC]$ of $\AC$ is the function $[\AC]\colon \VC \to \RR$, defined by
$$
[\AC](x) = \begin{cases}
1\text{, if }x \in \AC\\
0\text{, if }x \notin \AC.
\end{cases}
$$ 
The \emph{algebra of polyhedra} $\PS(\VC)$ is the vector space, defined as the span of the indicator functions of all the polyhedra $\PC \subset \VC$.
\end{definition}

\begin{definition} Let $\PC \subseteq \VC$ be a set. The \emph{polar} $\PC^{\circ}$ of $\PC$ is defined by $$
\PC^\circ = \bigl\{ x \in \VC \colon \langle x,y \rangle \leq 1 \; \forall y \in \PC\bigr\},
$$ and the dual lattice is defined by
$$
\Lambda^{\circ} = \bigl\{ x \in \VC \colon \langle x, y \rangle \in \ZZ\; \forall y \in \Lambda \bigr\}.
$$
\end{definition}

\begin{definition}
The polyhedron $\PC \subseteq \VC$ is called \emph{rational} if it can be defined by a system of finitely many inequalities 
$$
\langle a_i, x \rangle \leq b_i,
$$
where $a_i \in \Lambda^{\circ}$ and $b_i \in \ZZ$. The \emph{algebra of rational polyhedra} $\PS(\QQ\VC)$ is the vector space, defined as the span of the indicator functions of all the rational polyhedra $\PC \subset \VC$.
\end{definition}

\begin{definition}
A linear transformation $\TC \colon \PS(\VC) \to \WC$, where $\WC$ is a vector space, is called a \emph{valuation}. We consider only \emph{$\LC$-valuations} or \emph{lattice valuations} that satisfy
$$
\TC\bigl([\PC + u]\bigr) = \TC\bigl([\PC]\bigr), \quad \text{for all rational polytopes }\PC \text{ and }u \in \LC,
$$ see \cite[pp. 933--988]{ValuationsAndDissections}, \cite{ValuationsOnConvecBodies}.
\end{definition}

\begin{remark}\label{quasipoly_interpolation} Let us denote $g(\PC) = \TC\bigl([\PC]\bigr)$, for a lattice valuation $\TC$. The general result of P.~McMullen \cite{LatticeInvariantValuations} states that if $\PC \subset \VC$ is a rational polytope, $d = \dim(\PC)$, and $t \in \NN$ is a number, such that $t \cdot \PC$ is a lattice polytope, then there exist functions $g_i(\PC,\cdot) \colon \ZZ_{\geq 0} \to \CC$, such that 
\begin{gather*}
g(\alpha \cdot \PC) = \sum\limits_{i=0}^d g_i(\PC, \alpha) \cdot \alpha^i, \quad\text{ for all } \alpha \in \ZZ_{\geq 0}\text{, and}\\
g_i(\PC, \alpha + t) = g_i(\PC, \alpha), \quad\text{ for all } \alpha \in \ZZ_{\geq 0}.
\end{gather*}
\end{remark}

\begin{theorem}[Theorem~2.3 of \cite{BarvPom}]\label{linear_map_eval_th}
Let $\VC$ and $\WC$ be finite-dimensional vector spaces, and let $T \colon \VC \to \WC$ be an affine transformation. Then 
\begin{enumerate}
\item[1)] For every polyhedron $\PC \subset \VC$, the image $T(\PC) \subset \WC$ is a polyhedron;
\item[2)] There is a unique linear transformation (valuation) $\TC \colon \PS(\VC) \to \PS(\WC)$, such that 
$$
\TC\bigl([\PC]\bigr) = \bigl[T(\PC)\bigr], \text{ for every polyhedron }\PC \subset \VC.
$$
\end{enumerate}
\end{theorem}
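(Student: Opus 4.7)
The plan is to prove the two parts separately. Part 1 is a set-theoretic fact, while Part 2 is the construction of the pushforward valuation, for which I would use the Euler characteristic as the key technical tool.

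For Part 1, decompose the affine map $T$ as a linear map followed by a translation; since translations preserve the class of polyhedra, it suffices to handle the linear case. Given $\PC = \{x \in \VC : \langle a_i, x\rangle \leq b_i\}$, form the graph polyhedron
$$
\widetilde \PC = \bigl\{(x,y) \in \VC \oplus \WC : \langle a_i, x\rangle \leq b_i,\ y = T(x)\bigr\}.
$$
Then $T(\PC)$ is the image of $\widetilde \PC$ under the coordinate projection onto $\WC$. Fourier--Motzkin elimination shows that the projection of a polyhedron along any single coordinate is again a polyhedron; iterating $\dim(\VC)$ times yields the claim.

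For the existence in Part 2, I would invoke the Euler-characteristic valuation, i.e.\ the unique linear map $\chi \colon \PS(\VC) \to \ZZ$ satisfying $\chi([\PC]) = 1$ for every non-empty polyhedron $\PC \subseteq \VC$ (this is a foundational fact of the theory; see e.g. Barvinok's monograph). Then, for $f \in \PS(\VC)$ and $w \in \WC$, define
$$
\TC(f)(w) = \chi\bigl(f \cdot [T^{-1}(w)]\bigr).
$$
Linearity of $\TC$ in $f$ is immediate from the linearity of $\chi$ together with bilinearity of the product of indicators on $\PS(\VC)$. Evaluating on a polyhedron gives
$$
\TC([\PC])(w) = \chi\bigl([\PC \cap T^{-1}(w)]\bigr),
$$
which equals $1$ precisely when $\PC \cap T^{-1}(w) \neq \emptyset$, i.e.\ when $w \in T(\PC)$, and $0$ otherwise. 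Hence $\TC([\PC]) = [T(\PC)]$, and by Part 1 $\TC$ maps generators of $\PS(\VC)$ to indicators of polyhedra in $\WC$, so its image indeed lies in $\PS(\WC)$. Uniqueness is then immediate: by the very definition of $\PS(\VC)$ as the span of indicators $[\PC]$, two linear maps agreeing on all such generators must coincide.

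The main obstacle in this program is not the pushforward construction, which is essentially formal, but rather the existence and well-definedness of the Euler-characteristic valuation $\chi$. Concretely, one must verify that every linear relation $\sum_i c_i [\PC_i] = 0$ among indicators of non-empty polyhedra forces $\sum_i c_i = 0$. This is typically established by reducing via inclusion-exclusion on faces to the case of simplices, where the claim follows from a direct combinatorial computation; alternatively, one may use the Brianchon--Gram theorem together with exponential generating functions. Once $\chi$ is available, the remainder of the argument proceeds without difficulty.
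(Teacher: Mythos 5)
The paper does not prove this result; it is quoted verbatim as Theorem~2.3 of Barvinok--Pommersheim, so there is no in-paper proof to compare against. Your argument is correct and is, in fact, the standard one given in that reference: Part~1 via the graph polyhedron and Fourier--Motzkin elimination, Part~2 by defining the pushforward as $\TC(f)(w) = \chi\bigl(f\cdot[T^{-1}(w)]\bigr)$ using the Euler-characteristic valuation (Barvinok--Pommersheim's Theorem~2.2) and checking it sends $[\PC]$ to $[T(\PC)]$, with uniqueness immediate from indicators spanning $\PS(\VC)$. Two cosmetic remarks: the codomain of $\chi$ should be the ground field (say $\RR$ or $\QQ$) rather than $\ZZ$, since $\PS(\VC)$ is a vector space over that field even though $\chi$ takes values $0,1$ on generators; and the well-definedness of $\chi$ on the algebra of possibly unbounded closed polyhedra deserves the explicit truncation-by-a-large-cube argument (reduce to the compact case, where the relation stabilizes), rather than only the compact-simplex reduction you sketch, but this is exactly the foundational fact you correctly defer to the literature.
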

For $c \in \VC$, denote:
$$
\fG(\PC,c; \tau) = \sum\limits_{z \in \PC \cap \Lambda} e^{\langle c,z \rangle}.
$$
The first valuation $\FC\bigl([\PC]\bigr)$, which will be significantly used in our paper, is defined by the following restatement of the theorem, proved by J.~Lawrence \cite{Lawrence}, and, independently, by A.~Khovanskii and A.~Pukhlikov \cite{Pukhlikov}, declared as Theorem~13.8b in \cite[Section~13]{BarvBook}.
\begin{theorem}[Lawrence \cite{Lawrence}, Khovanskii \& Pukhlikov \cite{Pukhlikov}]\label{Pukhlikov_th}
    Let $\RS(\VC)$ be the space of functions in $\VC$, spanned by functions of the type
    $$
    \frac{e^{\langle c, v \rangle}}{\bigl(1 - e^{\langle c, u_1 \rangle}\bigr) \cdot \ldots \cdot \bigl(1- e^{\langle c, u_d \rangle}\bigr)},
    $$
    where $d = \dim(\VC)$, $v \in \Lambda$,  and $u_i \in \Lambda \setminus\{\BZero\}$, for $i \in \intint n$. Then, there exists a linear transformation (valuation)
    $$
    \FC \colon \PS(\QQ\VC) \to \RS(\VC),
    $$ such that the following properties hold:
    \begin{enumerate}
    
        \item[(1)] Let $\PC \subseteq \VC$ be a non-empty rational polyhedron without lines, and let $\RC := \RC_{\PC} \subseteq \VC$ be its recession cone. Then, for all $c \in \inter(K^{\circ})$, the series $\fG(\PC,c; \tau)$
        % $$
        % \sum\limits_{z \in \PC \cap \Lambda} e^{\langle c, z \rangle}
        % $$
        converges absolutely to a function $\FC\bigl([\PC]\bigr)$.
        
        \item[(2)] If $\PC$ contains a line, then $\FC\bigl([\PC]\bigr) = 0$.
        
    \end{enumerate}
\end{theorem}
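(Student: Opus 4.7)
The plan is to construct $\FC$ in three stages: first on translated simplicial cones by an explicit formula, then on indicator functions of line-free rational polyhedra via a Brion-type tangent-cone decomposition, and finally to extend by zero to polyhedra containing lines.

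I would first treat a simplicial cone $\KC = \cone(u_1, \ldots, u_d)$ with $u_i \in \Lambda$ linearly independent, translated by an apex $v \in \Lambda$. Via the half-open fundamental-parallelepiped decomposition, every lattice point in $v + \KC$ has a unique representation $v + p + \sum k_i u_i$ with $p$ in the half-open parallelepiped $\Pi \cap \Lambda$ and $k_i \in \ZZ_{\geq 0}$. Summing the resulting geometric series, for $c$ in the interior of the dual cone $\KC^\circ$, gives
$$
\fG(v + \KC, c; \tau) \;=\; \frac{e^{\langle c, v\rangle}\,\sum_{p \in \Pi \cap \Lambda} e^{\langle c, p\rangle}}{\prod_{i=1}^d \bigl(1 - e^{\langle c, u_i\rangle}\bigr)},
$$
which is visibly an element of $\RS(\VC)$. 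I would define $\FC([v+\KC])$ to be this rational function, understood by analytic continuation beyond its domain of convergence.

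Next, for an arbitrary rational polyhedron $\PC$ without lines, I would invoke Brion's tangent-cone identity: $[\PC] \equiv \sum_{v \in \vertex(\PC)} [\tcone_v(\PC)]$ modulo indicators of polyhedra containing lines, a purely combinatorial relation in $\PS(\QQ\VC)$. Each tangent cone is a translated pointed rational cone, which I would decompose into signed simplicial cones via the half-open sign decomposition of K\"oppe--Verdoolaege (so that $[\tcone_v(\PC)] = \sum_j \epsilon_j [v+\KC_j]$ as a genuine identity of indicator functions). Applying the simplicial formula term by term defines $\FC([\PC])$ as an element of $\RS(\VC)$. The crucial consistency check — independence of the choice of triangulation and sign decomposition — follows from the observation that on $c \in \inter(\RC^\circ)$ all constructions yield the absolutely convergent series $\fG(\PC,c;\tau)$; since a rational function in $\RS(\VC)$ is determined by its values on any Euclidean-open set, the analytic continuation is unambiguous. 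This simultaneously establishes property~(1).

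To extend to the full algebra $\PS(\QQ\VC)$, I would declare $\FC([\PC]) = 0$ whenever $\PC$ contains a line and extend linearly. The main obstacle is well-definedness: any relation $\sum_i \alpha_i[\PC_i] = 0$ in $\PS(\QQ\VC)$ must project to the vanishing relation $\sum_i \alpha_i \FC([\PC_i]) = 0$ in $\RS(\VC)$. The argument proceeds by splitting each $\PC_i$ into its line-free part (via Brion's tangent-cone decomposition applied to vertices of the recession-lineality quotient) and a part lying in the ideal generated by polyhedra with lines, reducing the check to simplicial cones; for those the rational-function identity is then forced by analytic continuation from the convergent-series identity on the common open set of $c$'s. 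An alternative route is to invoke the abstract Lawrence--Khovanskii--Pukhlikov valuation extension principle: a map defined on indicators of simplicial cones extends to a lattice valuation on $\PS(\QQ\VC)$ provided it respects the inclusion-exclusion identities among simplicial cones, which is precisely what the geometric-series computation above verifies. Property~(2) is then automatic from the definition, completing the construction.
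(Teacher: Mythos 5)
The paper states this theorem without proof, citing it as Theorem~13.8b of Barvinok's monograph (and the original works of Lawrence and Khovanskii--Pukhlikov), so there is no in-paper argument to compare against; your sketch must therefore stand on its own. Your skeleton --- explicit geometric series on translated simplicial lattice cones, Brion's decomposition, extension by zero to polyhedra with lines --- is the standard outline, but the well-definedness step contains a genuine gap. You assert that the rational-function identities are \emph{``forced by analytic continuation from the convergent-series identity on the common open set of $c$'s,''} and that \emph{``the geometric-series computation above verifies''} the inclusion-exclusion relations among simplicial cones. In general no such common open set exists: the generating series for two cones with different recession cones converge on disjoint regions of $c$-space. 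Already in $\RR^2$ one has the indicator identity
$$
\bigl[\{x\geq0,\,y\geq0\}\bigr]+\bigl[\{x\leq0,\,y\geq0\}\bigr]-\bigl[\{x=0,\,y\geq0\}\bigr]=\bigl[\{y\geq0\}\bigr],
$$
and since the right-hand side contains a line, consistency of your definition forces the rational-function identity
$$
\frac{1}{(1-e^{c_1})(1-e^{c_2})}+\frac{1}{(1-e^{-c_1})(1-e^{c_2})}-\frac{1}{1-e^{c_2}}=0.
$$
The first two summands converge on the opposite half-planes $c_1<0$ and $c_1>0$ and never simultaneously, so there is no ``common open set''; the identity holds, but only by the purely algebraic fact $\frac{1}{1-e^{c_1}}+\frac{1}{1-e^{-c_1}}=1$, which is not a consequence of analytic continuation from a single region of joint convergence.

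This is precisely the obstruction Barvinok's proof is organized around: he first constructs a valuation into the ring of \emph{formal} Laurent series (his Theorem~13.8a, where there is no convergence to worry about), and then proves by a separate algebraic argument that the assignment of the rational-function closed form to each line-free polyhedron is compatible with that formal valuation; the vanishing on polyhedra with lines falls out of that framework rather than being declared and then checked for consistency. Your ``alternative route,'' invoking an ``abstract Lawrence--Khovanskii--Pukhlikov valuation extension principle'' whose hypothesis is that the map respects the inclusion-exclusion relations among simplicial cones, is essentially assuming the nontrivial part of the theorem: verifying those relations \emph{is} the content, and your geometric-series computation does not verify them for cones whose convergence domains are disjoint. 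To repair the argument you would need either to interpose the formal-Laurent-series layer as a genuine intermediate object, or to prove the inclusion-exclusion identities among simplicial cones by a direct rational-function computation (in the spirit of the small example above) that does not appeal to a shared convergence domain.
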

If $\PC$ is a rational polyhedron, then $\FC\bigl([\PC]\bigr)$ is called its \emph{short rational generating function}.

\subsection{Vertices, edge directions, tangent cones, and triangulations}\label{vdtt_subs}

\begin{definition}
Let $\PC \subset \VC$ be a non-empty polyhedron, and let $v \in \PC$ be a point. The \emph{tangent cone} of $\PC$ at $v$ is defined as
$$
\tcone(\PC,v) = \bigl\{v+ y \colon v + \varepsilon y \in \PC, \; \text{ for some } \varepsilon > 0 \bigr\}.
$$
The \emph{cone of feasible directions} at $v$ is defined as
$$
\fcone(\PC,v) = \bigl\{y \colon v + \varepsilon y \in \PC, \; \text{ for some } \varepsilon > 0 \bigr\}.
$$
Thus, $\tcone(\PC,v) = v + \fcone(\PC,v)$.
\end{definition}

\begin{remark}\label{fcone_generator_rm}
If an $n$-dimensional polyhedron $\PC$ is defined by a system $A x \leq b$, then, for any $v \in \PC$, it holds
\begin{gather*}
\tcone(\PC,v) = \{ x \in \VC \colon A_{\JC(v) *} x  \leq b_{\JC(v)}\},\\
\fcone(\PC,v) = \{ x \in \VC \colon A_{\JC(v) *} x  \leq \BZero\},\\
\fcone(\PC,v)^\circ = \cone(A_{\JC(v) *}^\top),
\end{gather*}
where $\JC(v) = \{ j \colon A_{j *} v = b_j\}$.
\end{remark}

The famous M.~Brion's theorem \cite{Brion} connects the indicator function $[\PC]$ of a polyhedron with indicator functions of tangent cones, corresponding to its vertices. We take a formulation of Brion's theorem, presented in \cite[Theorem~6.4]{BarvBook}.
\begin{theorem}[M.~Brion \cite{Brion}]\label{Brion_th}
    Let $\PC \subseteq \VC$ be a polyhedron without lines. Then,
    $$
    [\PC] \equiv \sum\limits_{v \in \vertex(\PC)} \bigl[\tcone(\PC,v)\bigr] \lmod.
    $$
\end{theorem}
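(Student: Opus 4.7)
The strategy is to prove the Lawrence identity first as an exact equality of indicator functions, and then replace each Lawrence cone by the usual tangent cone working modulo polyhedra with lines. Fix a generic linear functional $\xi$ on $\VC$ with $\langle \xi,u\rangle \neq 0$ for every edge direction $u$ at every vertex of $\PC$. For each vertex $v$, let $\tcone^+(\PC,v)$ be the ``positive cone'' obtained from $\tcone(\PC,v)$ by replacing each edge ray on which $\xi$ decreases by its opposite, and let $k(v)$ be the number of flipped edges.

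First I would establish, as an exact pointwise identity of indicator functions on $\VC$,
\[
[\PC] \;=\; \sum_{v \in \vertex(\PC)}(-1)^{k(v)}\bigl[\tcone^+(\PC,v)\bigr].
\]
For a generic $x \in \VC$, trace a ray from $x$ in direction $-\xi$; each cone $\tcone^+(\PC,v)$ is precisely the set of points whose ray first meets $\PC$ through a neighborhood of $v$, and a parity/inclusion-exclusion argument in the spirit of Lawrence's proof (see \cite[Chapter~6]{BarvBook}) shows that the signed sum on the right evaluates to $[\PC](x)$. Genericity of $\xi$ ensures the ray is never tangent to a face, avoiding boundary ambiguities.

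Next I would pass from $\tcone^+(\PC,v)$ back to $\tcone(\PC,v)$ modulo polyhedra with lines. The base case is one-dimensional: $[\RR_{\geq 0}u] + [\RR_{\geq 0}(-u)] = [\RR u] \equiv 0 \lmod$, so $[\RR_{\geq 0}u] \equiv -[\RR_{\geq 0}(-u)]$. For a simplicial tangent cone, a unimodular change of basis makes the cone a Cartesian product of half-lines, and using that indicators multiply under Cartesian products yields $[\tcone(\PC,v)] \equiv (-1)^{k(v)}[\tcone^+(\PC,v)]$. For a general (non-simplicial) tangent cone, one triangulates it into simplicial cones with a half-open convention as in \cite{HalfOpen} so that the pieces partition the cone, applies the flip identity to each piece, and sums via valuation additivity. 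Substituting the resulting equivalence into the Lawrence identity from the previous step produces Brion's identity.

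The principal obstacle is the simplicial reduction: the half-open triangulations of distinct vertex tangent cones must be made mutually compatible so that the signs introduced by flipping agree when summed over vertices. Once consistent half-open facet choices are fixed across all vertex cones, the valuation property of the indicator map propagates the flip equivalence automatically, and the result collapses onto the Lawrence identity established in the first step.
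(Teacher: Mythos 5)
The paper does not prove Brion's theorem; it cites it as Theorem~6.4 of Barvinok's monograph \cite{BarvBook}, so there is no proof in the paper to compare against. Your approach — establishing the Lawrence--Varchenko ``flipped cone'' identity and then collapsing each polarized cone onto the honest tangent cone modulo polyhedra with lines — is one of the recognized routes to Brion's theorem, so the strategy is sound. However, as written, two of your claimed identities are false, and the theorem's hypothesis (arbitrary polyhedra without lines, not just polytopes) is not addressed.

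\emph{The one-dimensional base case is wrong.} You claim $[\RR_{\geq 0}u] + [\RR_{\geq 0}(-u)] = [\RR u]$, but both rays contain the origin, so the left side equals $2$ at $0$ while the right side equals $1$. The correct identity for closed rays is
\[
[\RR_{\geq 0}u] + [\RR_{\geq 0}(-u)] = [\RR u] + [\{0\}],
\]
and $[\{0\}]$ is the indicator of a polyhedron \emph{without} lines, so it does not vanish in $\PS(\VC)/\lmod$. Propagated through the Cartesian-product argument, this leaves behind indicators of lower-dimensional rays and faces, so $[\tcone(\PC,v)] \equiv (-1)^{k(v)}[\tcone^+(\PC,v)]$ is false modulo lines unless you replace each flipped ray by its half-open version $\RR_{>0}(-u)$, in which case $[\RR_{\geq 0}u] + [\RR_{>0}(-u)] = [\RR u]$ holds exactly.

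\emph{The Lawrence identity is not exact with closed cones.} For the unit square $[0,1]^2$ with $\xi=(1,2)$, the closed polarized cones are $\{x\geq0,y\geq0\}$, $\{x\geq1,y\geq0\}$, $\{x\geq0,y\geq1\}$, $\{x\geq1,y\geq1\}$ with signs $+,-,-,+$; the resulting signed sum of indicators equals $[\,[0,1)^2\,]$, not $[\,[0,1]^2\,]$. The discrepancy is supported on the flipped facets, which are polyhedra without lines, so it does not disappear modulo lines either. The exact Lawrence identity requires half-open polarized cones. These two closed-cone errors (in the Lawrence step and in the flip step) in fact cancel — one can check this directly on the square — but your argument neither detects the errors nor explains the cancellation, so as presented the two steps do not compose.

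Finally, Brion's theorem as stated here covers all polyhedra without lines, including unbounded ones. The ray-sweeping/parity argument you invoke for the Lawrence identity is a polytope argument; for an unbounded $\PC$ the direction $-\xi$ need not exit $\PC$, and the identity fails unless $\xi$ is taken in the interior of the polar of the recession cone (equivalently, $\xi$ is bounded below on $\PC$). You should either restrict to polytopes and pass to the unbounded case by truncation (the new facets contribute only polyhedra with lines after dualizing), or add the hypothesis on $\xi$ and verify the sweep still covers all of $\VC$. The compatibility of half-open triangulations that you flag as ``the principal obstacle'' is, by contrast, a non-issue: the flip equivalence only needs to hold per vertex cone modulo lines, and the half-open choices at different vertices never interact.
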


Gribanov \& Malyshev \cite{Counting_FPT_Delta} give an FPT-algorithm to compute $\FC\bigl([\PC]\bigr)$, when $\PC$ is defined by a square system $A x \leq b$ with $\det(A) \not= 0$. We refer to a refined and more effective algorithm, due to Gribanov, Shumilov, Malyshev \& Zolotykh \cite{SparseILP_Gribanov}, presented by the following Lemma. 
For completeness, we give its full proof in Appendix \ref{quad_system_th_proof}.
% At the current moment of time, the work \cite{SparseILP_Gribanov} is under review. By this reason, we give a full and self-contained proof of the lemma in Section \ref{quad_system_th_proof} of Appendix.
\begin{lemma}[Gribanov, Shumilov, Malyshev \& Zolotykh \cite{SparseILP_Gribanov}]\label{quad_system_lm}
Let $A \in \ZZ^{n \times n}$, $b \in \ZZ^n$, $\Delta = |\det(A)| > 0$. Let us consider the polyhedron $\PC = \{ x \in \RR^n \colon A x \leq b\}$. Assume that $c \in \ZZ^n$ is given, such that $\langle c, h_i \rangle > 0$, where $h_i$ are the columns of $\Delta \cdot A^{-1}$, for $i \in \intint n$. Denote $\chi = \max\limits_{i \in \intint n}\bigl\{|\langle c, h_i \rangle|\bigr\}$. Let, additionally, $S = P A Q$ be the SNF of $A$, where $P,Q \in \ZZ^{n \times n}$ are unimodular, and put $\sigma = S_{n n}$.

Then, for any $\tau >0$, the series $\fG(\PC,c;\,\tau)$ converges absolutely to a function of the type
$$
\frac{\sum\limits_{i = -n \cdot \sigma\cdot \chi}^{n \cdot \sigma\cdot \chi} \epsilon_i \cdot e^{\alpha_i \cdot \tau}}{\bigl(1 - e^{\beta_1 \cdot \tau}\bigr)\bigl(1 - e^{\beta_2 \cdot \tau}\bigr) \dots \bigl(1 - e^{\beta_n \cdot \tau}\bigr)}, 
$$ where $\epsilon_i \in \ZZ_{\geq 0}$, $\beta_i \in \ZZ_{<0}$, and $\alpha_i \in \ZZ$. This representation can be found with an algorithm, having the arithmetic complexity bound
$$
O\bigl(T_{\SNF}(n) + \Delta \cdot n^2 \cdot \sigma \cdot \chi\bigr),
$$ where $T_{SNF}(n)$ is the arithmetic complexity of computing the SNF for $n \times n$ integer matrices.

Moreover, each of the coefficients $\bigl\{\epsilon_i := \epsilon_i(g)\bigr\}$ depends only on $g = P b \bmod S \cdot \ZZ^n$, so they take at most $\Delta$ possible values, when the vector $b$ varies. All these values can be computed during the algorithm.
\end{lemma}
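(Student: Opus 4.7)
The plan is to realize $\PC$ as a translate of a simplicial cone and combine the fundamental parallelepiped identity with the Smith Normal Form of $A$, so that the $\Delta^{n-1}$ lattice points of the parallelepiped are processed in bulk rather than individually.

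\textbf{Setup and the denominator.} Since $\det(A) \neq 0$, one has $\PC = v + K$ with $v = A^{-1} b$ and simplicial cone $K = \{x \colon A x \leq \BZero\}$, generated by the integer rays $-h_1, \dots, -h_n$ (negatives of the columns of $H := \Delta A^{-1}$). The hypothesis $\langle c, h_i \rangle > 0$ places $c$ in the interior of $K^{\circ}$, so Theorem~\ref{Pukhlikov_th} gives absolute convergence of $\fG(\PC, c; \tau)$ for every $\tau > 0$. Decomposing $\PC \cap \ZZ^n$ through the shifted half-open parallelepiped $\Pi_v := \bigl(v + \{\textstyle\sum_i t_i (-h_i) \colon t_i \in [0,1)\}\bigr) \cap \ZZ^n$ and non-negative integer combinations of the $-h_i$ produces
\[
\fG(\PC, c; \tau) \;=\; \frac{N(\tau)}{\prod_{i=1}^n \bigl(1 - e^{-\nu_i \tau}\bigr)}, \qquad \nu_i := \langle c, h_i \rangle, \qquad N(\tau) := \sum_{q \in \Pi_v} e^{\langle c, q \rangle \tau},
\]
so $\beta_i := -\nu_i \in \ZZ_{<0}$ with $|\beta_i| \leq \chi$, matching the denominator of the claim.

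\textbf{Parameterization and exponent range.} The bijection $q \leftrightarrow y := b - A q$ identifies $\Pi_v$ with $\{y \in \intint[0]{\Delta - 1}^n \colon y \equiv b \pmod{A \ZZ^n}\}$. Using the SNF $A = P^{-1} S Q^{-1}$, the congruence becomes the coordinate-wise system $(P y)_i \equiv (P b)_i \pmod{S_{ii}}$, so the admissible $y$'s \textemdash{} and therefore $\epsilon_i$ \textemdash{} are controlled solely by $g = P b \bmod S \cdot \ZZ^n$; this accounts for the assertion that $\epsilon_i(g)$ takes at most $\Delta$ values as $b$ varies. For $q \in \Pi_v$ one has $\langle c, q \rangle = \langle c, v \rangle - \Delta^{-1} \sum_i \nu_i y_i$; since $v = Q S^{-1} P b$ and $S$ is diagonal, each coordinate of $v$ has denominator dividing $\sigma = S_{n n}$, so $M' := \sigma \langle c, v \rangle \in \ZZ$. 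The integrality of $\langle c, q \rangle$ combined with $\sum_i t_i \nu_i \in [0, n \chi)$ forces $M' - \sigma \langle c, q \rangle \in [0, n \sigma \chi) \cap \ZZ$, so the admissible $\alpha_i := \langle c, q \rangle$ fit inside the integer interval $[-n \sigma \chi, n \sigma \chi]$ of the claim.

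\textbf{Main obstacle and complexity.} The computational obstacle is that $|\Pi_v| = \Delta^{n-1}$ rules out point-by-point enumeration once $n \geq 3$. The plan is to resolve the coset condition using the characters of $\ZZ^n / A \ZZ^n$, each of the form $\chi_\omega(y) = \prod_j \omega_j^{(P y)_j}$ with $\omega_j^{S_{jj}} = 1$ (exactly $\Delta$ such tuples $\omega$), which converts the target generating polynomial
\[
F_g(x) \;=\; \sum_{y \in \Pi_v} x^{\sum_i \nu_i y_i} \;=\; \frac{\prod_{k=1}^n \bigl(1 - x^{\nu_k \Delta}\bigr)}{\Delta} \sum_{\omega} \overline{\chi_\omega(g)} \prod_{k=1}^n \frac{1}{1 - \theta_k(\omega)\, x^{\nu_k}}
\]
(with $\theta_k(\omega) := \prod_j \omega_j^{P_{j k}}$) into a character sum of simple rational functions; the coefficient of $x^{L_i}$ at $L_i = \Delta(\langle c, v \rangle - \alpha_i) \in [0, n \sigma \chi)$ is $\epsilon_i(g)$. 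After an $O(T_{\SNF}(n))$ preprocessing to produce $P, S, Q$, extracting the required $O(n \sigma \chi)$ coefficients from a single character's rational factor costs $O(n^2 \sigma \chi)$ arithmetic, and summing over the $\Delta$ characters yields the overall bound $O\bigl(T_{\SNF}(n) + \Delta \cdot n^2 \cdot \sigma \cdot \chi\bigr)$, while the same computation exhibits the $\Delta$ possible configurations of $\{\epsilon_i(g)\}$ as $g$ varies.
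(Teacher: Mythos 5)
Your setup is sound and offers a genuinely different route: you decompose $\PC \cap \ZZ^n$ through the half-open fundamental parallelepiped $\Pi_v$ of the simplicial cone at $v = A^{-1}b$, while the paper instead transforms $A x \leq b$ by $x = Qx'$ and slack variables into the group system $\sum_i y_i g_i = g_0$, $y \in \ZZ_{\geq 0}^n$ over $\GC = \ZZ^n / S\ZZ^n$ and computes the generating function by a dynamic program over cosets of $\langle g_k \rangle$, using the telescoping recurrence \eqref{hh_k_tau_smart_recur}. Your derivation of the denominator $\prod_k(1 - e^{-\nu_k\tau})$, the observation that $\epsilon_i$ depends only on $g = Pb \bmod S\ZZ^n$, and the integrality argument for $\alpha_i = \langle c, q\rangle$ via $\sigma\langle c, v\rangle \in \ZZ$ are all correct. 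The character-sum identity for $F_g(x)$ is also correct.

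The gap is in the complexity claim, and it stems from a range error. You state that $L_i = \Delta(\langle c, v\rangle - \alpha_i) \in [0, n\sigma\chi)$, but what actually lies in $[0, n\sigma\chi)$ is $\sigma(\langle c, v\rangle - \alpha_i) = \sigma s$; the exponent $L_i = \Delta s = \langle\nu, y\rangle$ ranges over $[0, n\chi\Delta)$, larger by a factor of $\Delta/\sigma$. The nonzero coefficients of $F_g(x)$ are supported on an arithmetic progression with gap $\Delta$ inside $[0, n\chi\Delta)$, which is why there are only $O(n\chi)$ of them; but this support structure belongs to the \emph{sum} over $\omega$, not to each individual summand $\prod_k \tfrac{1-x^{\nu_k\Delta}}{1-\theta_k(\omega)x^{\nu_k}}$, which is a dense polynomial of degree up to $n\chi(\Delta-1)$. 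Extracting the coefficients $[x^{L_i}]$ of this per-character polynomial requires running the unbounded-knapsack recurrence for $\prod_k(1-\theta_k x^{\nu_k})^{-1}$ up to degree $n\chi\Delta$ (and convolving with the $2^n$-term expansion of $\prod_k(1-x^{\nu_k\Delta})$), so the natural implementation costs $\Omega(n\chi\Delta)$ per character rather than the $O(n^2\sigma\chi)$ you claim, giving $\Omega(n\chi\Delta^2)$ in total — worse than the target by a factor of $\Delta/\sigma$, which can be as large as $\Delta^{1-1/n}$. There is also a secondary issue: the character sum lives over cyclotomic integers, so the model of ``arithmetic operations'' silently changes; the final cancellation back to $\ZZ_{\geq 0}$ is not exploited algorithmically. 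The paper's coset DP avoids both problems: each intermediate polynomial $\hG_k(j;\tau)$ has length $O(k\sigma\chi)$ rather than $O(k\chi\Delta)$ because the recurrence is run in the variable $\tau$ after the substitution that introduces the denominators $1-e^{-\langle c, r_j h_j\rangle\tau}$, and the update from $\hG_k(j-1)$ to $\hG_k(j)$ costs $O(k\sigma\chi)$, giving $O(\Delta k\sigma\chi)$ per level. To make your Fourier-analytic route match the bound you would need an additional mechanism to restrict each per-character computation to the $O(n\sigma\chi)$ relevant residues, which the proposal does not provide.
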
 

In the following lemma, we summarize known facts that help to make estimates for the values of $\nu(A)$ and $\mu(A)$, which help to apply Theorem \ref{main_param_th} in different scenarios.
\begin{lemma}\label{nu_mu_lm}
Let $A \in \RR^{m \times n}$, $\rank(A) = n$, and $k = m-n$. The following relations hold for $\nu := \nu(A)$ and $\mu := \mu(A)$:
\begin{enumerate}
    \item $\nu,\mu = O\bigl(\frac{n}{k} + 1\bigr)^k$;
    \item $\nu = O\bigl(\frac{m}{n}\bigr)^{\frac{n}{2}}$, $\mu = O\bigl(\frac{m}{n}\bigr)^{\frac{n+1}{2}}$;
    \item Let $r$ and $l$ be the numbers of non-zeroes in rows and columns of $A$, respectively. Then, $\nu,\mu \leq \norm{A}_{\max}^{n} \cdot \min\{r,l\}^n$.
    %\item $\nu = O(n)^{\frac{n}{2}} \cdot \Delta^n$, $\mu = O(n)^{\frac{n+1}{2}}\cdot \Delta^{n+1}$.
\end{enumerate}
\end{lemma}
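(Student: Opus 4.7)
The plan is to establish the three inequalities independently, each by a short argument drawing on standard polyhedral combinatorics, since the lemma is explicitly said to summarise known facts.

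For item (1), the trivial counting bounds suffice. Every vertex of $\MC(A,b)$ is the unique solution of a non-singular square subsystem $A_{\IC *} x = b_{\IC}$ with $\abs{\IC}=n$, so $\nu(A) \leq \binom{m}{n}$. By Definition \ref{triang_def} every cone in a triangulation of $\cone(A)$ is $\cone(A_{\BC})$ for some full-rank $n$-subset of columns $\BC$, and distinct simple cones in a triangulation correspond to distinct bases, giving $\mu(A) \leq \binom{m}{n}$ as well. I would then finish with the standard estimate $\binom{m}{n} = \binom{n+k}{k} \leq \bigl(e(n+k)/k\bigr)^k = O\bigl(n/k+1\bigr)^k$.

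For item (2), the bound on $\nu(A)$ is precisely McMullen's Upper Bound Theorem for a simple $n$-polyhedron with $m$ facets; an arbitrary $\MC(A,b)$ can be perturbed to a simple polyhedron with at least as many vertices by a generic perturbation of $b$, so the bound transfers, and a direct simplification of the UBT expression yields $O(m/n)^{n/2}$. For $\mu(A)$, I would cut $\cone(A)$ by a generic affine hyperplane transverse to all its rays, obtaining an $(n-1)$-polytope $\PC$ whose vertices are in bijection with the $m$ generators of the cone; triangulations of $\cone(A)$ into simple cones correspond bijectively to triangulations of $\PC$ into $(n-1)$-simplices using only the vertices of $\PC$, and the UBT applied to the associated simplicial $(n-1)$-polytope bounds the number of top-dimensional simplices by $O(m/n)^{(n+1)/2}$.

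For item (3), the sparsity bound is quoted from \cite{SparseILP_Gribanov}. The idea is that Hadamard's inequality combined with row- or column-sparsity limits both the number of $n\times n$ non-singular submatrices of $A$ that can arise as bases and their maximal determinants; since each vertex of $\MC(A,b)$ and each simple cone in a triangulation of $\cone(A)$ is determined by such a submatrix, the same estimate bounds $\nu$ and $\mu$ simultaneously. The main obstacle is the $\mu$ half of item (2): one has to be a little careful in arranging the hyperplane so that every triangulation of the cone really descends to a triangulation of the cross-section using only its vertices, so that the UBT-style bound for simplicial polytopes applies cleanly — but this is routine once transversality is ensured.
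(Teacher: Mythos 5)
Your proposal follows the paper's own route almost exactly: binomial counting of bases $\binom{m}{n}=\binom{m}{k}$ for item (1), the Upper Bound Theorems of McMullen (for $\nu$, via a simple polytope with $m$ facets) and of Stanley (for $\mu$, via the cone cross-section / simplicial-ball correspondence) for item (2), and citing \cite{SparseILP_Gribanov} for item (3), just as the paper does. One small caution: your parenthetical sketch of item~(3) via Hadamard's inequality does not actually account for the stated exponent — Hadamard with column sparsity gives $\Delta\le\norm{A}_{\max}^n\cdot\min\{r,l\}^{n/2}$, a bound on determinant magnitudes rather than on the number of cones, and would not by itself yield $\nu,\mu\le\norm{A}_{\max}^n\cdot\min\{r,l\}^n$; the real argument in the cited Lemma~4 is a different counting estimate, but since both you and the paper defer to that reference, this does not affect the validity of your proof.
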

\begin{proof}
The first bounds for $\nu$ and $\mu$ follow from the trivial identities $\nu,\mu \leq \binom{m}{n} = \binom{m}{k} = O\bigl(\frac{n}{k} + 1\bigr)^{k}$. Denote by $\zeta(d,j)$ the maximum number of vertices in a polytope that is dual to the $d$-dimensional cyclic polytope with $j$ vertices. Due to the seminal work of P.~McMullen~\cite{MaxFacesTh}, we have $\nu \leq \zeta(m,n)$. Similarly, due to the seminal work of R.~Stanley \cite{UpperBoundSpheres_Stanley}, see also \cite[Corollary~2.6.5]{TriangBook_DeLoera} and \cite{CombinatoricsAndCommutative_Stanley}, $\mu \leq \zeta(m+1,n+1)-(n+1)$. Due to \cite[Section~4.7]{Grunbaum},
$$
\xi(d,j) = \begin{cases}
    \frac{j}{j-s} \binom{j-s}{s},\text{ for }d = 2s\\
    2\binom{j-s-1}{s},\text{ for }d = 2s+1\\
    \end{cases} = O\left(\frac{j}{d}\right)^{d/2}.
$$
Consequently, $\nu = O\bigl(\frac{m}{n}\bigr)^{\frac{n}{2}}$ and $\mu = O\bigl(\frac{m+1}{n+1}\bigr)^{\frac{n+1}{2}}$. Finally, the last bound for $\nu$ was proven in \cite[Lemma~4]{SparseILP_Gribanov}. The corresponding proof can be straightforwardly applied to $\mu$ without any changes.
\end{proof}

\subsection{The unbounded case, dimension of $\PC$, and rank of matrices}\label{unbouded_subs}
In the current Subsection, we are going to show that the assumptions $\rank(A) = n_x$, $\rank(B) = n_y$, and $\dim(\PC) = n_x + n_y$ (for \ref{canonical_form}) can be satisfied without any loss of generality. Additionally, we are going to show how to handle the case $\QEnum(y) = +\infty$ in our computations.
\begin{remark}[Rank of the matrix $B$]\label{B_rank_rm}
    Let us first justify the assumption $\rank(B) = n_y$, for $\PC$, defined by \ref{canonical_form}. Analysis for \ref{standard_form} is then straightforward. If $\rank(B) < n_y$, there exists an index $j$ and a nonzero vector $t \in \QQ^{n_y-1}$, such that $B_j = B_{\bar j} t$, where $\bar j = \intint{n_y} \setminus \{j\}$. Consequently, assuming that $j=1$, for $y \in \RR^{n_y}$, we have $c_{\PC}(y) = c_{\PC'}(y_1 \cdot t  + y_{\intint[2]{n_y}})$, where the polyhedron $\PC'$ is defined by a system $A x + B_{\bar 1} y \leq b$, for $x \in \RR^{n_x}$ and $y \in \RR^{n_y - 1}$. Therefore, we can work with the function $c_{\PC'} \colon \RR^{n_y-1} \to \ZZ_{\geq 0}$ instead of $c_{\PC}$. Eliminating all the linear dependencies in $B$, we can assume that $\rank(B) = n_y$, which, together with $\rank(A) = n_x$, is equivalent to the fact that $\PC$ contains no lines. Clearly, such an elimination can be performed by a polynomial-time algorithm.
\end{remark}

\begin{remark}[The Dimension of $\PC$ in \ref{canonical_form}]\label{dim_PC_rm}
Let $\PC$ be a polyhedron, defined by \ref{canonical_form}. Assume that $\dim(\PC) < n_x + n_y$. Then, there exists an index $j \in \intint m$, such that $A_j x + B_j y = b$, for any $\binom{x}{y} \in \PC$. We replace the $j$-th inequality by the inequality $A_j x + B_j y \leq b + \varepsilon$, for a sufficiently small $\varepsilon > 0$. Clearly, if the value of $\varepsilon$ is small enough, this transformation does not change the set of integer solutions and the new inequality can not hold as equality, for all $\binom{x}{y} \in \PC$, because the new polyhedron contains the old one. Note that the index $j$ with the appropriate value of $\varepsilon$ can be found by a polynomial-time algorithm. After eliminating all such $j$, we finally achieve a  polyhedron of dimension $n_x + n_y$.
\end{remark}
The next lemma shows how to satisfy the condition $\rank(A) = n_x$.
\begin{lemma}\label{lines_eliminating_lm}
Let $\PC$ be a polyhedron, defined by \ref{canonical_form}. Assume that $\rank(A) < n_x$, or equivalently, $\PC_{y}$ contain a line, for any $y \in \Pi(\PC)$. Then, there exist a set of $n_x-\rank(A)$ indices $\IC \subseteq \intint{n_x}$ and a set of values $\alpha_i \in \{-1,1\}$, for $i \in \IC$, such that the polyhedron $\PC'$, defined by
$$
    \PC' = \{x \in \PC \colon \alpha_i x_i \geq 0,\,\text{ for $i \in \IC$}\},
$$
has the following properties:
\begin{enumerate}
    \item The set $\IC$ and values $\{\alpha_i\}_{i \in \IC}$ can be found by a polynomial-time algorithm;
    \item The equality $\Pi(\PC) = \Pi(\PC')$ is true;
    \item For any $y \in \Pi(\PC)$, the polyhedron $\PC'_{y}$ contains no lines;
    \item For any $y \in \Pi(\PC)$, $\PC_{y} \cap \ZZ^{n_x} \not= \emptyset$ if and only if $\PC'_{y} \cap \ZZ^{n_x} \not= \emptyset$;
    \item Let $A'$ be the matrix, constructed by appending the rows $-\alpha_i e_i^\top$, for $i \in \IC$, to $A$. In other words, the system 
    $$
    A' x + \binom{B}{\BZero} y \leq \binom{b}{\BZero}
    $$ defines $\PC'$. Then $\mu(A') = \mu(A)$ and $\Delta(A') = \Delta(A)$.
\end{enumerate}
\end{lemma}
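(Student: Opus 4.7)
The central observation is that the lineality space of $\PC_y$ equals $\ker(A) \subseteq \RR^{n_x}$ for every $y \in \Pi(\PC)$, and has dimension $n_x - r$ where $r := \rank(A)$. Following this, I plan to use Gaussian elimination on $A$ to locate a subset $\JC \subseteq \intint{n_x}$ of size $r$ with $A_{*\JC}$ of full column rank, set $\IC = \intint{n_x} \setminus \JC$, and choose $\alpha_i \in \{-1,+1\}$ (e.g.\ all $+1$, since the choice will turn out to be irrelevant to the main properties).

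The technical heart of the construction is that the coordinate projection $\pi \colon \ker(A) \to \RR^{\IC}$, $v \mapsto v_\IC$, is a vector-space isomorphism: any $v \in \ker(A)$ with $v_\IC = \BZero$ satisfies $A_{*\JC} v_\JC = \BZero$, hence $v_\JC = \BZero$ by the full column rank. Property~3 then follows immediately, since the lineality of $\PC'_y$ sits inside $\ker(A) \cap \{v \colon v_\IC = \BZero\} = \{\BZero\}$. For property~2, given $y \in \Pi(\PC)$ and $x \in \PC_y$, I select $v \in \ker(A)$ with $v_\IC$ any desired real vector; the shift $x \mapsto x + v$ remains in $\PC_y$ and can be chosen to land in $\PC'_y$, while the reverse inclusion $\Pi(\PC') \subseteq \Pi(\PC)$ is immediate from $\PC' \subseteq \PC$. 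Property~4 follows by the same shift performed in integers: the restriction of $\pi$ to $\ker(A) \cap \ZZ^{n_x}$ is an injection onto a full-rank sublattice of $\ZZ^{\IC}$, which therefore contains arbitrarily large elements in every orthant and supplies the required sign correction.

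For property~5, I plan to exploit the elementary structure of the appended rows. Cofactor expansion along the rows $-\alpha_i e_i^\top$ in any $n_x \times n_x$ submatrix of $A'$ shows that such a minor is nonzero only if it uses exactly $r$ rows from $A$ (indexed by some $\KC$) together with all $n_x - r$ appended rows, and in that case the absolute value of its determinant equals $\abs{\det(A_{\KC,\JC})}$. A parallel bijection between triangulations of $\cone(A'^\top)$ and triangulations of $\cone(A^\top)$ (the appended rays play the role of coordinate generators orthogonal to the row span of $A$) will then deliver $\mu(A') = \mu(A)$, and the polynomial-time computability claimed in property~1 reduces to the cost of Gaussian elimination on $A$.

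I expect the main obstacle to be the equality $\Delta(A') = \Delta(A)$: the analysis above gives only $\Delta(A') = \max_\KC \abs{\det(A_{\KC,\JC})}$, and turning this into true equality with $\Delta(A) = \max_{\KC,\LC}\abs{\det(A_{\KC,\LC})}$ requires $\JC$ to be the column set of a maximum $r \times r$ minor of $A$. The delicate point is therefore to argue that Gaussian elimination can be guided (or that the construction is invariant enough) so as to select such a $\JC$ in polynomial time, rather than an arbitrary column basis; handling this subtlety cleanly will be the technically trickiest part of the proof.
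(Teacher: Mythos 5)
Your construction is in substance the same as the paper's: choosing $\IC$ so that $\{e_i\}_{i\in\IC}$ complements the row space of $A$ is exactly dual to choosing a column basis $\JC$ (a nonzero $v\in\ker(A)$ vanishing on $\IC$ is precisely a linear dependence among the columns $A_{*\JC}$), and the projection $\ker(A)\to\RR^{\IC}$, $v\mapsto v_\IC$, being an isomorphism is exactly what the paper uses when it builds a line direction $v\in\ker(A)\cap\ZZ^{n_x}$ with $v_i\neq 0$ for all $i\in\IC$. Your shift argument for properties 2--4 and your triangulation bijection for $\mu(A')=\mu(A)$ are both the arguments the paper gives (the paper fixes $\alpha_i=\sgn(v_i)$ for a chosen $v$ and shifts by $\tau v$, whereas you observe the signs are irrelevant because the projection of $\ker(A)\cap\ZZ^{n_x}$ to $\ZZ^{\IC}$ is a full-rank sublattice hitting every orthant; both variants work).

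The main value of your proposal is that your worry about $\Delta(A')=\Delta(A)$ is well-founded and exposes a genuine gap in the paper. The paper dismisses this equality as ``trivial,'' but your cofactor analysis is correct: every nonzero $n_x\times n_x$ minor of $A'$ must use all $n_x-r$ appended rows, so $\Delta(A')=\max_{\KC}\abs{\det(A_{\KC,\JC})}$ for the fixed column set $\JC=\intint{n_x}\setminus\IC$, and this can be strictly smaller than $\Delta(A)$. Already $A=(2\ \ 1)$ with $\IC=\{1\}$, $\JC=\{2\}$ gives $A'=\binom{2\ 1}{\mp1\ 0}$, hence $\Delta(A')=1<2=\Delta(A)$; the paper's ``minimal basis extension'' does nothing to rule this out. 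Moreover, forcing $\JC$ to carry a maximum $r\times r$ minor is not something Gaussian elimination can be ``guided'' to do in polynomial time (computing $\Delta(A)$ is hard in general), so the polynomial-time claim and the equality claim in property~5 cannot both be saved as stated. What does hold for free is $\Delta(A')\leq\Delta(A)$, and this inequality is all that the downstream uses of this lemma (the complexity bounds in Theorems~\ref{fixed_k_main_th} and~\ref{fixed_ny_main_th1}) actually require; the correct fix is to weaken property~5 to the inequality rather than to try to rescue the equality.
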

\begin{proof}
The set $\IC$ is defined in the following way: $\IC$ is a minimal set, such that the vectors $\{e_i\}_{i \in \IC}$ together with the column-vectors from $A^\top$ form a basis of $\RR^{n_x}$. Clearly, such a set can be found by a polynomial-time minimal basis extension algorithm. Since $r < n_x$, we have that $\IC \not= \emptyset$. Now, let us show how to construct the values  $\{\alpha_i\}_{i \in \IC}$.

    Denote $\LC := \{x \in \RR^{n_x} \colon A x = \BZero\}$ and $r := \rank(A) < n_x$. It follows that $\LC$ is exactly the lines-space of $\PC_y$, i.e. $\PC_y + \LC = \PC_y$, for any $y \in \Pi(\PC)$. Let $V \in \ZZ^{n_x \times (n_x - r)}$ be a matrix, composed of basis vectors of $\LC$. Clearly, the matrix $V$ can be constructed by a polynomial-time algorithm. We claim that, for any $i \in \IC$, there exists $j \in \intint{n_x - r}$, such that $V_{i j} \not= 0$. Definitely, if $V_{i j} = 0$, for all $j \in \intint{n_x - r}$, and some $i \in \IC$, it follows that $e_i \in \linh(A^\top)$ that contradicts to construction of $\IC$. Now, taking a sum, with sufficiently big coefficients, of columns of $V$ that correspond to such $j$, we can construct a vector $v \in \LC$, such that $v_i \not= 0$, for any $i \in \IC$. W.l.o.g, we can assume that $v$ is integer.
    For $i \in \IC$, we set 
    $$
    \alpha_i := \begin{cases}
        1, \text{ if $v_i > 0$;}\\
        -1, \text{ if $v_i < 0$}.
    \end{cases}
    $$
    Denote $\SC := \{x \in \RR^{n_x} \colon \alpha_i x_i \geq 0,\,\text{for $i \in \IC$}\}$. Clearly, the set $\LC \cap \SC$ contains the nonzero integer vector $v$ and $\PC' = \PC \cap \SC$.

    Now, the properties 1 and 3 are straightforward to see. Let us prove the $2$-nd property. Since $\PC' \subseteq \PC$, clearly $\Pi(\PC') \subseteq \Pi(\PC)$. Let us prove the opposite inclusion $\Pi(\PC) \subseteq \Pi(\PC')$. The following sequence of implications holds:
    \begin{multline*}
        y \in \Pi(\PC) \quad\Longrightarrow\quad \exists x_0 \in \RR^{n_x}\colon \binom{x_0}{y} \in \PC \quad\Longrightarrow\quad \forall \tau \in \RR \colon \binom{x_0 + \tau \cdot v}{y} \in \PC\\
        \Longrightarrow\quad \exists \tau \in \RR_{>0} : \binom{x_0 + \tau \cdot v}{ y } \in \PC \cap \SC \quad\Longrightarrow\quad y \in \Pi(\PC'),
    \end{multline*} which proves the inclusion.

    Let us prove the property 4, and fix $y \in \Pi(\PC)$. Trivially, if $x \in \PC'_{y} \cap \ZZ^{n_x}$, then $x \in \PC_{y} \cap \ZZ^{n_x}$. Let us prove the opposite implication. The following sequence of implications holds:
    \begin{multline*}
        \exists x_0 \in \PC_{y} \cap \ZZ^{n_x} \quad\Longrightarrow\quad \forall \tau \in \ZZ\colon x_0 + \tau v \in \PC_{y} \cap \ZZ^{n_x} \quad\Longrightarrow\\
        \Longrightarrow\quad \exists \tau \in \ZZ_{>0} \colon x_0 + \tau v \in \PC \cap \ZZ^{n_x} \cap \SC \quad\Longrightarrow\quad \PC' \cap \ZZ^{n_x} \not= \emptyset,
    \end{multline*} which proves the implication.

    Finally, let us prove the property 5. Let $\CCal = \cone(A^\top)$ and $\CCal' = \cone(A'^\top)$. Note that $\dim(\CCal) = r < n = \dim(\CCal')$. Following to Definition \ref{triang_def}, let $\TS$ be any triangulation of $\CCal$. The corresponding triangulation $\TS'$ of $A'$ can be constructed in the following way: for any simple cone $\TC \in \TS$, we construct $\TC' \in \TS'$ by adding the rays $\{\alpha_i e_i\}_{i \in \IC}$ to the set of generating rays of $\TC$. The dimension of the new simple conses $\TC' \in \TS'$ will be equal to $n_x$, and they will form a triangulation of $\CCal'$. Since each $n_x$-dimensional simple cone, whose rows are composed of the columns of $A'^\top$, must contain $\{\alpha_i e_i\}_{i \in \IC}$ as the generating rays, it follows that each triangulation $\TS'$ of $\CCal'$ can be constructed by the presented algorithm. So, there is a bijection between the triangulations of $\CCal$ and $\CCal'$, which have the same sizes. Consequently, $\mu(A) = \mu(A')$. The equality $\Delta(A') = \Delta(A)$ is trivial.
\end{proof}

In the assumption that $\rank(A) = n$, the next lemma gives a way to handle the counting problem in an unbounded polyhedron.
\begin{lemma}\label{param_bounding_lm}
    Let $\PC$ be a polyhedron, defined by \ref{canonical_form}, and $\rank(A) = n_x$. Assume that $\PC_{y}$ is unbounded, for some $y \in \Pi(\PC)$. Then, there exists a function $g \colon \QQ^{n_y} \to \QQ$, computing in polynomial time, and a polyhedron $\PC'$, defined by a system
    $$
    \begin{cases}
        A' x + B' y \leq b'\\
        x \in \RR^{n_x},\, y \in \RR^{n_y+1}
    \end{cases}
    $$ with $A' \in \ZZ^{(m+1) \times n_x}$, $B' \in \ZZ^{(m+1)\times (n_y+1)}$, and $b' \in \QQ^{m+1}$, such that the following properties hold:
    \begin{enumerate}
    \item The polyhedron $\PC'$ and the function $g$ can be constructed by a polynomial-time algorithm;
    
    \item The inequality $\Delta(A') \leq n_x \cdot \Delta(A)$ is true;

    \item For any $y \in \Pi(\PC')$, $\PC'_{y}$ is bounded;

    \item For any $y \in \Pi(\PC)$, we have
    $$
    \QEnum(y) \not= 0 \quad \Longleftrightarrow \quad \EC_{\PC'}(y') \not= 0,
    $$
    where $y' = \dbinom{y}{g(y)}$.
    \end{enumerate}
\end{lemma}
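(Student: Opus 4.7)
Since $\rank(A)=n_x$, every fibre $\PC_y$ shares the pointed recession cone $\RC=\{x\in\RR^{n_x}:Ax\leq\BZero\}$, so its dual $-\cone(A^\top)$ is full-dimensional. My plan is to bound all fibres uniformly by appending a single parametric inequality $\langle c,x\rangle\leq t$, with $t$ a fresh parametric coordinate and $c\in\ZZ^{n_x}$ chosen in the interior of $-\cone(A^\top)$. A clean initial choice is $c:=-A^\top\BUnit$: for any nonzero $x\in\RC$ one has $A_ix\leq 0$ for every $i$ and $Ax\neq\BZero$ by the rank assumption, so $\langle c,x\rangle=-\sum_i A_ix>0$ on $\RC\setminus\{\BZero\}$. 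I then take
\[
A':=\binom{A}{c^\top},\qquad B':=\begin{pmatrix}B&\BZero\\\BZero^\top&1\end{pmatrix},\qquad b':=\binom{b}{0}.
\]
By the displayed strict positivity property of $c$, the recession cone of every $\PC'_{(y,t)}$ is $\{x\in\RC:\langle c,x\rangle\leq 0\}=\{\BZero\}$, which immediately yields property 3.

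\textbf{Defining $g$ via LP and proximity.} For $y\in\Pi(\PC)$ I solve the LP $v(y):=\min\{\langle c,x\rangle:x\in\PC_y\}$; the minimum is finite and attained at some vertex $\bar x(y)$ whenever $\PC_y\neq\emptyset$, again because $\langle c,\cdot\rangle$ is strictly positive on $\RC\setminus\{\BZero\}$. I set $g(y):=v(y)+n_x\Delta(A)\,\|c\|_1$, extending by any value when $\PC_y=\emptyset$. The classical proximity theorem of Cook--Gerards--Schrijver--Seb\H o guarantees that if $\PC_y\cap\ZZ^{n_x}\neq\emptyset$, then some integer point of $\PC_y$ lies within $\ell_\infty$-distance $n_x\Delta(A)$ of $\bar x(y)$; by the definition of $g(y)$ such a point also satisfies $\langle c,x\rangle\leq g(y)$ and hence lies in $\PC'_{(y,g(y))}$, which gives the nontrivial direction of property 4. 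The reverse direction is immediate from $\PC'_{(y,g(y))}\subseteq\PC_y$. Polynomial-time LP solvability yields property 1.

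\textbf{Main obstacle: the sub-determinant bound.} The technical heart is property 2, $\Delta(A')\leq n_x\Delta(A)$. Any $n_x\times n_x$ submatrix of $A'$ that avoids the new row $c^\top$ is a minor of $A$, hence bounded by $\Delta(A)$. For a submatrix $M=\binom{A_{I,\ast}}{c^\top}$ with $|I|=n_x-1$, writing $c^\top=-\sum_k\mu_k A_k$ and using multilinearity of the determinant in the last row gives
\[
\det M=-\sum_{k\notin I}\mu_k\,\det\binom{A_{I,\ast}}{A_{k,\ast}},
\]
the terms with $k\in I$ vanishing by having a repeated row. With the naive choice $\mu=\BUnit$ this already yields $|\det M|\leq(m-n_x+1)\Delta(A)$. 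The delicate step is to refine the strictly positive integer weights $\mu$ so that $c=-A^\top\mu$ still lies in the interior of $-\cone(A^\top)$ while the surviving $\ell_1$-mass contributing to $\det M$ is bounded by $n_x$; I would obtain such a $\mu$ by a Carath\'eodory-style argument concentrating the weight on at most $n_x$ extreme rays of $\cone(A^\top)$ with only a negligible positive perturbation on the remaining generators, producing the claimed tight bound. Once this refined $c$ is in hand, the explicit formulas for $A'$, $B'$, $b'$, and $g$ deliver all remaining conclusions of the lemma.
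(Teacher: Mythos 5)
The overall shape of your argument — append a parametric cut $\langle c, x \rangle \leq t$ with a fresh parametric variable $t$, and choose $g(y)$ large enough that some feasible integer point survives the cut whenever one exists — is the same as the paper's. Your choice of $g$ via LP plus the Cook--Gerards--Schrijver--Seb\H{o} proximity theorem is a valid alternative route to the paper's, which instead invokes the Borosh--Treybig theorem (any feasible integer system $M z \leq h$ has an integer solution with $\|z\|_\infty \leq \Delta\bigl((M\,h)\bigr)$) and bounds that determinant by Hadamard's inequality; both give a closed-form or polynomial-time computable $g$, so this part of your proposal is fine.

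The genuine gap is in property 2. You correctly observe that $c = -A^\top \BUnit$ yields only $|\det M| \leq (m-n_x+1)\Delta(A)$ and propose a ``Carath\'eodory-style'' reweighting to save it, but that fix cannot work as described. You insist that $\mu$ be strictly positive (so that $c = -A^\top\mu$ lies in the interior of $-\cone(A^\top)$), yet you also want the surviving mass $\sum_{k \notin I}\mu_k$ bounded by $n_x$; since $|I| = n_x - 1$, there are always $m - n_x + 1$ terms in that sum, and strictly positive rational weights, once cleared to integers (as required for $A' \in \ZZ^{(m+1)\times n_x}$), will not give this bound. The missing observation — which the paper makes — is that $c$ need not have full support. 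Take $\BC$ any basis of $A$ and set $c = -A_{\BC}^\top \BUnit$, i.e. the negated sum of the $n_x$ rows in $\BC$. Then for any $x \in \RC \setminus \{\BZero\}$ one has $A_{\BC}x \leq \BZero$ componentwise, so $\langle c, x\rangle = -\BUnit^\top A_{\BC}x \geq 0$, and equality would force $A_{\BC}x = \BZero$, impossible since $A_{\BC}$ is invertible; so $\langle c, \cdot\rangle$ is strictly positive on $\RC \setminus\{\BZero\}$. And now your multilinear expansion of $\det M$ has at most $n_x$ nonzero terms (those $k \in \BC$), giving $\Delta(A') \leq n_x \Delta(A)$ exactly. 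Note that this $c$ actually does lie in the interior of $-\cone(A^\top)$ — it is the sum of the $n_x$ generators of the full-dimensional simplicial subcone $-\cone(A_{\BC}^\top)$ — so your interiority requirement was never the obstacle; the error was identifying the interior of $-\cone(A^\top)$ with the set of $-A^\top \mu$ for strictly positive $\mu$.
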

\begin{proof}
Due to \cite{BoroshTreybigProof}, if an arbitrary system $M x \leq h$, where $M \in \ZZ^{m \times n_x}$ and $h \in \ZZ^m$, has an integer solution, then there exists an integer solution $z$, such that $\|z\|_{\infty} \leq \Delta\bigl((M\,h)\bigr)$, where $(M\,h)$ is the system extended matrix. Denote $b_{y} = \lfloor b - B y \rfloor$. Therefore, if $\PC_{y} \cap \ZZ^{n_x} \not= \emptyset$, then there exists $z \in \PC_{y} \cap \ZZ^n$, such that $\|z\|_{\infty} \leq \Delta\bigl((A\,b_{y})\bigr) \leq (n_x)^{n_x/2} \cdot \max\{\|A\|_{\max},\|b_{y}\|_{\infty}\}^{n_x}$. 
% Since $\bigl|\lfloor x \rfloor\bigr| \leq |x| + 1$, we have 
% $$
%  \|b_{y}\|_{\infty} \leq \|b\|_{\infty} + \|B y\|_{\infty} + 1 
%  \leq \|b\|_{\infty} + \|B\|_{\max} \|y\|_{1} + 1.
% $$
Since, for some $y \in \Pi(\PC)$, $\PC_{y}$ is unbounded, the cone $\CCal = \{x \in \RR^n \colon A x \leq \BZero\}$ is non-zero and, consequently, for any $y \in \Pi(\PC)$, $\PC_{y}$ is unbounded. Since $\rank(A) = n$, $\PC_{y}$ contains no lines, so $\CCal$ is pointed. Let $A_{\BC}$ be some basis sub-matrix of $A$ and $c$ be the sum of columns of $-A_{\BC}^\top$. Clearly, $\CCal \cap \{x \in \RR^{n_x} \colon c^\top x \leq c_0\} = \{0\}$, for any $c_0 \in \RR_{\geq 0}$.

Now, we define the polyhedron $\PC'$, appending the inequality $c^\top x \leq y_{n_x+1}$ to the system $A x + B y \leq b$, where $y_{n_x+1} \in \RR$ is a new parametric variable. Clearly, the properties 2 and 3 are satisfied for $\PC'$ and $A'$. Finally, define $g(y) := (n_x)^{n_x/2} \cdot \|c\|_{1} \cdot \max\{\|A\|_{\max},\|b_{y}\|_{\infty}\}^{n_x}$. Due to the proposed reasoning, the properties 1 and 4 are also satisfied.
\end{proof}

\section{Construction of the chamber decomposition and a proof of Theorem \ref{chamber_decomp_th}}\label{chamber_decomp_proof}

In this Section, we give the proof of Theorem \ref{chamber_decomp_th} and develop an algorithm to construct a chamber decomposition of $\PC$. The paper \cite[Section~3]{Parametric_Clauss} of Clauss \& Loechner gives an algorithm to construct a collection $\DS$ of \emph{full-dimensional chambers}, such that
\begin{enumerate}
    \item The equality $\Pi(\PC) = \bigcup\limits_{\DC \in \DS} \DC$ is true;
    \item The inequality $\dim(\DC_1 \cap \DC_1) < n_y$ is true, for any $\DC_1,\DC_2 \in \DS$, with $\DC_1 \not= \DC_2$;
    \item For any $\DC \in \DS$ and $y \in \DC$, the polytopes $\{\PC_{y}\}$ have a fixed collection of parametric vertices, given by affine transformations of $y$.
\end{enumerate}
Note that, for $y$ in a boundary of some full-dimensional chamber $\DC \in \DS$, the corresponding parametric vertices may stick together to a single point in $\RR^{n_x}$. So, the polyhedron $\PC_y$ may not have the same combinatorial type for different points $y \in \DC$. 
%All the polytopes from $\QS$ are also called \emph{chambers}. Here, we will call them as \emph{full-dimensional chambers}. 
Following the proof of \cite[Lemma~3]{CountingInParametricPolyhedra}, let us consider hyperplanes in the parametric space $\RR^{n_y}$, formed by the affine hulls of $(n_y-1)$-dimensional faces (facets) of full-dimensional chambers $\DC \in \DS$. Let $\HS$ be the set of all such hyperplanes. 

Due to \cite{HyperplaneSpacePartition} (see also \cite[Lemma~3.5]{CountingFunctionEncoding}), the hyperplanes from $\HS$ divide the parameter space into at most $O\bigl(|\HS|^{n_y}\bigr)$ cells. Due to construction of the full-dimensional chambers (see \cite[Section~3]{Parametric_Clauss} or the proof of \cite[Lemma~3]{CountingInParametricPolyhedra}), these hyperplanes correspond to the projections of the generic $(n_y-1)$-dimensional faces of $\PC$ into the parametric space $\RR^{n_y}$. Since a part of the cells forms a subdivision of the chambers and since $|\HS| \leq f_{n_y-1}$, the total number of full-dimensional chambers can be bounded by $O\bigl((f_{n_y-1})^{n_y}\bigr)$.

Clauss \& Loechner \cite{Parametric_Clauss} give an algorithm that computes the collection $\DS$, which, for each $\DC \in \DS$, also computes a finite set $\TS_{\DC}$ of affine functions $\TC \in \TS_{\DC}$, $\TC \colon \RR^{n_y} \to \RR^{n_x}$, such that all the parametric vertices of $\PC_{y}$ are given by $\bigl\{\TC(y)\colon \TC \in \TS_{\DC}\bigr\}$. The complexity of Clauss \& Loechner's algorithm is bounded by the number of chambers times the total number of parametric vertices. Due to \cite{VerticesOfParametricPolyhedra}, the parametric vertices correspond to the $n_y$-dimensional faces of $\PC$, so their number is bounded by $O(f_{n_y})$. Therefore,  Clauss and Loechner's algorithm needs 
$$
(f_{n_y-1})^{n_y} \cdot f_{n_y} \cdot \poly(n_x, n_y, m) \qquad \text{operations.}
$$

For any fixed $\DC \in \DS$ and $y \in \inter(\DC)$, the vertices $
\TC(y)\colon \TC \in \TS_{\DC}$ are unique. But, for $y \in \DC_1 \cap \DC_2$, where $\DC_1 \not= \DC_2$, some vertices can coincide. Due to \cite{CountingInParametricPolyhedra,CountingFunctionEncoding}, this problem can be resolved by working with a wider class of chambers that consists of $\DS$ and all the faces of chambers from $\DS$. This new collection of chambers is exactly given by Definition \ref{chamber_decomp_def}, which is denoted by $\QS$.

% Due to \cite{CountingInParametricPolyhedra,CountingFunctionEncoding}, we have
% \begin{enumerate}
%     \item $\Pi(\PC) = \bigcup\limits_{\QC \in \QS} \relint(\QC)$;
%     \item $\relint(\QC_1) \cap \relint(\QC_2) = \emptyset$, for any $\QC_1,\QC_2 \in \QS$, with $\QC_1 \not= \QC_2$;
%     \item for any $\QC \in \QS$ and any $y \in \relint(\QC)$, the polytope $\PC_{y}$ has a fixed combinatorial type. Denote the set of unique vertices of $\PC_{y}$ by $\{\TC(y)\colon \TC \in \TS_{\QC}\}$, where $\TS_{\QC}$ is a finite set of affine functions $\TC \colon \RR^{n_y} \to \RR^{n_x}$.
% \end{enumerate}
% We call the polyhedra from the new collection $\QS$ as the \emph{low-dimensional chambers} or just \emph{chambers}. Note that $\QS \subseteq \QS$. 
Our goal is to find the collection $\QS$ with the corresponding lists of the unique parametric vertices. Very briefly, we enumerate all the possible affine sub-spaces, induced by all the possible intersections of the hyperplanes from $\HS$. For any such an affine subspace $\LC \subseteq \RR^{n_y}$ and a full-dimensional chamber $\DC \in \DS$, we consider the intersection $\LC \cap \DC$. If this intersection forms a non-empty face of $\DC$, then we can declare that we have found some low-dimensional chamber from $\QS$. Note that this check can be performed by a polynomial-time algorithm. There are three main difficulties:
\begin{enumerate}
    \item[1)] Two different full-dimensional chambers $\DC_1$ and $\DC_2$ can have the same intersection with $\LC$: $\DC_1 \cap \LC = \DC_2 \cap \LC$; 
    \item[2)] If $\LC' \subset \LC$ is an affine subspace of $\LC$, then it is possible that $\DC \cap \LC = \DC \cap \LC'$, for some full-dimensional chamber $\DC \in \DS$. In other words, the dimension of $\LC \cap \DC$ is strictly less than $\dim(\LC)$;
    \item[3)] As it was already noted, if $y \in \DC_1 \cap \DC_2$, then some vertices, given by different affine functions from $\TS_{\DC_1}$, can coincide. In other words, $\TC_1(y) = \TC_2(y)$, for different $\TC_1, \TC_2 \in \TS_{\DC_1}$ and $y \in \DC_1 \cap \DC_2$. How we are able to find such duplicates?
\end{enumerate}

Let us first deal with the difficulties 1) and 3). Let us fix $\LC$, a full-dimensional chamber $\DC \in \DS$, and the collection of parametric vertices $\TS_{\DC}$ and consider the lower-dimensional chamber $\QC = \LC \cap \DC$. Assume that $\dim(\LC) = \dim(\QC)$, we will break this assumption later. There exists a matrix $B \in \ZZ^{n_y \times d_{\LC}}$, where $d_{\LC} = \dim(\LC)$, such that $\LC = \linh(B)$. Suppose that $\TC(y) \in \TS_{\DC}$ is given by $\TC(y) = T y + t$, where $T \in \QQ^{n_x \times n_y}$ is a rational transformation matrix and $t \in \QQ^{n_x}$ is a rational translation vector. Let us show, how to find duplicates in the list $\TS_{\DC}$ of parametric vertices, for $y \in \relint(\QC)$. If $\TC_1$ and $\TC_2$ form a duplicate, then
$$
T_1 y + t_1 = T_2 y + t_2,\quad \text{for any } y \in \relint(\QC),
$$ that is equivalent to
$$
(T_1-T_2) y = t_2 - t_1,\quad \text{for any } y \in \affh(\QC).
$$
Due to the assumption $\dim(\LC) = \dim(\LC \cap \DC) = \dim(\QC)$, we have
$$
(T_1 - T_2) B x = t_2 - t_1,\quad \text{for any } x \in \RR^{d_{\LC}}.
$$ Since the solutions set of the last system is $d_{\LC}$-dimensional, we have $\rank\bigl((T_2 - T_1) B\bigr) = 0$. Consequently, $T_1 B = T_2 B$ and $t_1 = t_2$, so the matrices $\{T B\}$ and vectors $\{t\}$ must serve as a unique representation of affine functions $\TC \in \TS_{\DC}$, for a fixed subspace $\LC$. Consequently, we need to compute the basis $B$ of $\LC$, compute the pairs $\{(T B, t)\}$, sort the resulted list, and delete all the duplicates. Since $|\TS_{\DC}| \leq \nu$, this work can be done with $$
\nu \cdot \log (\nu) \cdot \poly(n_x, n_y, m)
$$ operations. Since $|\DS| = O\bigl((f_{n_y-1})^{n_y}\bigr)$ and, due to Lemma \ref{nu_mu_lm}, $\log(\nu) = O\bigl(\poly(n_x, m)\bigr)$, together with enumerating of all the full-dimensional chambers $\DC \in \DS$ ($\LC$ is fixed), it gives the arithmetic complexity bound:
$$
(f_{n_y-1})^{n_y} \cdot \nu \cdot \poly(n_x, n_y, m).
$$

Now, let us show how to resolve the difficulty number 1) for the low-dimensional chambers $\QC = \DC \cap \LC$ with $\dim(\QC) = \dim(\LC)$. Any such a chamber is uniquely represented by the list of unique pairs
$$
\bigl\{(TB, t)\colon \TC(y) = T y + t,\, \TC \in \TS_{\DC}\bigr\}.
$$ So, we can consider this list as a unique identifier of $\QC \in \QS$ and all the duplicated chambers can be eliminated just by sorting. Since the length of each list is bounded by $\nu$ and $|\DS| = O\bigl((f_{n_y-1})^{n_y}\bigr)$, the total arithmetic complexity of this procedure is again
$$
(f_{n_y-1})^{n_y} \cdot \nu \cdot \poly(n_x, n_y, m).
$$

Let us simultaneously discuss, how to resolve the difficulty number 2) and how to break the assumption $\dim(\LC) = \dim(\LC \cap \DC)$, for $\DC \in \DS$. To do that, we need first to sketch the full algorithm that constructs the chamber decomposition $\QS$. First, we use P.~Clauss \& V.~Loechner's algorithm to construct the full-dimensional chambers $\DS$ together with the corresponding parametric vertices. Next, we enumerate all the affine sub-spaces, induced by intersections of hyperplanes from $\HS$. The enumeration follows to the following partial order: if $\LC'$ and $\LC$ are affine sub-spaces, such that $\LC' \subset \LC$, then $\LC'$ will be processed first. So, we start from $0$-dimensional sub-spaces, corresponding to intersections of $n_y$ linearly independent hyperplanes from $\HS$, and, using the induction principle, move forward from $(k-1)$-dimensional chambers to $k$-dimensional chambers. For a $0$-dimensional affine subspace $\LC$ (which is a point) and a full-dimensional chamber $\DC \in \DS$, the assumption $\dim(\LC) = \dim(\LC \cap \DC)$ naturally holds. Since the considered assumption holds for $0$-dimensional $\LC$, we can find all the $0$-dimensional chambers from the collection $\QS$ with the sets of their unique parametric vertices, using the method, discussed earlier.

Let $k \geq 1$ and suppose inductively that we want process $k$-dimensional subspace $\LC$, when the sub-spaces $\LC' \subset \LC$, with $\dim(\LC') = k-1$, are already processed. When we say “processed”, we mean that all the unique low-dimensional chambers $\LC' \cap \DC$, for $\DC \in \DS$, with $\dim(\LC' \cap \DC) = k-1$ are already constructed. We recall all the full-dimensional chambers $\DC \in \DS$, such that $\DC \cap \LC'$ forms a $(k-1)$-dimensional face of $\DC$ that were computed previously. If $\dim(\DC \cap \LC) = k-1$, then we dismiss the chamber $\DC \cap \LC$, because it just coincides with some $\DC \cap \LC'$. In the opposite case, when $\dim(\DC \cap \LC) = k = \dim(\LC)$ and $\DC \cap \LC$ forms a face of $\DC$ that was not considered before, we can use the approach presented earlier to construct the unique set of the parametric vertices for $\DC \cap \LC$ and put the resulting chamber into the collection $\QS$. So, we need to learn how to distinguish between the cases $\dim(\DC \cap \LC) = k$ and $\dim(\DC \cap \LC) = k-1$. 

Let $B \in \ZZ^{n_y \times k}$ and $B' \in \ZZ^{n_y \times (k-1)}$ be bases of $\LC$ and $\LC'$, chosen, such that $B = \bigl(B'\; h\bigr)$, for $h \in \ZZ^{n_y}$. The equality $\dim(\DC \cap \LC) = k$ holds if and only if there exists affine functions $\TC_1(y) = T_1 y + t_1$ and $\TC_2(y) = T_2 y + t_2$ from the set 
 $\TS_{\DC}$, such that
\begin{gather*}
    \forall y \in \DC \cap \LC' \colon \quad \TC_1(y) = \TC_2(y) \text{ and }\\
    \exists y \in \DC \cap \LC \colon \quad \TC_1(y) \not= \TC_2(y),
\end{gather*} which is equivalent to
\begin{gather*}
    \forall y \in \affh(\DC \cap \LC') \colon \quad \TC_1(y) = \TC_2(y) \text{ and }\\
    \exists y \in \affh(\DC \cap \LC) \colon \quad \TC_1(y) \not= \TC_2(y).
\end{gather*} 
The last is equivalent to 
\begin{gather*}
    \forall x \in \RR^{k-1} \colon \quad \bigl(T_1 - T_2\bigr) B' x = t_2 - t_1 \text{ and }\\
    \exists x \in \RR^{k} \colon \quad \bigl(T_1 - T_2\bigr)\bigl(B'\; h\bigr) x \not= t_2 - t_1.
\end{gather*}
By the same reasoning, $\rank\bigl((T_1 - T_2) B'\bigr) = 0$, $T_1 B' =  T_2 B'$ and $t_2 = t_1$. Hence, the second property can be satisfied only if $T_1 h \not= T_2 h$. So, we have the following criterion:
\begin{equation}\label{k_dim_criteria}
    \dim(\DC \cap \LC) = k \quad \Longleftrightarrow \quad \exists \TC_1,\TC_2 \in \TS_{\DC} \colon\; \TC_1(h) \not= \TC_2(h).
\end{equation}
Let us summarize the whole algorithm:
\begin{enumerate}
    \item {\bf Construction of full-dimensional chambers.} Using P.~Clauss \& V.~Loechner's algorithm, we construct the collection $\DS$ of  full-dimensional chambers together with the sets of affine functions $\TS_{\DC}$, for each $\DC \in \DS$. It takes $(f_{n_y - 1})^{n_y} \cdot f_{n_y} \cdot \poly(n_x, n_y, m)$ arithmetic operations.
    \item {\bf Construction of lower-dimensional chambers. } To construct the collection $\QS$, we consider affine sub-spaces $\LC$, induced by all the possible intersections of hyperplanes $\HS$ with full-dimensional chambers from $\DS$.

    \medskip
    
    Assume that all the unique $(k-1)$-dimensional chambers of the type $\DC \cap \LC'$, where $\LC'$ is the $(k-1)$-dimensional intersection of hyperplanes from $\HS$ and $\DC \in \DS$, have already been constructed with their unique sets of parametric vertices. 
    For all the $k$-dimensional intersections $\LC$ and for all $\DC \in \DS$, we perform the following operations:
    % For all the $k$-dimensional intersections $\LC$, with $\LC' \subset \LC$, and for all $\DC \in \DS$, with $\LC' \cap \DC$ forming a $(k-1)$-dimensional face of $\DC$, we perform the following operations:
    \item {\bf Dimension check. } Let $\QC = \DC \cap \LC$. Using the criteria \eqref{k_dim_criteria}, we check that $\dim(\QC) = k$. If it does not hold, then we skip $\QC$;
    \item {\bf $\QC$-face check. } We check that $\QC$ is a face of $\DC$, which can be done by a polynomial-time algorithm.  If it does not hold, then we skip $\QC$;
    \item {\bf Erase duplicated parametric vertices.} Using the algorithm, presented earlier, we erase all the duplicated parametric vertices of $\QC$ and append $\QC$ to the collection $\QS$.
    \item {\bf Erase duplicated chambers.} After all $\DC \in \DS$ (for a fixed $\LC$) are processed, we remove the duplicated chambers $\QC = \DC \cap \LC$ from $\QS$ by the algorithm, mentioned earlier.
\end{enumerate}
For any fixed $\LC$, the complexity of the steps 3-6 is bounded by 
$$
(f_{n_y-1})^{n_y} \cdot \nu \cdot \poly(n_x, n_y, m).
$$
Since $\abs{\HS} \leq f_{n_y - 1}$, the number of different affine sub-spaces $\LC$, induced by intersections of hyperplanes from $\HS$, is bounded by
$$
\sum\limits_{i = 0}^{n_y} \binom{f_{n_y - 1}}{i} = O\bigl((f_{n_y-1})^{n_y}\bigr).
$$
Clearly, they can be enumerated with  $(f_{n_y-1})^{n_y}\cdot n_y^{O(1)}$ operations. Consequently, the chamber decomposition can be constructed with 
$$
\Bigl((f_{n_y-1})^{n_y} \cdot f_{n_y} + (f_{n_y-1})^{2n_y} \cdot \nu\Bigr)\cdot \poly(n_x, n_y, m)
$$ operations. The total number of chambers in the decomposition $\QS$ can be bounded by the number of ways to intersect any of the full-dimensional chambers with different affine sub-spaces $\LC$. Therefore, $\abs{\QS} = O\bigl((f_{n_y - 1})^{2 n_y}\bigr)$.

Let us estimate the complexity of a query to find $\QC \in \QS$ with $y \in \relint(\QC)$, for a given $y \in \QQ^{n_y}$. Note that each chamber $\QC$ naturally corresponds to a vector from $\{-1,0,1\}^{\abs{\HS}}$. To support our queries, we construct a hash-table that maps vectors from $\{-1,0,1\}^{\abs{\HS}}$ to $\QS$. The expected complexity to construct such a hash-table is bounded by the number of chambers timed the size of the vector that represents a chamber, which is $O\bigl(\abs{\QS} \cdot \abs{\HC} \bigr) = O\bigl((f_{n_y - 1})^{2 n_y+1}\bigr)$. Consequently, to perform a query we substitute an input vector $y$ to the inequalities corresponding to the hyperplanes $\HS$, and map $y$ to $\{-1,0,1\}^{\abs{\HS}}$. Using the hash-table, we find the corresponding chamber $\QC$ with $y \in \relint(\QC)$. Therefore, each query costs of $O\bigl(n_y \cdot \abs{\HS}\bigr) = O(n_y \cdot f_{n_y-1})$ operations. The total arithmetic complexity together with a construction of the hash-table is $O^*\bigl((f_{n_y-1})^{n_y} \cdot f_{n_y} + (f_{n_y-1})^{2n_y} \cdot (\nu + f_{n_y-1} )\bigr)$, which finishes the proof.

\section{Proof of the main theorem}
\label{main_param_th_proof}

First, let us make some basic preliminary analysis of degenerate situations. Since, for some $y \in \RR^{n_y}$, $\PC_{y}$ is bounded, it contains no lines. Consequently, $\rank(A) = n_x$. Due to Remark \ref{B_rank_rm}, w.l.o.g. we can assume that $\rank(B) = n_y$. 
% Together with $\rank(A) = n_x$, it grants that $\PC$ contains no lines. 
Due to Remark \ref{dim_PC_rm}, we can assume that $\dim(\PC) = n_x + n_y$. Summarizing the preliminary analysis, we have the following:
\begin{enumerate}
    \item $\dim(\PC) = n_x + n_y$ and $\dim\bigl(\Pi(\PC)\bigr) = n_y$;
    \item $\rank(A) = n_x$, $\rank(B) = n_y$, and $\PC$ contains no lines.
\end{enumerate}

As the first step of the preprocessing algorithm, we construct a chamber decomposition $\QS$ of $\PC$, using Theorem \ref{chamber_decomp_th}. In the next stage of the preprocessing algorithm, we deal independently with each chamber $\QC \in \QS$.

\subsection{Dealing with a fixed chamber $\QC \in \QS$}\label{fixed_cham_subs}

Let us fix a chamber $\QC$, chosen from the collection $\QS$, which was constructed in the previous stage. As it was noted before, for any $y \in \relint(\QC)$, polytopes $\{\PC_{y}\}$ have a fixed combinatorial type and a fixed set of unique parametric vertices $\pvertex(\QC) = \{\VC_{\BC} \colon \BC \in \base(\QC)\}$. For the sake of simplicity, denote $n := n_x$.

Consider first the case, when $\dim(\PC_{y}) < n$, for all $y \in \relint(\QC)$. Choose a point $y' \in \relint(\QC)$, it can be done by a polynomial number of operations, and consider $\PC_{y'}$. The polytope $\PC_{y'}$ is defined by a system $A x \leq b'$, where $b' = b + B y'$. Since $\dim(\PC_{y'}) < n$, there exists an index $j \in \intint m$, such that $A_j x = b'_j$, for any $x \in \PC_{y'}$. Note that such $j$ could be found by a polynomial number of operations. W.l.o.g., assume that $j = 1$ and $\gcd(A_1) = 1$. Let $A_1 = H Q$, where $H \in \ZZ^{1 \times n}$ is the \emph{Hermite Normal Form (HNF)} of $A_1$ and $Q \in \ZZ^{n \times n}$ is unimodular. Since $\gcd(A_1) =1$, we have $H = (1\,\BZero_{n-1})$. After the unimodular map $x' = Qx$, the system $A x \leq b'$ transforms to the integrally equivalent system 
$$
\begin{pmatrix}
    1 & \BZero_{n-1}\\
    h & A'\\
\end{pmatrix} x \leq b',
$$ where $h \in \ZZ^{m-1}$ and $A' \in \ZZ^{(m-1)\times(n-1)}$. Note that $\Delta(A') = \Delta(A) = \Delta$. Since the first inequality always holds as an equality on the solutions set, we can just substitute $x_1 = b'_1$. As the result, we achieve a new integrally equivalent system with $n-1$ variables: $A' x \leq b'_{\intint[2]{m}} - b'_1 \cdot h$. Since all these steps are independent on a choice of $y \in \relint(\QC)$, we can think $b'$ as a function $b'(y)$. Consequently, we can replace the polytope $\PC_{y}$ by a new polytope
\begin{multline}\label{reduced_dim_Py_eq}
A' x \leq b'_{\intint[2]{m}}(y) - b'_1(y) \cdot h = \\ 
= B_{\intint[2]{m}} y + b_{\intint[2]{m}} - h \cdot (B_1 y + b_1) = \\
= (B_{\intint[2]{m}} - h B_1) \cdot y + (b_{\intint[2]{m}} - b_1 h)
\end{multline}
with only $n-1$ variables. The set of parametric vertices of the new polytope can be constructed by the following way (noting that the all of them are unique). Let $\VC(y)$ be some parametric vertex of $\PC_{y}$, it corresponds to some base $\BC$ of $A$. In other words, $A_{\BC} \VC(y) = b'(y)$, for any $y \in \relint(\QC)$. Clearly, the related parametric vertex $\VC'(y)$ of the new polytope corresponds to the base $\BC' = \BC \setminus \{1\}$. Hence, it can be found by the formula 
$$
\VC'(y) = \bigl( A'_{\BC'} \bigr)^{-1} \cdot \bigl( b'_{\BC'}(y) - b'_1(y) \cdot h \bigr),
$$ due to \eqref{reduced_dim_Py_eq}, it is an affine map $\RR^{n_y} \to \RR^{n - 1}$.

Now, due to the proposed reasoning, we can assume that $\dim(\PC_{y}) = n$, for any $y \in \relint(\QC)$. Due to M.~Brion's Theorem \ref{Brion_th}, we have
\begin{multline*}
    [\PC_{y}] \equiv \sum\limits_{\VC \in \pvertex(\QC)} \bigl[\tcone\bigl(\PC_{y},\VC(y)\bigr)\bigr] \equiv \\
    \equiv \sum\limits_{\VC \in \pvertex(\QC)} \bigl[\VC(y) + \fcone\bigl(\PC_{y},\VC(y)\bigr)\bigr] \lmod.
\end{multline*}

Let us fix a vertex $\VC(y)$ and consider the cone $\CCal = \fcone\bigl(\PC_{y},\VC(y)\bigr)$. Denote $\JC := \JC\bigl(\VC(y)\bigr) = \{j \colon A_j \VC(y) = b_j\}$. Clearly, $\CCal = \PC\bigl(A_{\JC}, \BZero\bigr)$, and, consequently, $\CCal^\circ = \cone\bigl(A_{\JC}^\top\bigr)$. We apply the triangulation of $\CCal^\circ$ into simple cones formed by some sub-set of bases of $A^\top_{\JC}$ (see Definition \ref{triang_def}). Let $q_{\VC}$ be the total number of simple cones in the triangulation, and let $\BS_{\VC}$, with $|\BS_{\VC}| = q_{\VC}$, be the corresponding sub-set of bases. In other words, we have
$$
[\CCal^\circ] \equiv \bigcup_{\BC \in \BS_{\VC}} \bigl[\cone(A^\top_{\BC})\bigr] \ldcmod.
$$

\begin{remark}\label{triang_complexity_rm}
It is clear that constructing a triangulation for a pointed cone in $\RR^{n}$ is equivalent to constructing a triangulation for a point configuration in $\RR^{n-1}$. Therefore, due to \cite[Lemma~8.2.2]{TriangBook_DeLoera} (see also \cite{Triangulation_Gruzdev}, where another algorithm is proposed), the triangulation can be computed with $O(q_{\TC} \cdot n^3)$ operations.    
\end{remark}

Next, we use the \emph{duality trick}, see \cite[Remark~4.3]{BarvPom}. Due to \cite[Theorem~5.3]{BarvBook} (see also \cite[Theorem~2.7]{BarvBook}), there is a unique linear transformation $\mathfrak{D} \colon \PS(\VC) \to \PS(\VC)$, for which $\mathfrak{D}\bigl([\PC]\bigr) = [\PC^{\circ}]$. Consequently, due to Remark \ref{fcone_generator_rm}, we have
$$
[\CCal] \equiv \bigcup_{\BC \in \BS_{\VC}} \bigl[\cone(A^\top_{\BC})^\circ\bigr] \equiv \bigcup_{\BC \in \BS_{\VC}} \bigl[\PC(A_{\BC}, \BZero)\bigr] \lmod,
$$ and, due to Theorem \ref{linear_map_eval_th}, we have
\begin{multline*}
\bigl[\VC(y) + \CCal\bigr] \equiv \bigcup_{\BC \in \BS_{\VC}} \bigl[\VC(y) + \PC(A_{\BC}, \BZero)\bigr] \equiv \\
\equiv \bigcup_{\BC \in \BS_{\VC}} \bigl[\PC\bigl(A_{\BC}, A_{\BC} \VC(y) \bigr)\bigr] \lmod.    
\end{multline*}
The triangulation of all the cones $\cone(A^\top_{\JC})$ induces the triangulation of the whole normal fun $\cone(A^\top)$. Hence, $\sum_{\VC \in \pvertex(\QC)} q_{\VC} \leq \mu$. Consequently, combining $\BS_{\VC}$, for different $\VC \in \pvertex(\QC)$, in one big set $\BS$, we have
$$
[\PC_{y}] \equiv \sum\limits_{\BC \in \BS} \Bigl[\PC\bigl(A_{\BC}, A_{\BC} \VC_{\BC}(y)\bigr)\Bigr] \lmod,
$$ where $|\BS| \leq \mu$ and some $\VC_{\BC}$ can be equivalent, for different $\BC \in \BS$. Note that, since $\sum_{\VC \in \pvertex(\QC)} q_{\VC} \leq \mu$, and, due to Remark \ref{triang_complexity_rm}, this decomposition can be computed with $O(\mu \cdot n^3)$ operations.

Define a set $\EC$ of \emph{edge-directions} in the following way. For any element $\cone(M)$ of the resultant triangulation of $\cone(A^\top)$, we put all the columns of $\det(M) \cdot M^{-1}$ into $\EC$. If a pair of elements in $\EC$ differs only by a sign, we remove some element of the pair. Note that elements of the resulting set $\EC$ correspond exactly to the direction-vectors of the edges of $\PC_y$, assuming that the directions do not matter. 

Let us assume that a vector $c \in \ZZ^n$ is chosen, such that $c^\top h \not= 0$, for each $h \in \EC$, and denote $\chi = \max\limits_{h \in \EC} \bigl\{|c^\top h|\bigr\}$. Assume additionally that $\chi = \Omega(n)$. Denote $f(\PC;\tau) = \FC\bigl([\PC]\bigr)(\tau c)$, for any rational polyhedra $\PC$, where $\FC$ is the evaluation, considered in Theorem \ref{Pukhlikov_th}. Note that, for any $\BC \in \BS$, $$
f\Bigl(\PC\bigl(A_{\BC}, A_{\BC} \VC_{\BC}(y)\bigr);\tau\Bigr) = f\Bigl(\PC\bigl(A_{\BC}, \TC_{\BC}(y) \bigr);\tau\Bigr),
$$ where $\TC_{\BC}(y) := \lfloor A_{\BC} \VC_{\BC}(y) \rfloor$ is an affine step-function. Denote $f_{\BC}(y;\tau) := f\Bigl( \PC\bigl(A_{\BC}, \TC_{\BC}(y)\bigr); \tau \Bigr)$. Due to Theorem \ref{Pukhlikov_th}, we can write
$$
f\bigl(\PC_{y} ; \tau\bigr) = \sum\limits_{\BC \in \BS} f_{\BC}(y;\tau).
$$

Let us fix $\BC \in \BS$ and denote $f(y;\tau) := f_{\BC}(y;\tau)$, $\TC := \TC_{\BC}$, $\delta := |\det(A_{\BC})|$, and let $(h_1, \dots, h_n)$ be the columns of $\delta \cdot (A_{\BC})^{-1}$. Let, additionally, $S = P A_{\BC} Q$ be the SNF of $A_{\BC}$, for unimodular matrices $P,Q \in \ZZ^{n \times n}$, and $\sigma := S_{n n}$. Denote the orders of $(h_1, \dots, h_n)$ modulo $S \ZZ^n$ by $(r_1,\dots,r_n)$. 
Due to the assumption on the vector $c$, it satisfies the conditions of Lemma \ref{quad_system_lm}, applied to the polyhedron $\PC\bigl(A_{\BC}, \TC_{\BC}(y)\bigr)$. Due to Lemma \ref{quad_system_lm},
\begin{equation}\label{single_vertex_repr_eq}
f(y;\tau) = e^{\langle c, A_{\BC}^{-1} \TC(y) \rangle \tau} \cdot \frac{\sum\limits_{i = - n \cdot \sigma\cdot \chi}^{n \cdot \sigma\cdot \chi} \epsilon_i\bigl(g(y)\bigr) \cdot e^{-\frac{i}{\delta} \tau}}{\bigl(1 - e^{-\beta_1 \cdot \tau}\bigr) \dots \bigl(1 - e^{-\beta_n \cdot \tau}\bigr)},    
\end{equation}
where $\beta_i = \langle c, \frac{r_i}{\delta} h_i \rangle$, $g(y) = P \TC(y) \bmod S \cdot \ZZ^n$, and $\epsilon_i \colon S \ZZ^n \to \ZZ_{\geq 0}$.

Now, we are interested in the constant term in the Tailor's decomposition of $f(y;\tau)$. We have,
\begin{gather*}
    e^{- \frac{i}{\delta} \tau} = \sum\limits_{k = 0}^{+\infty} \tau^k \cdot \frac{(-i)^k}{\delta^k k!}, \\
    \sum\limits_{i = - n \cdot \sigma \cdot \chi}^{n \cdot \sigma \cdot \chi} \epsilon_i\bigl(g(y)\bigr) \cdot e^{- \frac{i}{\delta} \tau} = \sum\limits_{k = 0}^{+\infty} \tau^k \left(\sum\limits_{i = - n \cdot \sigma \cdot \chi}^{n \cdot \sigma \cdot \chi} \epsilon_i\bigl(g(y)\bigr) \cdot \frac{(-i)^k}{\delta^k k!}\right),\\
    \frac{1}{\bigl(1 - e^{-\beta_1 \tau}\bigr) \dots \bigl(1 - e^{-\beta_n \tau}\bigr)} = 
 \frac{1}{\tau^n \beta_1 \dots \beta_n} \sum\limits_{k = 0}^{+\infty} \tau^{k} \cdot \toddp_k(\beta_1, \dots, \beta_n),
\end{gather*} 
where the last formula could be taken, for example, from \cite[Chapter~14]{BarvBook}, and $\toddp_j(\beta_1,\dots,\beta_n)$ is a homogeneous polynomial of degree $j$, called the \emph{$j$-th Todd's polynomial} on $\beta_1,\dots,\beta_n$. Consequently,
\begin{multline*}
    \frac{\sum\limits_{i = - n \cdot \sigma\cdot \chi}^{n \cdot \sigma\cdot \chi} \epsilon_i\bigl(g(y)\bigr) \cdot e^{-\frac{i}{\delta} \tau}}{\bigl(1 - e^{-\beta_1 \cdot \tau}\bigr) \dots \bigl(1 - e^{-\beta_n \cdot \tau}\bigr)} = \\ 
    = \frac{1}{\tau^n \beta_1 \dots \beta_n} \sum\limits_{k = 0}^{+\infty} \tau^k \left( \sum\limits_{j = 0}^k \sum\limits_{i = - n \cdot \sigma\cdot \chi}^{n \cdot \sigma\cdot \chi} \epsilon_i\bigl(g(y)\bigr) \frac{(-i)^j}{\delta^j j!} \toddp_{k-j}(\beta_1, \dots, \beta_n) \right) = \\
    = \frac{1}{\tau^n \beta_1 \dots \beta_n} \sum\limits_{k = 0}^{+\infty} \tau^k \cdot \hat \pi_k\bigl(g(y)\bigr),\quad\text{ denoting }\\  
    \hat \pi_k(g) := \sum\limits_{j = 0}^k \toddp_{k-j}(\beta_1, \dots, \beta_n) \cdot \left( \frac{1}{\delta^j j!} \sum\limits_{i = - n \cdot \sigma\cdot \chi}^{n \cdot \sigma\cdot \chi} \epsilon_i(g) (-i)^j \right).
\end{multline*}
Therefore, due to the formula \eqref{single_vertex_repr_eq}, the constant term in $f(y;\tau)$ can be expressed by the formula
\begin{equation}\label{constant_term_0eq}
    \frac{1}{\beta_1 \dots \beta_n} \sum\limits_{k = 0}^{n} \frac{\bigl\langle c, A_{\BC}^{-1} u(y) \bigr\rangle^k}{k!} \cdot \hat \pi_{n-k}\bigl(g(y)\bigr).
\end{equation}
Let us define 
$$
\pi_{k}(x) = \frac{1}{k!\, \beta_1 \dots \beta_n} \cdot \hat \pi_{n-k}\bigl(x \bmod S \cdot \ZZ^n \bigr).
$$
Clearly, $\pi_{k} \colon \ZZ^n \to \QQ_{\geq 0}$ is a periodic function with a period-matrix $S$, in other words $\pi_{k}(x + \BUnit \cdot S) = \pi_{k}(x)$, for any $x \in \ZZ^n$. Denoting $c_{\BC} = A_{\BC}^{-\top} c$, and since $\bigl\langle c, A_{\BC}^{-1} \TC(y) \bigr\rangle = \bigl\langle A_{\BC}^{-\top} c, \TC(y) \bigr\rangle$, the formula \eqref{constant_term_0eq} can be rewritten in the following way:
\begin{equation}\label{constant_term_eq}
\sum\limits_{k = 0}^{n} \pi_{k}\bigl(P \TC(y)\bigr) \cdot \bigl\langle c_{\BC}, \TC(y) \bigr\rangle^k,
\end{equation} which is a periodic step-polynomial of degree $n$ and length $n+1$.

Due to \cite[Chapter~14]{BarvBook}, $\QEnum(y)$ is equal to the constant term in the Taylor's expansion of $f\bigl(\PC_{y} ; \tau\bigr)$. Clearly, it can be represented as the sum of constant terms of the Tailor's expansions of $f_{\BC}(y;\tau)$, for $\BC \in \BS$. Consequently, for a fixed chamber $\QC \in \QS$, the resulting function $\QEnum(y)$ could be represented as a periodic step-polynomial of degree $n$ and length $(n+1) \cdot \mu$. The exact form is the same as it was proposed in the formula \eqref{counting_func_repr_eq}.

Let us again fix $\BC \in \BS$ and discuss the arithmetic cost to compute $f(y;\tau) := f_{\BC}(y;\tau)$. Due to Lemma \ref{quad_system_lm}, the coefficients $\{\beta_i\}$, for $i \in \intint n$, and the coefficients $\{\epsilon_i(g)\}$, for any $i \in \intint[-n\cdot\sigma\cdot\chi]{n\cdot\sigma\cdot\chi}$ and $g \in \ZZ^n / S \ZZ^n$, can be computed with $O\bigl( T_{SNF}(n) + \Delta \cdot n^2 \cdot \sigma \cdot \chi \bigr)$ operations. Since $\sigma \leq \Delta$, $\chi = \Omega(n)$, and, due to Storjohann \cite{SNFOptAlg}, $T_{SNF}(n) = O(n^3)$, the last bound becomes $O(\Delta^2 \cdot n^2 \cdot \chi)$. 

Due to \cite[Theorem~7.2.8, p.~137]{AlgebracILP}, the values of $\toddp_{k}(\beta_1,\dots,\beta_n)$, for any $k \in \intint n$, can be computed with an algorithm that is polynomial in $n$ and the bit-encoding length of $\beta_1,\dots,\beta_n$. Moreover, it follows from the theorem's proof that the arithmetic complexity can be bounded by $O(n^3)$.

Consider now the values of $\{\hat \pi_k(g)\}$, for any $k \in \intint n$ and $g \in \ZZ^n / S \ZZ^n$. Due to the definition of $\hat \pi_k$, for a fixed $g$, the sequence $\bigl\{\hat \pi_k(g)\bigr\}_{k \in \intint[0]{n}}$ can be interpreted as a convolution of two sequences. The first sequence is already computed as the values of Todd's polynomials. Clearly, the second sequence can be computed with $O(n^2 \cdot \sigma \cdot \chi)$ operations. Now, using the fast convolution algorithms, the sequence $\bigl\{\hat \pi_k(g)\bigr\}_{k \in \intint[0]{n}}$ can be computed with $O(n \cdot \log(n))$ operations. So, the total arithmetic cost to compute $\{\hat\pi_k(g)\}$, for all $k \in \intint n$ and $g \in \ZZ^n / S \ZZ^n$, is $O(n^2 \cdot \Delta^2 \cdot \chi)$.

Summarizing the analysis, the generating function $f_{\BC}(y;\tau)$, for a fixed $\BC \in \BS$, can be computed with $O(n^2 \cdot \Delta^2 \cdot \chi)$ operations. Since $|\BS| \leq \mu$, we need 
$$
O\bigl(T_{trng} + \mu \cdot n^2 \cdot \Delta^2 \cdot \chi \bigr)
$$ operations to compute $f(\PC_{y};\tau)$, where $T_{trng}$ is the total arithmetic complexity of all the triangulations, performed during our algorithm. As it was already discussed, we have $T_{trng} = O(\mu \cdot n^3)$. So, the last bound becomes $O(\mu \cdot n^2 \cdot \Delta^2 \cdot \chi)$.

Previously, we made the assumption that the vector $c \in \ZZ^n$ is chosen, such that $c^\top h \not= 0$, for any $h \in \EC$. Let us present an algorithm that generates a vector $c$ with a respectively small value of the parameter $\chi = \max\limits_{h \in \EC} \bigl\{\abs{c^\top h}\bigr\}$. The main idea is concentrated in the following Theorem \ref{all_non_zero_th}, due to \cite{Counting_FPT_Delta_corrected} (see also \cite{SparseILP_Gribanov}). 
\begin{theorem}[Theorem~2 of \cite{Counting_FPT_Delta_corrected}]\label{all_non_zero_th}
Let $\AC$ be a set composed of $N$ non-zero vectors in $\QQ^n$. Then, there exists a randomized algorithm with the expected arithmetic complexity $O(n \cdot N)$, which finds a vector $z \in \ZZ^n$, such that:
\begin{enumerate}
    \item $a^\top z \not= 0$, for any $a \in \AC$;
    \item $\|z\|_{\infty} \leq N$.
\end{enumerate}
\end{theorem}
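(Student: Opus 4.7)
The plan is to use a direct probabilistic argument based on the Schwartz--Zippel lemma applied to linear forms. I would sample each coordinate $z_k$ independently and uniformly from the integer interval $\{-N, -N+1, \ldots, N\}$, so that $z$ is chosen uniformly from the box $\{-N, \ldots, N\}^n$, which automatically enforces the size condition $\|z\|_\infty \leq N$. The algorithm then checks whether $a^\top z \neq 0$ for every $a \in \AC$; if some inner product vanishes, the sample is rejected and the procedure is repeated.

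The crucial step is bounding the per-trial success probability. Fix any $a \in \AC$ and let $k$ be an index with $a_k \neq 0$ (such an index exists since $a$ is non-zero). Conditioning on all coordinates $z_j$ with $j \neq k$, the equation $a^\top z = 0$ determines $z_k$ uniquely in $\QQ$, so at most one value in $\{-N,\dots,N\}$ can satisfy it. Therefore $\prob[a^\top z = 0] \leq 1/(2N+1)$, and by the union bound over the $N$ vectors of $\AC$, the probability that a trial fails is at most $N/(2N+1) < 1/2$. Consequently, each independent trial succeeds with probability at least $1/2$, and the expected number of trials until success is at most $2$.

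For the complexity analysis, each trial requires drawing $n$ integers of magnitude at most $N$ (cost $O(n)$) and then evaluating $N$ inner products in $\RR^n$ (cost $O(nN)$). Combined with the constant expected number of trials, the total expected arithmetic complexity is $O(nN)$, matching the claim. I do not anticipate a serious technical obstacle here; the main subtlety is simply the choice of the symmetric sample space $\{-N,\dots,N\}$, whose cardinality $2N+1$ strictly exceeds $N$ and thus guarantees the constant-probability bound. Using the asymmetric box $\{0,1,\dots,N\}^n$ would only yield per-trial success $\geq 1/(N+1)$, degrading the expected running time to $O(nN^2)$, so exploiting the freedom to take negative coordinates is essential for the linear bound.
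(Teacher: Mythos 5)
The paper invokes this result as a citation (Theorem~2 of the reference) and gives no proof of its own, so there is nothing in-paper to compare against. Your argument is correct and is the standard rejection-sampling proof: sampling $z$ uniformly from the symmetric box $\{-N,\dots,N\}^n$, conditioning on the coordinates other than one where $a_k \neq 0$ shows that each linear form vanishes with probability at most $1/(2N+1)$, the union bound gives failure probability $< 1/2$ per trial, and hence a constant expected number of $O(nN)$-cost trials. Your observation about why the symmetric box of size $2N+1$ (rather than $\{0,\dots,N\}$) is needed to get the $O(1)$ expected trial count is exactly the right subtlety to flag.
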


Any element of the triangulation of $\cone(A^\top)$ generates at most $n$ edges of $\PC_y$. Consequently, $\abs{\EC} \leq \mu \cdot n$. Choose some base $\BC$ of $A$. Note that $A_{\BC} h \not= \BZero$ and $(A_{\BC} h)_i \in \intint[-\Delta]{\Delta}$, for any $h \in \EC$ and $i \in \intint n$. Next, we use Theorem \ref{all_non_zero_th} to the set $A_{\BC} \cdot \EC$, which produces a vector $z$, such that 
\begin{enumerate}
    \item $z^\top A_{\BC} h \not= 0$, for each $h \in \EC$;
    \item $\|z\|_{\infty} \leq \mu \cdot n$.
\end{enumerate}
Now, we assign $c := A_{\BC}^\top z$. By the construction, we have $c^\top h \not= 0$ and $\abs{c^\top h} = \abs{z^\top A_{\BC} h} \leq n^2 \cdot \mu \cdot \Delta$, for each $h \in \EC$. Consequently, $\chi \leq n^2 \cdot \mu \cdot \Delta$. Therefore, we can conclude that, for a fixed $\QC \in \QS$ and any $y \in \relint(\QC)$, the representation of $f(\PC_y; \tau)$, as an $n$-degree periodic step-polynomial of degree $n$ and length $(n+1) \cdot \mu$, can be found by an algorithm with the arithmetic complexity 
$$
O(\mu^2 \cdot n^4 \cdot \Delta^3 ).
$$

\subsection{How to store the data, and what is the final preprocessing time?}\label{totall_preproc_subs}

Due to Theorem \ref{chamber_decomp_th}, the arithmetic complexity to construct the collection of chambers $\QS$ together with their parametric vertices is $O^*\bigl((f_{n_y-1})^{n_y} \cdot f_{n_y} + (f_{n_y-1})^{2n_y} \cdot (f_{n_y-1} + \nu)\bigr)$. For a fixed chamber $\QC \in \QS$ and $y \in \relint(\QC)$, the counting function $\QEnum(y)$ is represented as the periodic step-polynomial, given by the formula \eqref{counting_func_repr_eq}. Its length is $O(\mu \cdot n_x)$, its degree is $n_x$, and it can be computed with $O(\mu^2 \cdot n_x^4 \cdot \Delta^3)$ operations. To store such a representation, for each $\BC \in \BS$, we need to store $A_{\BC}=P_{\BC} S_{\BC} Q_{\BC}$, $c_{\BC}$, and the $O(n_x \cdot \Delta)$ values of the periodic coefficients $\pi_{\BC,k}(g)$, for $k \in \intint[0]{n_x}$ and $g \in \ZZ^{n_x} \bmod\, S_{\BC} \cdot \ZZ^{n_x}$. For a fixed $k$, the values of $\{\pi_{\BC,k}(g)\}$ can be stored in a hash-table of the size $|\det(S)| \leq \Delta$, where the keys are vectors from $\ZZ^{n_x}$, whose $i$-th component is from $\intint[0]{S_{i i} -1}$. Since, due to Theorem \ref{chamber_decomp_th}, $|\QS| = O\bigl( (f_{n_y-1})^{2n_y} \bigr)$ and since $\nu \leq \mu$, the total preprocessing arithmetic complexity can be estimated by
\begin{equation*}
    O^*\bigl( (f_{n_y-1})^{n_y} \cdot f_{n_y} + (f_{n_y-1})^{2n_y} \cdot (f_{n_y - 1} + \mu^2 \cdot \Delta^3) \bigr).
\end{equation*}

\subsection{What is the query time?}\label{total_query_subs}

Let us estimate the complexity of an $\QEnum(y)$-query, for a given vector $y \in \QQ^{n_y}$. First, we need to find a chamber $\QC \in \QS$, such that $y \in \relint(\DC)$. Due to Theorem \ref{chamber_decomp_th}, it can be done with $O(n_y \cdot f_{n_y-1})$ operations.

As it was shown before, for a fixed $\QC \in \QS$ and any $y \in \relint(\DC)$, there exists a set of bases $\BS$, such that $\QEnum(y)$ equals to the sum of constant terms in Taylor's decompositions of $f_{\BC}(y;\tau)$, for $\BC \in \BS$. Recalling that, for $\BC \in \BS$, the objects $A_{\BC}=P_{\BC} S_{\BC} Q_{\BC}$, $c_{\BC}$, and $\pi_{\BC,k}$ are already pre-computed, let us show how to compute the corresponding constant term: 
\begin{enumerate}
    \item Compute $\TC_{\BC}(y) = \bigl\lfloor b_{\BC} - B_{\BC} y \bigr\rfloor$ with $O(n_x \cdot n_y)$ operations;
    \item Compute $g = P_{\BC} \TC_{\BC}(y) \bmod S_{\BC} \cdot \ZZ^{n_x}$ with $O(n_x \cdot \log(\Delta))$ operations;
    \item For $k \in \intint[0]{n_x}$, access the coefficients $\pi_{\BC,k}\bigl(g\bigr)$, using the corresponding hash-table. It takes $O(n_x \cdot \log(\Delta))$ operations;
    \item Compute $\bigl\langle c_{\BC}, \TC_{\BC}(y) \bigr\rangle$ with $O(n_x)$ operations;
    \item For each $k \in \intint[0]{n_x}$, compute $\bigl\langle c_{\BC}, \TC_{\BC}(y) \bigr\rangle^k$ with $O(n_x)$ operations;
    \item Compute the formula \eqref{constant_term_eq} with $O(n_x)$ operations.
\end{enumerate}

After that, we just need to take the sum along all the constant terms, corresponding to $\BC \in \BS$. As it was noted before, there are at most $\mu$ terms. The total arithmetic cost is
$$
O\Bigl( n_y \cdot f_{n_y - 1} +  \mu \cdot n_x \cdot \bigl(\log(\Delta) + n_y\bigr) \Bigr).
$$

\addcontentsline{toc}{section}{Acknowledgement}
\section*{Acknowledgement}
Main results of our work obtained in Sections \ref{new_repr_sec} and \ref{main_param_th_proof} were supported by the Russian Science Foundation (RScF) No. 24-71-10021. The results obtained in Sections \ref{special_sec} and \ref{chamber_decomp_proof} 
were prepared within the framework of the Basic Research Program at the National Research University Higher School of Economics (HSE).
% This work was supported by a grant for research centers in the field of artificial intelligence, provided by the Analytical Center for the Government of the Russian Federation in accordance with the subsidy agreement (agreement identifier 000000D730324P540002) and the agreement with the Moscow Institute of Physics and Technology dated November 1, 2021 No. 70-2021-00138.

\addcontentsline{toc}{section}{Statements and declarations}
\section*{Statements and declarations}

\noindent{\bf Competing interests:} The authors have no competing interests.

\noindent{\bf Data availability statement:} The manuscript has no associated data.

\begin{appendices}

\section{Proof o Lemma \ref{quad_system_lm}}\label{quad_system_th_proof}
\begin{proof}

After the unimodular map $x = Q x'$ and introducing slack variables $y$, the system $\{x \in \ZZ^n \colon A x \leq b\}$ transforms to 
$$
\begin{cases}
S x + P y = P b\\
x \in \ZZ^{n}\\
y \in \ZZ^{n}_{\geq 0}.
\end{cases}
$$
Due to unimodularity of $P$, the last system is equivalent to
\begin{equation}\label{initial_group_system_eq}
\begin{cases}
P y = P b \pmod{S \cdot \ZZ^n}\\
y \in \ZZ^{n}_{\geq 0}.
\end{cases}
\end{equation}
Denoting $\GC = \ZZ^{n}/S \cdot \ZZ^n$, $g_0 = P b \bmod S \cdot \ZZ^n$, $g_i = P_{* i} \bmod S \cdot \ZZ^n$, we rewrite the last system \eqref{initial_group_system_eq}:
\begin{equation}\label{group_system_eq}
\begin{cases}
\sum\limits_{i = 1}^n y_i g_i = g_0\\
y \in \ZZ_{\geq 0}^n.
\end{cases}    
\end{equation}
The points $x \in \PC \cap \ZZ^n$ and the solutions $y$ of the system \eqref{group_system_eq} are connected by the bijective map $x = A^{-1}(b - y)$. Let $r_i = \abs{\langle g_i \rangle}$, for $i \in \intint{n}$, and $r_{\max} := \max_{i \in \intint n} \{r_i\}$. Clearly, $\abs{\GC} = \abs{\det(S)} = \Delta$ and $r_{\max} \leq \sigma$. For $k \in \intint{n}$ and $g' \in \GC$, let $\MC_k(g')$ be the solutions set of the auxiliary system $$
\begin{cases}
\sum\limits_{i = 1}^k y_i g_i = g'\\
y \in \ZZ_{\geq 0}^k,
\end{cases}
$$ and denote
$$
\gG_k(g'; \tau) = \sum\limits_{y \in \MC_k(g')} e^{-\langle c, \sum\limits_{i=1}^k h_{i} y_i 
\rangle \tau}.
$$
It follows that
\begin{multline}\label{group_connection_eq}
    \fG(\PC,c;\,\tau) = \sum\limits_{z \in \PC \cap \ZZ^{n}} e^{\langle c, z 
 \rangle \tau}
    = \sum\limits_{y \in \MC_n(g_0)} e^{\langle c, A^{-1}(b-y) \rangle \tau} = \\
    = e^{ \langle c, A^{-1} b \rangle \tau} \cdot \sum\limits_{y \in \MC_n(g_0)} e^{-\frac{1}{\Delta} \langle c, A^* y \rangle \tau} =  e^{ \langle c, A^{-1} b \rangle \tau} \cdot \gG_n\bigl(g_0;\frac{\tau}{\Delta}\bigr).
\end{multline}
The recurrent formulae for $\gG_k(g';\tau)$ were formally proven in \cite[see its formulae (10), (11), and (12)]{Counting_FPT_Delta}, we cite them using the following separate Lemma \ref{g_formulae_lm}. The self-contaned proof of the Lemma is given in Section \ref{g_formulae_lm_proof}.
% Since the original published paper \cite{Counting_FPT_Delta} contained an inaccuracy in the main result, we give a self-contained proof of the lemma in Subsection \ref{g_formulae_proof} of Appendix.
\begin{lemma}\label{g_formulae_lm}
The following formulae hold:
    \begin{gather}
    \gG_1(g'; \tau) = \frac{e^{- \langle c, s  h_1 \rangle 
 \tau}}{1 - e^{- \langle c, r_1  h_1 \rangle \tau}},\quad\text{where $s = \min\{y_1 \in \ZZ_{\geq 0} \colon y_1 \cdot g_1 = g' \}$},\label{gg_k_tau_initial}\\
    \gG_k(g';\tau) = \frac{1}{1 - e^{-\langle c, r_k  h_k \rangle \tau}} \cdot \sum\limits_{i = 0}^{r_k-1} e^{- \langle c, i h_k \rangle \tau} \cdot \gG_{k-1}(g' - i \cdot g_k; \tau),\label{gg_k_tau_recur}\\
    \gG_k(g';\tau) = \frac{\sum\limits_{i = - k \cdot \sigma \cdot \psi}^{k \cdot \sigma \cdot \psi} \epsilon_i(k,g') \cdot e^{- i \tau}}{\bigl(1 - e^{-\langle c, r_1 \cdot h_1 \rangle \tau}\bigr)\bigl(1 - e^{-\langle c, r_2 h_2 \rangle \tau}\bigr) \dots \bigl(1 - e^{- \langle c, r_k h_k \rangle \tau}\bigr)},\label{gg_k_tau_conv}
\end{gather}
where $\epsilon_{i}(k,g') \in \ZZ_{\geq 0}$ are coefficients, depending on $k$ and $g'$. If the set $\{y_1 \in \ZZ_{\geq 0} \colon y_1 g_1 = g' \}$ is empty, we put $\gG_1(g'; \tau) := 0$. If the vector $c$ is chosen such that $\langle c, h_i \rangle > 0$, for all $i \in \intint n$, then, for any $\tau > 0$, $k \in \intint n$, and $g' \in \GC$, the series $\gG_k(g'; \tau)$ converges absolutely to the corresponding r.h.s.\, functions.
\end{lemma}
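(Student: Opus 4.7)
My plan is to establish \eqref{gg_k_tau_initial}, \eqref{gg_k_tau_recur}, and \eqref{gg_k_tau_conv} in that order: first compute $\gG_1$ explicitly as a geometric series, then derive the recursion by a residue-class partition on the last coordinate, and finally prove the closed rational form by induction on $k$. Convergence under $\langle c, h_i\rangle > 0$ will come out as a byproduct of the same geometric-series identities, since each common ratio $e^{-\langle c, r_i h_i\rangle \tau}$ has modulus in $(0,1)$ for $\tau>0$.

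For the base case, I would describe $\MC_1(g')$ explicitly. If $g' \notin \langle g_1\rangle \subseteq \GC$, then $\MC_1(g') = \emptyset$ and $\gG_1(g';\tau) = 0$ by the stated convention. Otherwise, letting $s$ be the least non-negative integer with $s g_1 = g'$ in $\GC$, the full solution set is the arithmetic progression $\{s + j r_1 \colon j \in \ZZ_{\geq 0}\}$; substituting into the definition and factoring out $e^{-\langle c, s h_1\rangle\tau}$ produces a geometric series with ratio $e^{-\langle c, r_1 h_1\rangle\tau}$, which sums to \eqref{gg_k_tau_initial}. For \eqref{gg_k_tau_recur} I would partition each solution in $\MC_k(g')$ by the residue $y_k \bmod r_k$: writing $y_k = i + j r_k$ with $i \in \{0,\dots,r_k-1\}$ and $j \in \ZZ_{\geq 0}$, and using $r_k g_k = 0$ in $\GC$, the residual constraint on $(y_1,\dots,y_{k-1})$ becomes $\sum_{l<k} y_l g_l = g' - i g_k$, i.e.\ exactly $\MC_{k-1}(g' - i g_k)$. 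The defining sum therefore factors as $\sum_{i=0}^{r_k-1} e^{-\langle c, i h_k\rangle\tau} \bigl(\sum_{j\geq 0} e^{-\langle c, j r_k h_k\rangle\tau}\bigr) \gG_{k-1}(g' - i g_k;\tau)$, and summing the inner geometric series yields \eqref{gg_k_tau_recur}.

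For the closed form \eqref{gg_k_tau_conv}, I proceed by induction on $k$. The base $k=1$ is immediate from \eqref{gg_k_tau_initial}: the numerator is a single monomial $e^{-\langle c, s h_1\rangle\tau}$ with coefficient $1 \in \ZZ_{\geq 0}$, and $s \leq r_1 - 1 < \sigma$ gives $|\langle c, s h_1\rangle| \leq \sigma\chi$. In the inductive step I substitute the assumed representation of $\gG_{k-1}$ into \eqref{gg_k_tau_recur}: the denominator acquires the missing factor $1 - e^{-\langle c, r_k h_k\rangle\tau}$, producing the full product $\prod_{j\leq k}\bigl(1 - e^{-\langle c, r_j h_j\rangle\tau}\bigr)$, while the new numerator is $\sum_{i=0}^{r_k-1} e^{-\langle c, i h_k\rangle\tau} \cdot N_{k-1}(g' - i g_k;\tau)$ with each $N_{k-1}$ a Laurent polynomial in $e^{-\tau}$ having non-negative integer coefficients supported in $[-(k-1)\sigma\chi, (k-1)\sigma\chi]$. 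Since $|\langle c, i h_k\rangle| \leq (r_k-1)\chi \leq \sigma\chi$, each of the $r_k$ shifted summands is supported in $[-k\sigma\chi, k\sigma\chi]$, and collecting like powers of $e^{-\tau}$ keeps the coefficients in $\ZZ_{\geq 0}$, giving the required form. The main technical obstacle lies in exactly this bookkeeping: one must verify that support does not grow super-linearly in $k$ when regrouping the shifted Laurent polynomials across the $r_k$ residue branches and iterating over $k$ levels, and that the non-negativity of $\epsilon_i(k,g')$ is preserved, which ultimately reflects the combinatorial meaning of these coefficients as counts of subsets of solutions. Absolute convergence for $\tau > 0$ whenever $\langle c, h_i\rangle > 0$ then follows at once: every geometric series invoked in the derivation has ratio $e^{-\langle c, r_i h_i\rangle\tau} \in (0,1)$, and iterating the residue-class partition expresses $\gG_k(g';\tau)$ as a finite combination of products of such convergent progressions.
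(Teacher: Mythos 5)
Your proof is correct and follows essentially the same approach as the paper: geometric series for the base case, residue-class partition of the last coordinate modulo $r_k$ for the recursion, induction with linear support growth for the closed form, and convergence from the geometric-ratio condition $\langle c, h_i\rangle > 0$. The only difference is organizational — the paper first works with the multivariate formal series $\fG_k(g';\xB)$ (whose numerator coefficients are in $\{0,1\}$) and then specializes monomially to obtain $\gG_k$, whereas you carry out the same manipulations directly on the exponential series $\gG_k$; this is purely presentational.
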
 

Let us estimate the number of operations to compute the representation \eqref{gg_k_tau_conv} of $\gG_k(g';\tau)$, for all $k \in \intint n$ and $g' \in \GC$, using the recurrence \eqref{gg_k_tau_recur}. Consider a quotient group $\QS_k = \GC/\langle g_k \rangle$ and fix $\QC \in \QS_k$. Clearly, $\QC = q + \langle g_k \rangle$, where $q \in \GC$ is a member of $\QC$, and $r_k = \abs{\QC}$. For $j \in \intint[0]{r_k-1}$, define
\begin{equation}\label{hh_def}
\hG_k(j;\tau) = \bigl(1 - e^{-\langle c, r_1 h_1 \rangle \tau}\bigr) \cdot \ldots \cdot \bigl(1 - e^{- \langle c, r_k h_k \rangle \tau}\bigr) \cdot \gG_k(q + j \cdot g_k;\tau).    
\end{equation}
For the sake of simplicity, denote $x \ominus_{k} y = (x - y) \bmod r_k$, then the formulas \eqref{gg_k_tau_initial}, \eqref{gg_k_tau_recur} and \eqref{gg_k_tau_conv} become
\begin{gather}
     \hG_1(j; \tau) = e^{- \langle c, s  h_1 \rangle \tau},\quad\text{where $s = \min\{y_1 \in \ZZ_{\geq 0} \colon y_1 g_1 = q + j \cdot g_1 \}$},\label{hh_k_tau_initial}\\
    \hG_k(j;\tau) = \sum\limits_{i = 0}^{r_k-1} e^{- \langle c, i h_k \rangle \tau} \cdot \hG_{k-1}\bigl(j \ominus_{k} i; \tau\bigr),\label{hh_k_tau_recur}\\
    \hG_k(j;\tau) = \sum\limits_{i = - k \cdot \sigma \cdot \psi}^{k \cdot \sigma \cdot \psi} \epsilon_i(k,q + j \cdot g_k) \cdot e^{- i \tau}.\label{hh_k_tau_conv}
\end{gather}
%If the equation $\{y_1 \in \ZZ_{\geq 0} \colon y_1 g_1 = q + j\cdot g_1 \}$ has no solutions, it holds $\hG_1(j; \tau) = 0$. 
First, assume that $k=1$. Then, clearly, all the values $$\hG_1(0;\tau), \hG_1(1;\tau), \dots, \hG_1(r_1-1;\tau)$$ can be computed with $O(r_1)$ operations.
Assume now that $k \geq 2$ and that $(k-1)$-th level has already been computed. By the $k$-th level, we mean all the functions $\hG_k(j;\tau)$, for $j \in \intint[0]{r_k-1}$.
Due to the formula \eqref{hh_k_tau_conv}, $\hG_k(j;\tau)$ contains $O(k \cdot \sigma \cdot \psi)$ monomials. Thus, the function $\hG_k(0;\tau)$ can be computed directly using the formula \eqref{hh_k_tau_recur} with $O(r_k \cdot k \cdot \sigma \cdot \psi)$ operations. For $j \geq 1$, we have
\begin{multline}\label{hh_k_tau_smart_recur}
    \hG_k(j;\tau) = \sum\limits_{i = 0}^{r_k-1} e^{- \langle c, i h_k \rangle \tau} \cdot \hG_{k-1}(j \ominus_k i; \tau) =\\
    = \sum\limits_{i = -1}^{r_k-2} e^{- \langle c, (i+1) h_k \rangle \tau} \cdot \hG_{k-1}\bigl( j \ominus_k (i+1) ; \tau\bigr) =\\
    = e^{- \langle c, h_k \rangle \tau} \cdot \hG_k(j-1;\tau) +\hG_{k-1}(j;\tau) - e^{- \langle c, r_k h_k \rangle \tau} \cdot \hG_{k-1}\bigl(j \ominus_k r_k;\tau\bigr) = \\
    = e^{- \langle c, h_k \rangle \tau} \cdot \hG_k(j-1;\tau) + (1 - e^{- \langle c, r_k h_k \rangle \tau}) \cdot \hG_{k-1}\bigl(j;\tau\bigr).
\end{multline}
Consequently, due to the assumption that the $(k-1)$-th level has already been computed and that $h_k(0;\tau)$ is known, all the functions $h_k(1;\tau), \dots, h_k(r_k-1;\tau)$ can be computed with $O(r_k \cdot k \cdot \sigma \cdot \psi)$ operations, using this formula \eqref{hh_k_tau_smart_recur}. 

In turn, when the functions $\hG_k(j;\tau)$, for $j \in \intint[0]{r_k-1}$, are already constructed, we can return to the functions $\gG_k(g';\tau)$, for $g' = q + j \cdot g_k$, using the formula \eqref{hh_def}. It will consume additional $O(r_k)$ group operations to compute $g' = q + j \cdot g_k$. By the definition of $\GC$, the arithmetic cost of a single group operation in $\GC$ can be estimated by the number of elements on the diagonal of the matrix $S$ that are not equal to $1$, which is bounded by $\min\{n, \log_2(\Delta)\}$. Therefore, the arithmetic cost of the last step is $O(r_k \cdot n)$, which is negligible in comparison with the computational cost of $\hG_k(j;\tau)$. 

Summarizing, we need $O(r_k \cdot k \cdot \sigma \cdot \psi)$ arithmetic operations to compute $\gG_k(g';\tau)$, for all $g' = q + j \cdot g_k$ and $j \in \intint[0]{r_k}$. Consequently, since $\abs{\QS} = \Delta/r_k$, the arithmetic cost to compute $k$-th level of $\gG_k(\cdot)$ is $$
O(\Delta \cdot k \cdot \sigma \cdot \psi),
$$ and the total cost to compute all the levels is $$
O(\Delta \cdot n^2 \cdot \sigma \cdot \psi).
$$
Finally, using the formula \eqref{group_connection_eq}, we construct the desired function
\begin{multline*}
    \fG(\PC,c;\tau) = e^{\langle c, A^{-1} b \rangle \tau} \cdot \gG_n\bigl(g_0; \frac{\tau}{\Delta}\bigr) = \\
    = \frac{\sum\limits_{i = - k \cdot \sigma \cdot \psi}^{k \cdot \sigma \cdot \psi} \epsilon_i \cdot e^{\frac{1}{\Delta}\bigl(\langle c, A^* b \rangle - i\bigr) \tau}}{\bigl(1 - e^{-\langle c, \frac{r_1}{\Delta} h_1 \rangle \tau}\bigr)\bigl(1 - e^{-\langle c, \frac{r_2}{\Delta} h_2 \rangle \tau}\bigr) \dots \bigl(1 - e^{- \langle c, \frac{r_n}{\Delta} h_n \rangle \tau}\bigr)},
\end{multline*}
where $\epsilon_i:= \epsilon_i(n,g_0)$. Since, for all $\tau >0$, the series $\gG_n(g_0;\tau)$ converges absolutely, the same is true for $\fG(\PC,c;\tau)$. The number of operations to make the last transformation is proportional to the nominator length of $\gG_n(g_0;\tau)$, which is $O(n \cdot \sigma \cdot \psi)$.
\end{proof}

\section{Proof of Lemma \ref{g_formulae_lm}}\label{g_formulae_lm_proof}

\subsection{A Recurrent Formula for the Generating Function of a Group Polyhedron}

Let $\GC$ be an arbitrary finite Abelian group and $g_1,\dots,g_n \in \GC$. Let additionally $r_i = \abs{\langle g_i \rangle}$ be the order of $g_i$, for $i \in \intint n$, and $r_{\max} = \max_{i} \{r_i\}$. For $g' \in \GC$ and $k \in \intint n$, let $\MC(k,g')$ be the solutions set of the following system:
\begin{equation}\label{f_k_system}
    \begin{cases}
    \sum\limits_{i = 1}^k x_i g_i = g'\\
    x \in \ZZ_{\geq 0}^k.
    \end{cases}
\end{equation}
Consider the formal series 
$
\fG_k(g';\xB) = \sum\limits_{z \in \MC(k,g') \cap \ZZ^k} \xB^z.
$
For $k = 1$, we have
\begin{equation}\label{f_k_1form}
\fG_1(g';\xB) = \frac{x_1^s}{1 - x_1^{r_1}},\quad\text{where $s = \min\{x_1 \in \ZZ_{\geq 0} \colon x_1 g_1 = g'\}$.}    
\end{equation}
If such $s$ does not exist, we set $\fG_1(g';\xB) := 0$. Clearly, the series $\fG_1(g';\xB)$ absolutely converges for any $x_1 \in \CC$ with $\abs{x_1^{r_1}} < 1$. For any fixed $x_k \in \ZZ_{\geq 0}$, the system \eqref{f_k_system} can be rewritten as
\begin{equation*}
    \begin{cases}
    \sum\limits_{i = 1}^{k-1} x_i g_i = g' - x_k g_k\\
    x \in \ZZ_{\geq 0}^{k-1}.
    \end{cases}
\end{equation*}
Thus, for any $k \geq 1$,
\begin{multline}\label{f_k_recurrence}
    \fG_k(g';\xB) = \\
    = \frac{ \fG_{k-1}(g';\xB) + x_{k} \cdot \fG_{k-1}(g' - g_k;\xB) + \dots + x_{k}^{r_k - 1} \cdot \fG_{k-1}(g' - g_k \cdot (r_k - 1);\xB)} {1 - x_k^{r_k}} = \\
    = \frac{1}{1 - x_{k}^{r_k}} \cdot \sum_{i = 0}^{r_k - 1} x_k^i \cdot \fG_{k-1}(g' - i \cdot g_k;\xB).
\end{multline}

\begin{equation}\label{f_k_conv}
\text{Consequently,} \quad \fG_k(g';\xB) = \frac{\sum\limits_{i_1 = 0}^{r_1-1}\dots\sum\limits_{i_k = 0}^{r_k-1} \epsilon_{i_1,\dots,i_k} x_1^{i_1} \dots x_k^{i_k}}{(1 - x_1^{r_1})(1 - x_2^{r_2})\dots(1 - x_k^{r_k})},    
\end{equation}
where the numerator is a polynomial with coefficients $\epsilon_{i_1,\dots,i_k} \in \{0,1\}$ of a degree at most $(r_1 - 1) \dots (r_k - 1)$. Since a sum of absolutely convergent series is absolutely convergent, it follows from the induction principle that the series $\fG_k(g';\xB)$ absolutely converges when $\abs{x_i^{r_i}} < 1$, for each $i \in \intint k$.

\subsection{The group $\GC$, induced by the SNF, of $A$}
Recall that $A \in \ZZ^{n \times n}$, $0 < \Delta = \abs{\det(A)}$, and $h_1, \dots, h_n$ are the columns of $A^* := \Delta \cdot A^{-1}$. The vector $c \in \ZZ^n$ is chosen, such that $\langle c, h_i \rangle > 0$, for each $i \in \intint n$, and $\psi = \max_i \abs{ \langle c, h_i \rangle }$.  Additionally, $S = P A Q$ is the SNF of $A$, where $P,Q \in \ZZ^{n \times n}$ are unimodular, and $\sigma = S_{n n}$.

Consider the sets $\MC(k,g')$, induced by the group system \eqref{f_k_system} with $\GC = \ZZ^{n}/S \cdot \ZZ^n$ and $g_i = P_{* i} \bmod S \cdot \ZZ^n$. Note that $r_i \leq \sigma$, for each $i \in \intint n$. Additionally, consider a new formal series 
$$
\hat \fG_k(g';\xB) = \sum\limits_{z \in \MC(k,g') \cap \ZZ^k} \xB^{-\sum\limits_{i=1}^k h_i z_i},
$$ which can be derived from the series $\fG_k(g';\xB)$ by the monomial substitution $x_i \to \xB^{-h_i}$.
For $\hat \fG_k(g';\xB)$, the formulae \eqref{f_k_1form}, \eqref{f_k_recurrence} and \eqref{f_k_conv} become:
\begin{gather}
    \hat \fG_1(g'; \xB) = \frac{\xB^{- s h_1}}{1 - \xB^{-r_1 h_1}},\quad\text{where $s = \min\{y_1 \in \ZZ_{\geq 0} \colon y_1 g_1 = g' \}$,}\label{ff_k_1form}\\
    \hat \fG_k(g';\xB) = \frac{1}{1 - \xB^{-r_k h_k}} \cdot \sum\limits\limits_{i = 0}^{r_k-1} \xB^{- i h_k} \cdot \hat \fG_{k-1}(g' - i \cdot g_k; \xB) \text{ and}\label{ff_k_recur}\\    
\hat \fG_k(g';\xB) = \frac{\sum\limits_{i_1 = 0}^{r_1-1}\dots\sum\limits_{i_k = 0}^{r_k-1} \epsilon_{i_1,\dots,i_k} \xB^{-(i_1 h_1 + \dots + i_k h_k)}}{(1 - \xB^{-r_1 h_1})(1 - \xB^{-r_2 h_2}) \dots (1 - \xB^{-r_k h_k})}\label{ff_k_conv}.
\end{gather}
Here the absolute convergence takes place for the values of $\xB$ with $\abs{\xB^{- r_i h_i}} < 1$, for each $i \in \intint k$. Let us consider now the formal series 
$$
\gG_k(g'; \tau) = \sum\limits_{y \in \MC_k(g')} e^{- \tau \cdot \langle c, \sum_{i=1}^k h_i y_i \rangle},
$$ which can be derived from $\hat \fG_k(g';\xB)$ substituting $x_i \to e^{\tau \cdot c_i}$. For $\gG_k(g'; \tau)$, the formulae \eqref{ff_k_1form}, \eqref{ff_k_recur}, and \eqref{ff_k_conv} become:
\begin{gather}
    \gG_1(g'; \tau) = \frac{e^{- \langle c, s  h_1 \rangle \cdot \tau}}{1 - e^{- \langle c, r_1  h_1 \rangle \cdot \tau}},\label{g_k_1form}\\
    \gG_k(g';\tau) = \frac{1}{1 - e^{-\langle c, r_k  h_k \rangle \cdot \tau}} \cdot \sum\limits_{i = 0}^{r_k-1} e^{- \langle c, i h_k \rangle \cdot \tau} \cdot \gG_{k-1}(g' - i \cdot g_k; \tau),\label{g_k_recur}\\
    \gG_k(g';\tau) = \frac{\sum\limits_{i_1 = 0}^{r_1-1}\dots\sum\limits_{i_k = 0}^{r_k-1} \epsilon_{i_1,\dots,i_k} e^{-\langle c, i_1 h_1 + \dots + i_k h_k \rangle \cdot \tau} }{\bigl(1 - e^{-\langle c, r_1 h_1 \rangle \cdot \tau}\bigr)\bigl(1 - e^{-\langle c, r_2 h_2 \rangle \cdot \tau}\bigr) \dots \bigl(1 - e^{- \langle c, r_k h_k \rangle \cdot \tau}\bigr)}.\label{g_k_conv}
\end{gather}
Since the series $\hat \fG_k(g';\xB)$ absolutely converges, when $\abs{\xB^{- r_i h_i}} < 1$, and since $\langle c,h_i \rangle \in \ZZ_{\not=0}$, for each $i \in \intint k$, the new series converges for any $\tau > 0$. The number of terms $e^{-\langle c, \cdot \rangle \cdot \tau}$ is bounded by $2 \cdot k \cdot \sigma\cdot \psi + 1$. Combining similar terms, the numerator's length becomes $O(k \cdot \sigma\cdot \psi)$.
In other words, there exist coefficients $\epsilon_i \in \ZZ_{\geq 0}$, such that 
\begin{equation}\label{g_k_coef_form}
\gG_k(g';\tau) = \frac{\sum\limits_{i = - k \cdot \sigma \cdot \psi}^{k \cdot \sigma \cdot \psi} \epsilon_i \cdot e^{- i \cdot \tau}}{\bigl(1 - e^{-\langle c, r_1 \cdot h_1 \rangle \tau}\bigr)\bigl(1 - e^{-\langle c, r_2 h_2 \rangle \cdot \tau}\bigr) \dots \bigl(1 - e^{- \langle c, r_k h_k \rangle \cdot \tau}\bigr)}.    
\end{equation}
The formulae \eqref{g_k_1form}, \eqref{g_k_recur}, and \eqref{g_k_coef_form} coincide with the desired formulae \eqref{gg_k_tau_initial}, \eqref{gg_k_tau_recur}, and \eqref{gg_k_tau_conv}. Therefore, the proof of Lemma \ref{g_formulae_lm} is finished.

\end{appendices}

\addcontentsline{toc}{section}{References}
\bibliography{grib_biblio}

%% BioMed_Central_Bib_Style_v1.01

\begin{thebibliography}{70}
% BibTex style file: bmc-mathphys.bst (version 2.1), 2014-07-24
\ifx \bisbn   \undefined \def \bisbn  #1{ISBN #1}\fi
\ifx \binits  \undefined \def \binits#1{#1}\fi
\ifx \bauthor  \undefined \def \bauthor#1{#1}\fi
\ifx \batitle  \undefined \def \batitle#1{#1}\fi
\ifx \bjtitle  \undefined \def \bjtitle#1{#1}\fi
\ifx \bvolume  \undefined \def \bvolume#1{\textbf{#1}}\fi
\ifx \byear  \undefined \def \byear#1{#1}\fi
\ifx \bissue  \undefined \def \bissue#1{#1}\fi
\ifx \bfpage  \undefined \def \bfpage#1{#1}\fi
\ifx \blpage  \undefined \def \blpage #1{#1}\fi
\ifx \burl  \undefined \def \burl#1{\textsf{#1}}\fi
\ifx \doiurl  \undefined \def \doiurl#1{\url{https://doi.org/#1}}\fi
\ifx \betal  \undefined \def \betal{\textit{et al.}}\fi
\ifx \binstitute  \undefined \def \binstitute#1{#1}\fi
\ifx \binstitutionaled  \undefined \def \binstitutionaled#1{#1}\fi
\ifx \bctitle  \undefined \def \bctitle#1{#1}\fi
\ifx \beditor  \undefined \def \beditor#1{#1}\fi
\ifx \bpublisher  \undefined \def \bpublisher#1{#1}\fi
\ifx \bbtitle  \undefined \def \bbtitle#1{#1}\fi
\ifx \bedition  \undefined \def \bedition#1{#1}\fi
\ifx \bseriesno  \undefined \def \bseriesno#1{#1}\fi
\ifx \blocation  \undefined \def \blocation#1{#1}\fi
\ifx \bsertitle  \undefined \def \bsertitle#1{#1}\fi
\ifx \bsnm \undefined \def \bsnm#1{#1}\fi
\ifx \bsuffix \undefined \def \bsuffix#1{#1}\fi
\ifx \bparticle \undefined \def \bparticle#1{#1}\fi
\ifx \barticle \undefined \def \barticle#1{#1}\fi
\bibcommenthead
\ifx \bconfdate \undefined \def \bconfdate #1{#1}\fi
\ifx \botherref \undefined \def \botherref #1{#1}\fi
\ifx \url \undefined \def \url#1{\textsf{#1}}\fi
\ifx \bchapter \undefined \def \bchapter#1{#1}\fi
\ifx \bbook \undefined \def \bbook#1{#1}\fi
\ifx \bcomment \undefined \def \bcomment#1{#1}\fi
\ifx \oauthor \undefined \def \oauthor#1{#1}\fi
\ifx \citeauthoryear \undefined \def \citeauthoryear#1{#1}\fi
\ifx \endbibitem  \undefined \def \endbibitem {}\fi
\ifx \bconflocation  \undefined \def \bconflocation#1{#1}\fi
\ifx \arxivurl  \undefined \def \arxivurl#1{\textsf{#1}}\fi
\csname PreBibitemsHook\endcsname

%%% 1
\bibitem[\protect\citeauthoryear{Loechner and
  Wilde}{1997}]{VerticesOfParametricPolyhedra}
\begin{barticle}
\bauthor{\bsnm{Loechner}, \binits{V.}},
\bauthor{\bsnm{Wilde}, \binits{D.K.}}:
\batitle{Parameterized polyhedra and their vertices}.
\bjtitle{International Journal of Parallel Programming}
\bvolume{25}(\bissue{6}),
\bfpage{525}--\blpage{549}
(\byear{1997})
\end{barticle}
\endbibitem

%%% 2
\bibitem[\protect\citeauthoryear{Clauss}{1996}]{Clauss_EhrhartApp}
\begin{bchapter}
\bauthor{\bsnm{Clauss}, \binits{P.}}:
\bctitle{Counting solutions to linear and nonlinear constraints through ehrhart
  polynomials: Applications to analyze and transform scientific programs}.
In: \bbtitle{Proceedings of the 10th International Conference on
  Supercomputing},
pp. \bfpage{278}--\blpage{285}
(\byear{1996})
\end{bchapter}
\endbibitem

%%% 3
\bibitem[\protect\citeauthoryear{Clauss and Loechner}{1996}]{Parametric_Clauss}
\begin{bchapter}
\bauthor{\bsnm{Clauss}, \binits{P.}},
\bauthor{\bsnm{Loechner}, \binits{V.}}:
\bctitle{Parametric analysis of polyhedral iteration spaces}.
In: \bbtitle{Proceedings of International Conference on Application Specific
  Systems, Architectures and Processors: ASAP '96},
pp. \bfpage{415}--\blpage{424}
(\byear{1996}).
\doiurl{10.1109/ASAP.1996.542833}
\end{bchapter}
\endbibitem

%%% 4
\bibitem[\protect\citeauthoryear{Bao et~al.}{2017}]{AnalyticalСache2017}
\begin{barticle}
\bauthor{\bsnm{Bao}, \binits{W.}},
\bauthor{\bsnm{Krishnamoorthy}, \binits{S.}},
\bauthor{\bsnm{Pouchet}, \binits{L.-N.}},
\bauthor{\bsnm{Sadayappan}, \binits{P.}}:
\batitle{Analytical modeling of cache behavior for affine programs}.
\bjtitle{Proceedings of the ACM on Programming Languages}
\bvolume{2}(\bissue{POPL}),
\bfpage{1}--\blpage{26}
(\byear{2017})
\end{barticle}
\endbibitem

%%% 5
\bibitem[\protect\citeauthoryear{Gysi et~al.}{2019}]{FastAssociativeCache2019}
\begin{bchapter}
\bauthor{\bsnm{Gysi}, \binits{T.}},
\bauthor{\bsnm{Grosser}, \binits{T.}},
\bauthor{\bsnm{Brandner}, \binits{L.}},
\bauthor{\bsnm{Hoefler}, \binits{T.}}:
\bctitle{A fast analytical model of fully associative caches}.
In: \bbtitle{Proceedings of the 40th ACM SIGPLAN Conference on Programming
  Language Design and Implementation},
pp. \bfpage{816}--\blpage{829}
(\byear{2019})
\end{bchapter}
\endbibitem

%%% 6
\bibitem[\protect\citeauthoryear{Shah et~al.}{2022}]{BullsEye2022}
\begin{barticle}
\bauthor{\bsnm{Shah}, \binits{N.R.}},
\bauthor{\bsnm{Misra}, \binits{A.}},
\bauthor{\bsnm{Min{\'e}}, \binits{A.}},
\bauthor{\bsnm{Venkat}, \binits{R.}},
\bauthor{\bsnm{Upadrasta}, \binits{R.}}:
\batitle{Bullseye: Scalable and accurate approximation framework for cache miss
  calculation}.
\bjtitle{ACM Transactions on Architecture and Code Optimization}
\bvolume{20}(\bissue{1}),
\bfpage{1}--\blpage{28}
(\byear{2022})
\end{barticle}
\endbibitem

%%% 7
\bibitem[\protect\citeauthoryear{Gribanov
  et~al.}{2022}]{OnCanonicalProblems_Grib}
\begin{barticle}
\bauthor{\bsnm{Gribanov}, \binits{V.} \bsuffix{D.}},
\bauthor{\bsnm{Shumilov}, \binits{A.} \bsuffix{I.}},
\bauthor{\bsnm{Malyshev}, \binits{S.} \bsuffix{D.}},
\bauthor{\bsnm{Pardalos}, \binits{M.} \bsuffix{P.}}:
\batitle{On $\delta$-modular integer linear problems in the canonical form and
  equivalent problems}.
\bjtitle{J Glob Optim}
(\byear{2022})
\doiurl{10.1007/s10898-022-01165-9}
\end{barticle}
\endbibitem

%%% 8
\bibitem[\protect\citeauthoryear{Downey and Fellows}{2012}]{FPT_Downey}
\begin{bbook}
\bauthor{\bsnm{Downey}, \binits{G.} \bsuffix{R.}},
\bauthor{\bsnm{Fellows}, \binits{R.} \bsuffix{M.}}:
\bbtitle{Parameterized Complexity}.
\bpublisher{Springer},
\blocation{New York}
(\byear{2012})
\end{bbook}
\endbibitem

%%% 9
\bibitem[\protect\citeauthoryear{Cygan et~al.}{2015}]{FPT_Fomin}
\begin{bbook}
\bauthor{\bsnm{Cygan}, \binits{M.}},
\bauthor{\bsnm{Fomin}, \binits{V.} \bsuffix{F.}},
\bauthor{\bsnm{Kowalik}, \binits{L.}},
\bauthor{\bsnm{Lokshtanov}, \binits{D.}},
\bauthor{\bsnm{Marx}, \binits{D.}},
\bauthor{\bsnm{Pilipczuk}, \binits{M.}},
\bauthor{\bsnm{Pilipczuk}, \binits{M.}},
\bauthor{\bsnm{Saurabh}, \binits{S.}}:
\bbtitle{Parameterized Algorithms}.
\bpublisher{Springer},
\blocation{Switzerland}
(\byear{2015})
\end{bbook}
\endbibitem

%%% 10
\bibitem[\protect\citeauthoryear{Barvinok}{1993}]{Barv_Original}
\begin{bchapter}
\bauthor{\bsnm{Barvinok}, \binits{A.I.}}:
\bctitle{A polynomial time algorithm for counting integral points in polyhedra
  when the dimension is fixed}.
In: \bbtitle{Proceedings of 1993 IEEE 34th Annual Foundations of Computer
  Science},
pp. \bfpage{566}--\blpage{572}
(\byear{1993}).
\doiurl{10.1109/SFCS.1993.366830}
\end{bchapter}
\endbibitem

%%% 11
\bibitem[\protect\citeauthoryear{Dyer and Kannan}{1997}]{OnBarvinoksAlg_Dyer}
\begin{barticle}
\bauthor{\bsnm{Dyer}, \binits{M.}},
\bauthor{\bsnm{Kannan}, \binits{R.}}:
\batitle{On barvinok's algorithm for counting lattice points in fixed
  dimension}.
\bjtitle{Mathematics of Operations Research}
\bvolume{22}(\bissue{3}),
\bfpage{545}--\blpage{549}
(\byear{1997})
\doiurl{10.1287/moor.22.3.545}
\end{barticle}
\endbibitem

%%% 12
\bibitem[\protect\citeauthoryear{Barvinok and Pommersheim}{1999}]{BarvPom}
\begin{botherref}
\oauthor{\bsnm{Barvinok}, \binits{A.}},
\oauthor{\bsnm{Pommersheim}, \binits{J.}}:
An algorithmic theory of lattice points in polyhedra.
New Perspect. Algebraic Combin.
\textbf{38}
(1999)
\end{botherref}
\endbibitem

%%% 13
\bibitem[\protect\citeauthoryear{Barvinok}{2008}]{BarvBook}
\begin{bbook}
\bauthor{\bsnm{Barvinok}, \binits{A.}}:
\bbtitle{Integer Points in Polyhedra}.
\bpublisher{European Mathematical Society},
\blocation{ETH-Zentrum, Z{\"u}rich, Switzerland}
(\byear{2008})
\end{bbook}
\endbibitem

%%% 14
\bibitem[\protect\citeauthoryear{Barvinok and Woods}{2003}]{BarvWoods}
\begin{barticle}
\bauthor{\bsnm{Barvinok}, \binits{A.}},
\bauthor{\bsnm{Woods}, \binits{K.}}:
\batitle{Short rational generating functions for lattice point problems}.
\bjtitle{Journal of the American Mathematical Society}
\bvolume{16}(\bissue{4}),
\bfpage{957}--\blpage{979}
(\byear{2003})
\end{barticle}
\endbibitem

%%% 15
\bibitem[\protect\citeauthoryear{Beck and Robins}{2015}]{continuous_discretely}
\begin{bbook}
\bauthor{\bsnm{Beck}, \binits{M.}},
\bauthor{\bsnm{Robins}, \binits{S.}}:
\bbtitle{Computing the Continuous Discretely}.
\bpublisher{Springer},
\blocation{New York}
(\byear{2015})
\end{bbook}
\endbibitem

%%% 16
\bibitem[\protect\citeauthoryear{De~Loera et~al.}{2013}]{AlgebracILP}
\begin{bbook}
\bauthor{\bsnm{De~Loera}, \binits{J.}},
\bauthor{\bsnm{Hemmecke}, \binits{R.}},
\bauthor{\bsnm{K{\"o}ppe}, \binits{M.}}:
\bbtitle{Algebraic And Geometric Ideas in the Theory of Discrete Optimization}.
\bpublisher{Society for Industrial and Applied Mathematics},
\blocation{Philadelphia, USA}
(\byear{2013})
\end{bbook}
\endbibitem

%%% 17
\bibitem[\protect\citeauthoryear{Lasserre}{2009}]{counting_Lasserre_book}
\begin{bbook}
\bauthor{\bsnm{Lasserre}, \binits{J.-B.}}:
\bbtitle{Linear and Integer Programming Vs Linear Integration and Counting: a
  Duality Viewpoint}.
\bpublisher{Springer},
\blocation{New York}
(\byear{2009})
\end{bbook}
\endbibitem

%%% 18
\bibitem[\protect\citeauthoryear{K\"oppe and Verdoolaege}{2008}]{HalfOpen}
\begin{botherref}
\oauthor{\bsnm{K\"oppe}, \binits{M.}},
\oauthor{\bsnm{Verdoolaege}, \binits{S.}}:
Computing parametric rational generating functions with a primal barvinok
  algorithm.
The electronic journal of combinatorics
\textbf{15}
(2008)
\doiurl{10.37236/740}
\end{botherref}
\endbibitem

%%% 19
\bibitem[\protect\citeauthoryear{Micciancio and Voulgaris}{2013}]{SVP_exp}
\begin{barticle}
\bauthor{\bsnm{Micciancio}, \binits{D.}},
\bauthor{\bsnm{Voulgaris}, \binits{P.}}:
\batitle{A deterministic single exponential time algorithm for most lattice
  problems based on voronoi cell computations}.
\bjtitle{SIAM Journal on Computing}
\bvolume{42}(\bissue{3}),
\bfpage{1364}--\blpage{1391}
(\byear{2013})
\end{barticle}
\endbibitem

%%% 20
\bibitem[\protect\citeauthoryear{Gribanov and
  Malyshev}{2022}]{Counting_FPT_Delta}
\begin{barticle}
\bauthor{\bsnm{Gribanov}, \binits{V.} \bsuffix{D.}},
\bauthor{\bsnm{Malyshev}, \binits{S.} \bsuffix{D.}}:
\batitle{A faster algorithm for counting the integer points number in
  $\delta$-modular polyhedra}.
\bjtitle{Siberian Electronic Mathematical Reports}
(\byear{2022})
\doiurl{10.33048/semi.2022.19.051}
\end{barticle}
\endbibitem

%%% 21
\bibitem[\protect\citeauthoryear{Lasserre and
  Zeron}{2007}]{simple_formula_counting}
\begin{bchapter}
\bauthor{\bsnm{Lasserre}, \binits{B.} \bsuffix{Jean}},
\bauthor{\bsnm{Zeron}, \binits{S.} \bsuffix{Eduardo}}:
\bctitle{Simple explicit formula for counting lattice points of polyhedra}.
In: \bbtitle{International Conference on Integer Programming and Combinatorial
  Optimization},
pp. \bfpage{367}--\blpage{381}
(\byear{2007}).
\bcomment{Springer}
\end{bchapter}
\endbibitem

%%% 22
\bibitem[\protect\citeauthoryear{Gribanov et~al.}{2024}]{SparseILP_Gribanov}
\begin{botherref}
\oauthor{\bsnm{Gribanov}, \binits{D.}},
\oauthor{\bsnm{Shumilov}, \binits{I.}},
\oauthor{\bsnm{Malyshev}, \binits{D.}},
\oauthor{\bsnm{Zolotykh}, \binits{N.}}:
Faster algorithms for sparse ilp and hypergraph multi-packing/multi-cover
  problems.
Journal of Global Optimization,
1--35
(2024)
\end{botherref}
\endbibitem

%%% 23
\bibitem[\protect\citeauthoryear{Gribanov and Zolotykh}{2022}]{CountingFixedM}
\begin{barticle}
\bauthor{\bsnm{Gribanov}, \binits{D.V.}},
\bauthor{\bsnm{Zolotykh}, \binits{N.Y.}}:
\batitle{On lattice point counting in $\delta$-modular polyhedra}.
\bjtitle{Optimization Letters}
\bvolume{16}(\bissue{7}),
\bfpage{1991}--\blpage{2018}
(\byear{2022})
\doiurl{10.1007/s11590-021-01744-x}
\end{barticle}
\endbibitem

%%% 24
\bibitem[\protect\citeauthoryear{Gribanov
  et~al.}{2023}]{Counting_FPT_Delta_corrected}
\begin{botherref}
\oauthor{\bsnm{Gribanov}, \binits{D.}},
\oauthor{\bsnm{Shumilov}, \binits{I.}},
\oauthor{\bsnm{Malyshev}, \binits{D.}}:
A faster algorithm for counting the integer points number in $\Delta$-modular
  polyhedra (corrected version)
(2023)
\end{botherref}
\endbibitem

%%% 25
\bibitem[\protect\citeauthoryear{Lasserre and Zeron}{2002}]{knapsack_lasserre}
\begin{barticle}
\bauthor{\bsnm{Lasserre}, \binits{B.} \bsuffix{Jean}},
\bauthor{\bsnm{Zeron}, \binits{S.} \bsuffix{Eduardo}}:
\batitle{Solving the knapsack problem via z-transform}.
\bjtitle{Operations Research Letters}
\bvolume{30}(\bissue{6}),
\bfpage{394}--\blpage{400}
(\byear{2002})
\end{barticle}
\endbibitem

%%% 26
\bibitem[\protect\citeauthoryear{Lasserre and
  Zeron}{2003}]{fixedM_counting_lasserre}
\begin{barticle}
\bauthor{\bsnm{Lasserre}, \binits{J.B.}},
\bauthor{\bsnm{Zeron}, \binits{E.S.}}:
\batitle{On counting integral points in a convex rational polytope}.
\bjtitle{Mathematics of Operations Research}
\bvolume{28}(\bissue{4}),
\bfpage{853}--\blpage{870}
(\byear{2003})
\end{barticle}
\endbibitem

%%% 27
\bibitem[\protect\citeauthoryear{Daues and Friedrich}{2022}]{knapsack_path_int}
\begin{botherref}
\oauthor{\bsnm{Daues}, \binits{E.}},
\oauthor{\bsnm{Friedrich}, \binits{U.}}:
Computing optimized path integrals for knapsack feasibility.
INFORMS Journal on Computing
(2022)
\end{botherref}
\endbibitem

%%% 28
\bibitem[\protect\citeauthoryear{Friedrich}{2020}]{IP_complex_int_SP}
\begin{botherref}
\oauthor{\bsnm{Friedrich}, \binits{U.}}:
Solving ip via complex integration on shortest paths.
Preprint available at http://www. optimization-online.
  org/DB\_HTML/2020/06/7848. html
(2020)
\end{botherref}
\endbibitem

%%% 29
\bibitem[\protect\citeauthoryear{Hirai
  et~al.}{2020}]{CountingViaContourIntegration}
\begin{barticle}
\bauthor{\bsnm{Hirai}, \binits{H.}},
\bauthor{\bsnm{Oshiro}, \binits{R.}},
\bauthor{\bsnm{Tanaka}, \binits{K.}}:
\batitle{Counting integral points in polytopes via numerical analysis of
  contour integration}.
\bjtitle{Mathematics of Operations Research}
\bvolume{45}(\bissue{2}),
\bfpage{455}--\blpage{464}
(\byear{2020})
\doiurl{10.1287/moor.2019.0997}
\end{barticle}
\endbibitem

%%% 30
\bibitem[\protect\citeauthoryear{Ehrhart}{1967}]{Ehrhart1}
\begin{barticle}
\bauthor{\bsnm{Ehrhart}, \binits{E.}}:
\batitle{Sur un probl\`eme de g\'eom\'etrie diophantienne lin\'eaire. ii}.
\bjtitle{Syst\`emes diophantiens lin\'eaires, J. Reine Angew. Math.}
\bvolume{227},
\bfpage{25}--\blpage{49}
(\byear{1967})
\doiurl{10.1515/crll.1967.227.25}
\end{barticle}
\endbibitem

%%% 31
\bibitem[\protect\citeauthoryear{Ehrhart}{1977}]{Ehrhart2}
\begin{barticle}
\bauthor{\bsnm{Ehrhart}, \binits{E.}}:
\batitle{Polyn\^omes arithm\'etiques et m\'ethode des poly\`edres en
  combinatoire}.
\bjtitle{International Series of Numerical Mathematics}
\bvolume{35},
\bfpage{1}--\blpage{165}
(\byear{1977})
\end{barticle}
\endbibitem

%%% 32
\bibitem[\protect\citeauthoryear{Barvinok}{2006}]{Simplex_Barv}
\begin{barticle}
\bauthor{\bsnm{Barvinok}, \binits{A.}}:
\batitle{Computing the {E}hrhart quasi-polynomial of a rational simplex}.
\bjtitle{Mathematics of Computation}
\bvolume{75}(\bissue{255}),
\bfpage{1449}--\blpage{1466}
(\byear{2006})
\doiurl{10.1090/S0025-5718-06-01836-9}
\end{barticle}
\endbibitem

%%% 33
\bibitem[\protect\citeauthoryear{Henk and Linke}{2012}]{LinkeRational2}
\begin{botherref}
\oauthor{\bsnm{Henk}, \binits{M.}},
\oauthor{\bsnm{Linke}, \binits{E.}}:
Lattice points in vector-dilated polytopes
(2012)
\end{botherref}
\endbibitem

%%% 34
\bibitem[\protect\citeauthoryear{Linke}{2011}]{LinkeRational}
\begin{barticle}
\bauthor{\bsnm{Linke}, \binits{E.}}:
\batitle{Rational ehrhart quasi-polynomials}.
\bjtitle{Journal of Combinatorial Theory, Series A}
\bvolume{118}(\bissue{7}),
\bfpage{1966}--\blpage{1978}
(\byear{2011})
\end{barticle}
\endbibitem

%%% 35
\bibitem[\protect\citeauthoryear{Verdoolaege
  et~al.}{2007}]{CountingInParametricPolyhedra}
\begin{barticle}
\bauthor{\bsnm{Verdoolaege}, \binits{S.}},
\bauthor{\bsnm{Seghir}, \binits{R.}},
\bauthor{\bsnm{Beyls}, \binits{K.}},
\bauthor{\bsnm{Loechner}, \binits{V.}},
\bauthor{\bsnm{Bruynooghe}, \binits{M.}}:
\batitle{Counting integer points in parametric polytopes using barvinok's
  rational functions}.
\bjtitle{Algorithmica}
\bvolume{48}(\bissue{1}),
\bfpage{37}--\blpage{66}
(\byear{2007})
\doiurl{10.1007/s00453-006-1231-0}
\end{barticle}
\endbibitem

%%% 36
\bibitem[\protect\citeauthoryear{Verdoolaege and
  Woods}{2008}]{CountingFunctionEncoding}
\begin{barticle}
\bauthor{\bsnm{Verdoolaege}, \binits{S.}},
\bauthor{\bsnm{Woods}, \binits{K.}}:
\batitle{Counting with rational generating functions}.
\bjtitle{Journal of Symbolic Computation}
\bvolume{43}(\bissue{2}),
\bfpage{75}--\blpage{91}
(\byear{2008})
\doiurl{10.1016/j.jsc.2007.07.007}
\end{barticle}
\endbibitem

%%% 37
\bibitem[\protect\citeauthoryear{Stapledon}{2017}]{Stapledon_counting}
\begin{barticle}
\bauthor{\bsnm{Stapledon}, \binits{A.}}:
\batitle{Counting lattice points in free sums of polytopes}.
\bjtitle{Journal of Combinatorial Theory, Series A}
\bvolume{151},
\bfpage{51}--\blpage{60}
(\byear{2017})
\end{barticle}
\endbibitem

%%% 38
\bibitem[\protect\citeauthoryear{Baldoni et~al.}{2016}]{RealEhrh2}
\begin{barticle}
\bauthor{\bsnm{Baldoni}, \binits{V.}},
\bauthor{\bsnm{Berline}, \binits{N.}},
\bauthor{\bsnm{De~Loera}, \binits{J.A.}},
\bauthor{\bsnm{Koeppe}, \binits{M.}},
\bauthor{\bsnm{Vergne}, \binits{M.}}:
\batitle{Intermediate sums on polyhedra ii: bidegree and poisson formula}.
\bjtitle{Mathematika}
\bvolume{62}(\bissue{3}),
\bfpage{653}--\blpage{684}
(\byear{2016})
\end{barticle}
\endbibitem

%%% 39
\bibitem[\protect\citeauthoryear{Baldoni et~al.}{2019}]{ThreeEhrhart}
\begin{barticle}
\bauthor{\bsnm{Baldoni}, \binits{V.}},
\bauthor{\bsnm{Berline}, \binits{N.}},
\bauthor{\bsnm{De~Loera}, \binits{J.A.}},
\bauthor{\bsnm{K{\"o}ppe}, \binits{M.}},
\bauthor{\bsnm{Vergne}, \binits{M.}}:
\batitle{Three ehrhart quasi-polynomials}.
\bjtitle{Algebraic Combinatorics}
\bvolume{2}(\bissue{3}),
\bfpage{379}--\blpage{416}
(\byear{2019})
\end{barticle}
\endbibitem

%%% 40
\bibitem[\protect\citeauthoryear{Baldoni et~al.}{2013}]{RealEhrh}
\begin{barticle}
\bauthor{\bsnm{Baldoni}, \binits{V.}},
\bauthor{\bsnm{Berline}, \binits{N.}},
\bauthor{\bsnm{K{\"o}ppe}, \binits{M.}},
\bauthor{\bsnm{Vergne}, \binits{M.}}:
\batitle{Intermediate sums on polyhedra: Computation and real {E}hrhart
  theory}.
\bjtitle{Mathematika}
\bvolume{59}(\bissue{1}),
\bfpage{1}--\blpage{22}
(\byear{2013})
\doiurl{10.1112/S0025579312000101}
\end{barticle}
\endbibitem

%%% 41
\bibitem[\protect\citeauthoryear{Beck et~al.}{2021}]{RealEhrh3}
\begin{botherref}
\oauthor{\bsnm{Beck}, \binits{M.}},
\oauthor{\bsnm{Elia}, \binits{S.}},
\oauthor{\bsnm{Rehberg}, \binits{S.}}:
Rational ehrhart theory.
arXiv preprint arXiv:2110.10204
(2021)
\end{botherref}
\endbibitem

%%% 42
\bibitem[\protect\citeauthoryear{Baldoni et~al.}{2012}]{Simplex_Baldoni}
\begin{barticle}
\bauthor{\bsnm{Baldoni}, \binits{V.}},
\bauthor{\bsnm{Berline}, \binits{N.}},
\bauthor{\bsnm{De~Loera}, \binits{J.A.}},
\bauthor{\bsnm{K{\"o}ppe}, \binits{M.}},
\bauthor{\bsnm{Vergne}, \binits{M.}}:
\batitle{Computation of the highest coefficients of weighted ehrhart
  quasi-polynomials of rational polyhedra}.
\bjtitle{Foundations of Computational Mathematics}
\bvolume{12},
\bfpage{435}--\blpage{469}
(\byear{2012})
\end{barticle}
\endbibitem

%%% 43
\bibitem[\protect\citeauthoryear{Brion and Vergne}{1997a}]{brion_simple_poly}
\begin{barticle}
\bauthor{\bsnm{Brion}, \binits{M.}},
\bauthor{\bsnm{Vergne}, \binits{M.}}:
\batitle{Lattice points in simple polytopes}.
\bjtitle{Journal of the American Mathematical Society}
\bvolume{10}(\bissue{2}),
\bfpage{371}--\blpage{392}
(\byear{1997})
\end{barticle}
\endbibitem

%%% 44
\bibitem[\protect\citeauthoryear{Brion and Vergne}{1997b}]{brion_residue}
\begin{barticle}
\bauthor{\bsnm{Brion}, \binits{M.}},
\bauthor{\bsnm{Vergne}, \binits{M.}}:
\batitle{Residue formulae, vector partition functions and lattice points in
  rational polytopes}.
\bjtitle{Journal of the American Mathematical Society}
\bvolume{10}(\bissue{4}),
\bfpage{797}--\blpage{833}
(\byear{1997})
\end{barticle}
\endbibitem

%%% 45
\bibitem[\protect\citeauthoryear{Beck}{2000}]{beck_residue}
\begin{barticle}
\bauthor{\bsnm{Beck}, \binits{M.}}:
\batitle{Counting lattice points by means of the residue theorem}.
\bjtitle{The Ramanujan Journal}
\bvolume{4}(\bissue{3}),
\bfpage{299}--\blpage{310}
(\byear{2000})
\end{barticle}
\endbibitem

%%% 46
\bibitem[\protect\citeauthoryear{Beck}{2004}]{beck_partial_fractions}
\begin{barticle}
\bauthor{\bsnm{Beck}, \binits{M.}}:
\batitle{The partial-fractions method for counting solutions to integral linear
  systems}.
\bjtitle{Discrete \& Computational Geometry}
\bvolume{32}(\bissue{4}),
\bfpage{437}--\blpage{446}
(\byear{2004})
\end{barticle}
\endbibitem

%%% 47
\bibitem[\protect\citeauthoryear{Beck and Robins}{2002}]{beck_dedekind_sums}
\begin{barticle}
\bauthor{\bsnm{Beck}, \binits{M.}},
\bauthor{\bsnm{Robins}, \binits{S.}}:
\batitle{Explicit and efficient formulas for the lattice point count in
  rational polygons using dedekind-rademacher sums}.
\bjtitle{Discrete and Computational Geometry}
\bvolume{27}(\bissue{4}),
\bfpage{443}--\blpage{460}
(\byear{2002})
\end{barticle}
\endbibitem

%%% 48
\bibitem[\protect\citeauthoryear{Nesterov}{2004}]{knapsack_nesterov}
\begin{botherref}
\oauthor{\bsnm{Nesterov}, \binits{Y.}}:
Fast fourier transform and its applications to integer knapsack problems
(2004)
\end{botherref}
\endbibitem

%%% 49
\bibitem[\protect\citeauthoryear{Lasserre and
  Zeron}{2005}]{AlternativeCounting}
\begin{botherref}
\oauthor{\bsnm{Lasserre}, \binits{B.} \bsuffix{Jean}},
\oauthor{\bsnm{Zeron}, \binits{S.} \bsuffix{Eduardo}}:
{An Alternative Algorithm for Counting Lattice Points in a Convex Polytope}.
Mathematics of Operations Research
\textbf{30}(3)
(2005)
\doiurl{10.1287/moor.1050.0145}
\end{botherref}
\endbibitem

%%% 50
\bibitem[\protect\citeauthoryear{Papadimitriou}{1981}]{Papadimitriou}
\begin{barticle}
\bauthor{\bsnm{Papadimitriou}, \binits{C.H.}}:
\batitle{On the complexity of integer programming}.
\bjtitle{J. ACM}
\bvolume{28}(\bissue{4}),
\bfpage{765}--\blpage{768}
(\byear{1981})
\doiurl{10.1145/322276.322287}
\end{barticle}
\endbibitem

%%% 51
\bibitem[\protect\citeauthoryear{Jansen and Rohwedder}{2018}]{OnIPAndConv}
\begin{bchapter}
\bauthor{\bsnm{Jansen}, \binits{K.}},
\bauthor{\bsnm{Rohwedder}, \binits{L.}}:
\bctitle{On integer programming and convolution}.
In: \bbtitle{10th Innovations in Theoretical Computer Science Conference (ITCS
  2019)}
(\byear{2018}).
\bcomment{Schloss Dagstuhl-Leibniz-Zentrum fuer Informatik}
\end{bchapter}
\endbibitem

%%% 52
\bibitem[\protect\citeauthoryear{Eisenbrand and Weismantel}{2019}]{SteinitzILP}
\begin{botherref}
\oauthor{\bsnm{Eisenbrand}, \binits{F.}},
\oauthor{\bsnm{Weismantel}, \binits{R.}}:
{Proximity Results and Faster Algorithms for Integer Programming Using the
  {S}teinitz Lemma}.
ACM Transactions on Algorithms
\textbf{16}(1)
(2019)
\doiurl{10.1145/3340322}
\end{botherref}
\endbibitem

%%% 53
\bibitem[\protect\citeauthoryear{Avis and Fukuda}{1992}]{AvisFukuda}
\begin{barticle}
\bauthor{\bsnm{Avis}, \binits{D.}},
\bauthor{\bsnm{Fukuda}, \binits{K.}}:
\batitle{A pivoting algorithm for convex hulls and vertex enumeration of
  arrangements and polyhedra}.
\bjtitle{Discrete \& Computational Geometry}
\bvolume{8}(\bissue{3}),
\bfpage{295}--\blpage{313}
(\byear{1992})
\doiurl{10.1007/BF02293050}
\end{barticle}
\endbibitem

%%% 54
\bibitem[\protect\citeauthoryear{Kaibel and Pfetsch}{2002}]{FacesEnum_Kaibel}
\begin{barticle}
\bauthor{\bsnm{Kaibel}, \binits{V.}},
\bauthor{\bsnm{Pfetsch}, \binits{M.E.}}:
\batitle{Computing the face lattice of a polytope from its vertex-facet
  incidences}.
\bjtitle{Computational Geometry}
\bvolume{23}(\bissue{3}),
\bfpage{281}--\blpage{290}
(\byear{2002})
\end{barticle}
\endbibitem

%%% 55
\bibitem[\protect\citeauthoryear{Barvinok}{2002}]{Barv_book_convexity}
\begin{bbook}
\bauthor{\bsnm{Barvinok}, \binits{A.}}:
\bbtitle{A Course in Convexity}.
\bpublisher{American Mathematical Society}, \blocation{???}
(\byear{2002})
\end{bbook}
\endbibitem

%%% 56
\bibitem[\protect\citeauthoryear{McMullen}{1993}]{ValuationsAndDissections}
\begin{bchapter}
\bauthor{\bsnm{McMullen}, \binits{P.}}:
\bctitle{Chapter 3.6 - valuations and dissections}.
In: \beditor{\bsnm{GRUBER}, \binits{P.M.}},
\beditor{\bsnm{WILLS}, \binits{J.M.}} (eds.)
\bbtitle{Handbook of Convex Geometry},
pp. \bfpage{933}--\blpage{988}.
\bpublisher{North-Holland},
\blocation{Amsterdam}
(\byear{1993}).
\doiurl{10.1016/B978-0-444-89597-4.50010-X}
\end{bchapter}
\endbibitem

%%% 57
\bibitem[\protect\citeauthoryear{McMullen and
  Schneider}{1983}]{ValuationsOnConvecBodies}
\begin{bbook}
\bauthor{\bsnm{McMullen}, \binits{P.}},
\bauthor{\bsnm{Schneider}, \binits{R.}}:
In: \beditor{\bsnm{Gruber}, \binits{P.M.}},
\beditor{\bsnm{Wills}, \binits{J.M.}} (eds.)
\bbtitle{Valuations on convex bodies},
pp. \bfpage{170}--\blpage{247}.
\bpublisher{Birkh{\"a}user Basel},
\blocation{Basel}
(\byear{1983}).
\doiurl{10.1007/978-3-0348-5858-8\_9}
\end{bbook}
\endbibitem

%%% 58
\bibitem[\protect\citeauthoryear{McMullen}{1978}]{LatticeInvariantValuations}
\begin{barticle}
\bauthor{\bsnm{McMullen}, \binits{P.}}:
\batitle{Lattice invariant valuations on rational polytopes}.
\bjtitle{Archiv der Mathematik}
\bvolume{31}(\bissue{1}),
\bfpage{509}--\blpage{516}
(\byear{1978})
\doiurl{10.1007/BF01226481}
\end{barticle}
\endbibitem

%%% 59
\bibitem[\protect\citeauthoryear{Lawrence}{1991}]{Lawrence}
\begin{barticle}
\bauthor{\bsnm{Lawrence}, \binits{J.}}:
\batitle{Rational-function-valued valuations on polyhedra}.
\bjtitle{Discrete and computational geometry (New Brunswick, NJ, 1989/1990)}
\bvolume{6},
\bfpage{199}--\blpage{208}
(\byear{1991})
\end{barticle}
\endbibitem

%%% 60
\bibitem[\protect\citeauthoryear{Pukhlikov and Khovanskii}{1992}]{Pukhlikov}
\begin{barticle}
\bauthor{\bsnm{Pukhlikov}, \binits{A.V.}},
\bauthor{\bsnm{Khovanskii}, \binits{A.G.}}:
\batitle{The riemann--roch theorem for integrals and sums of quasipolynomials
  on virtual polytopes (russian)}.
\bjtitle{Algebra i analiz}
\bvolume{4}(\bissue{4}),
\bfpage{188}--\blpage{216}
(\byear{1992})
\end{barticle}
\endbibitem

%%% 61
\bibitem[\protect\citeauthoryear{Brion}{1988}]{Brion}
\begin{barticle}
\bauthor{\bsnm{Brion}, \binits{M.}}:
\batitle{Points entiers dans les poly\`edres convexes}.
\bjtitle{Annales scientifiques de l'\'Ecole Normale Sup\'erieure}
\bvolume{4e s{'e}rie 21}(\bissue{4}),
\bfpage{653}--\blpage{663}
(\byear{1988})
\doiurl{10.24033/asens.1572}
\end{barticle}
\endbibitem

%%% 62
\bibitem[\protect\citeauthoryear{McMullen}{1970}]{MaxFacesTh}
\begin{barticle}
\bauthor{\bsnm{McMullen}, \binits{P.}}:
\batitle{The maximum numbers of faces of a convex polytope}.
\bjtitle{Mathematika}
\bvolume{17}(\bissue{2}),
\bfpage{179}--\blpage{184}
(\byear{1970})
\doiurl{10.1112/S0025579300002850}
\end{barticle}
\endbibitem

%%% 63
\bibitem[\protect\citeauthoryear{Stanley}{1975}]{UpperBoundSpheres_Stanley}
\begin{barticle}
\bauthor{\bsnm{Stanley}, \binits{R.}}:
\batitle{The upper bound conjecture and cohen-macaulay rings}.
\bjtitle{Studies in Applied Mathematics}
\bvolume{54}(\bissue{2}),
\bfpage{135}--\blpage{142}
(\byear{1975})
\doiurl{10.1002/sapm1975542135}
\end{barticle}
\endbibitem

%%% 64
\bibitem[\protect\citeauthoryear{De~Loera et~al.}{2010}]{TriangBook_DeLoera}
\begin{bbook}
\bauthor{\bsnm{De~Loera}, \binits{J.}},
\bauthor{\bsnm{Rambau}, \binits{J.}},
\bauthor{\bsnm{Santos}, \binits{F.}}:
\bbtitle{Triangulations: Structures for Algorithms and Applications}.
\bpublisher{Springer},
\blocation{Berlin Heidelberg}
(\byear{2010})
\end{bbook}
\endbibitem

%%% 65
\bibitem[\protect\citeauthoryear{Stanley}{2007}]{CombinatoricsAndCommutative_Stanley}
\begin{bbook}
\bauthor{\bsnm{Stanley}, \binits{R.}}:
\bbtitle{Combinatorics and Commutative Algebra}
vol. \bseriesno{41}.
\bpublisher{Birkh{\"a}user},
\blocation{Boston}
(\byear{2007})
\end{bbook}
\endbibitem

%%% 66
\bibitem[\protect\citeauthoryear{Gr{\"u}nbaum}{2011}]{Grunbaum}
\begin{bbook}
\bauthor{\bsnm{Gr{\"u}nbaum}, \binits{B.}}:
\bbtitle{Convex Polytopes}.
\bsertitle{Graduate Texts in Mathematics}.
\bpublisher{Springer},
\blocation{New York}
(\byear{2011})
\end{bbook}
\endbibitem

%%% 67
\bibitem[\protect\citeauthoryear{Veselov}{2001}]{BoroshTreybigProof}
\begin{barticle}
\bauthor{\bsnm{Veselov}, \binits{S.I.}}:
\batitle{A proof of a generalization of the {B}orosh--{T}reybig conjecture on
  diophantine equations}.
\bjtitle{Diskretnyi Analiz i Issledovanie Operatsii}
\bvolume{8}(\bissue{1}),
\bfpage{17}--\blpage{22}
(\byear{2001}).
\bcomment{in Russian}
\end{barticle}
\endbibitem

%%% 68
\bibitem[\protect\citeauthoryear{Winder}{1966}]{HyperplaneSpacePartition}
\begin{barticle}
\bauthor{\bsnm{Winder}, \binits{O.} \bsuffix{R.}}:
\batitle{Partitions of n-space by hyperplanes}.
\bjtitle{SIAM Journal on Applied Mathematics}
\bvolume{14}(\bissue{4}),
\bfpage{811}--\blpage{818}
(\byear{1966})
\doiurl{10.1137/0114068}
\end{barticle}
\endbibitem

%%% 69
\bibitem[\protect\citeauthoryear{Shevchenko and
  Gruzdev}{2008}]{Triangulation_Gruzdev}
\begin{barticle}
\bauthor{\bsnm{Shevchenko}, \binits{N.} \bsuffix{V.}},
\bauthor{\bsnm{Gruzdev}, \binits{V.} \bsuffix{D.}}:
\batitle{{A modification of the Fourier-Motzkin algorithm for constructing a
  triangulation and star development}}.
\bjtitle{Journal of Applied and Industrial Mathematics}
\bvolume{2}(\bissue{1}),
\bfpage{113}--\blpage{124}
(\byear{2008})
\doiurl{10.1134/S1990478908010122}
\end{barticle}
\endbibitem

%%% 70
\bibitem[\protect\citeauthoryear{Storjohann}{1996}]{SNFOptAlg}
\begin{bchapter}
\bauthor{\bsnm{Storjohann}, \binits{A.}}:
\bctitle{Near optimal algorithms for computing {S}mith normal forms of integer
  matrices}.
In: \bbtitle{Proceedings of the 1996 International Symposium on Symbolic and
  Algebraic Computation}.
\bsertitle{ISSAC '96},
pp. \bfpage{267}--\blpage{274}.
\bpublisher{Association for Computing Machinery},
\blocation{New York, NY, USA}
(\byear{1996}).
\doiurl{10.1145/236869.237084}
\end{bchapter}
\endbibitem

\end{thebibliography}

\end{document}